\tikzset{naming/.style={align=center,font=\footnotesize}}
\tikzset{area/.style = {draw, shape = regular polygon, regular polygon sides = 6, thick, minimum width = 5cm}}
\renewcommand{\qed}{\hfill $\blacksquare$}
\newcommand{\R}{\mathbb R}
\newcommand{\N}{\mathbb N}
\renewcommand{\bar}{\overline}
\newcommand{\what}{\widehat}
\newcommand{\eq}{\mathcal C}
 \theoremstyle{plain} \newtheorem{theorem}{Theorem}[section]
 \newtheorem{remark}{Remark}[section] \newtheorem{definition} {Definition} [section]
 \newtheorem{lemma} {Lemma} [section] \newtheorem{corollary} {Corollary} [section]
  \theoremstyle{definition}
 \newtheorem{proposition}{Proposition}[section] \newtheorem{example}{Example}[section]
\newtheorem{observation}{Observation}
\newcommand{\floor}[1]{\left\lfloor #1 \right\rfloor}
\DeclareMathOperator{\id}{Id}
\DeclareMathOperator{\argmin}{argmin}
\DeclareMathOperator{\relint}{relint}
\DeclareMathOperator{\Proj}{Proj}
\DeclareMathOperator{\cA}{\mathscr{A}}
\DeclareMathOperator{\cB}{\mathscr{B}}
\DeclareMathOperator{\cC}{\mathscr{C}}
\DeclareMathOperator{\cF}{\mathscr{F}}
\DeclareMathOperator{\cL}{\mathscr{L}}
\DeclareMathOperator{\cP}{\mathscr{P}}
\DeclareMathOperator{\cR}{\mathscr{R}}
\DeclareMathOperator{\cS}{\mathscr{S}}
\DeclareMathOperator{\cU}{\mathscr{U}}
\DeclareMathOperator{\cX}{\mathscr{X}}
\DeclareMathOperator{\cZ}{\mathscr{Z}}
\DeclareMathOperator{\NN}{\mathbb{N}}
\DeclareMathOperator{\RR}{\mathbb{R}}
\renewcommand{\leq}{\leqslant}
\renewcommand{\geq}{\geqslant}
\newcommand{\norm}[1]{\left\| #1 \right \|}% norm
\newcommand{\tnorm}[1]{\| #1  \|}
\newcommand{\absv}[1]{\left| #1  \right|}% absolute value
\newcommand{\dprod}[1]{\left\langle #1 \right\rangle}% dot-product
\DeclareMathOperator{\dynReg}{\text{dReg}}% dynamic regret
\DeclareMathOperator{\track}{\tau}
\DeclareMathOperator{\muMono}{\mu}
\let\top\intercal % a nice and modest transpose symbol
\title{Tracking Solutions of Time-Varying Variational Inequalities}
\author[1]{Hedi Hadiji$\phantom{}^*$}
\author[2]{Sarah Sachs$\phantom{}^*$}
\author[3]{Cristóbal Guzmán}
\affil[1]{\centering Laboratoire des signaux et systèmes,
Univ. Paris-Saclay, CNRS, CentraleSupélec } 
\affil[2]{\centering School of Mathematics,
University of  Bristol }
\affil[3]{\centering Pontificia Universidad Católica de Chile,
Institute for Mathematical and Computational Engineering,
Facultad de Matemáticas and Escuela de Ingeniería 
\vspace{1em}
}
\affil[*]{equal contribution}
\date{\vspace{-1em}\today}
\begin{document}
 \maketitle

 \begin{abstract}{%
Tracking the solution of time-varying variational inequalities is an important problem
with applications in game theory, optimization, and machine learning. Existing work
considers time-varying games or time-varying optimization problems. For strongly convex
optimization problems or strongly monotone games, such results provide tracking
guarantees under the assumption that the variation of the time-varying problem is
restrained, that is, problems with a sublinear solution path. We extend
existing results in two ways: In our first result, we provide tracking bounds for (1)
variational inequalities with a sublinear solution path but not necessarily monotone
functions, and (2) for periodic time-varying variational inequalities that do not
necessarily have a sublinear solution path length. Our second main contribution is an
extensive study of the convergence behavior and trajectory of discrete dynamical systems
of periodic time-varying VI. We show that these systems can either exhibit
provably chaotic behavior or can converge to the solution.
 }\end{abstract}%

\section{Introduction and setting.}\label{sec:introduction}
Solutions of variational inequalities generalize minimizers of functions, saddle-points,
fixed points, and many notions of equilibria across various disciplines. Practitioners
often encounter problems in which the parameters of the variational inequality have a
component that depends on time. We study the problem of tracking the sequence of
solutions in time-varying variational inequalities in an online setting: that is, sequentially predicting the solutions of a sequence of time-varying variational inequality problems.

 \paragraph{Formal Setting}
Let $\cZ$ be a closed convex subset of $\R^d$ for $d\geq1$. We let $\cC(\cZ, \R^d)$ denote the set of all
continuous functions $\cZ \to \R^d $. The finite-dimensional \emph{Variational
Inequality Problem} (VI problem) for the operator $F$ on the
domain $\cZ$, referred to as $\mathrm{VIP}(F,\cZ)$ from now on, is defined as:
Given $F \in \cC(\cZ, \R^d)$,
\begin{align} 
  \label{VIP}
  \text{find \, $Z^\star \in \cZ$ \, such that} \qquad
\dprod{F(Z^\star),Z-Z^\star} \geq 0 \quad \text{for all} \quad Z \in \cZ. \tag{VIP}
\end{align}
 {Let $\norm{\, \cdot\,}$ denote any norm in $\RR^d$.} For $\mu > 0$, the operator $F$ is said to be $\muMono$-\emph{strongly monotone} if
 \begin{align}
 \forall Z,Z' \in \cZ: \quad \dprod{F(Z) - F(Z'), Z-Z'} \geq \muMono\norm{Z-Z'}^2.
 \label{sMono}\tag{$\mu$-Mono}
 \end{align}
The operator $F$ is \emph{monotone} if the above holds for $\muMono= 0$.
Given $L\geq 0$, we say $F$ is $L$-\emph{Lipschitz} if
 \begin{align}
 \forall Z,Z'\in \cZ:\quad \|F(Z)-F(Z')\|\leq L\|Z-Z'\|
 \label{Lip} \tag{$L$-Lip}
 \end{align}
 Many common problems in mathematics can be formulated as VI problems.  Prime
 examples are convex optimization and Nash equilibrium computation in concave games.
 See e.g., \citet[Chapter~12]{Rockafellar:1998aa}, \citet[Chapter
 7]{Kinderlehrer:2000aa},  { or \citet{facchinei2003finite-d}} for more details and examples. Beyond these well-known cases,
 generalized linear models (GLMs) \cite{Juditsky:2019} also fit within the framework of
 monotone VI problems, though they are neither convex optimization problems nor concave
 games but statistical parameter estimation problems (see Example \ref{Example:GLM} for a time-varying illustration). In Section
 \ref{sec:examples}, we discuss several examples in detail and the guarantees our results
 can provide for these application examples.
 
We study the problem of online tracking of solutions for time-varying VI. More
precisely, we consider a sequence of operators $(F_t)_{t \in \NN}$ revealed in an online
fashion to a learner. At time step~$t$, given the past observations, the learner outputs
a candidate solution $Z_t$, and then observes the operator $F_t$. Formally, an online
algorithm is a sequence of update rules $\cA_t:\cZ \times \cF^{t-1} \rightarrow \cZ$
which act online: at time step $t$ the output of the algorithm may depend on
$F_{1}, \dots F_{t-1},Z_{t-1}$, but not on $(F_s)_{s\geq t}$. An instance of a
\emph{time-varying VI problem} is characterized by the sequence $(F_t)$ together with
the domain $\cZ$; we denote
it by $\mathrm{VIP}((F_t)_{t \in [T]}, \cZ)$.

We focus especially on sequences of operators that all admit a unique solution, denoted by
$(Z_t^\star)$.
A sufficient, but not necessary, condition for uniqueness is that the operator be
strongly monotone. For  $\mathrm{VIP}((F_t)_{t \in [T]}, \cZ)$, we examine the \emph{tracking guarantees} of
algorithms. Denoting the $t$-th output of algorithm $\cA$ by $Z_{t} =
\cA_t(Z_{t-1},(F_{s})_{s \in [t-1]})$, the \emph{tracking error} of $\cA$ is 
\begin{align*}
  \track_T(\cA) = \sum_{t=1}^T \norm{Z_t - Z^\star_t}^2.
\end{align*}
We exhibit conditions under which this tracking error is sublinear, i.e., there
exists an $\alpha \in[0,1)$ and constant $c\geq0$, such that $\track_T(\cA) \leq c
T^{\alpha}$. Note that guaranteeing a sublinear tracking error for time-varying VI
problems may be impossible for some sequences, especially when solutions vary too much,
 see Lemma~\ref{LB:tracking} in Appendix~\ref{sec:appendixIntro}. 
 This motivates the introduction of the quadratic path length, which measures the hardness
of time-varying VI problems
\[
  P^\star_T = \sum_{t=2}^T\norm{Z_t^\star - Z_{t-1}^\star}^2 \,.
\]
Since general results for arbitrary sequences are not achievable, we restrict our
attention to two important special cases defined below: tame and periodic time-varying VI.
\smallskip
\begin{definition}[Tame Time-Varying VI problem]
  \label{def:tameVIP}
We call a time-varying VI problem \emph{tame} if the quadratic path length of the solutions
$P^\star_T$ is sublinear. That is, if there exists time-independent constants $\alpha \in
[0,1)$ and
$c>0$ such that $P_T^\star \leq c T^\alpha$ for all $T \geq 2$. 
\end{definition}
\smallskip

For tame time-varying VI problems, we show tracking guarantees for the class of algorithms 
satisfying a one-step contraction property, in Section~\ref{sec:contraction}. In
Sections~\ref{sec:periodic}
and~\ref{sec:chaos}, we focus on periodic time-varying problems, which we define
naturally as follows. Note that periodic problems cannot be tame unless the equilibrium
is constant.
 
\smallskip
\begin{definition}[Time-varying VI problem with periodic solutions] \label{def:periodicVIP}
A  time-varying VI problem $\mathrm{VIP}((F_t),\cZ)$ has \emph{periodic solutions} if there exists $k \in \NN$,
such that $Z_{t+k}^\star = Z_t^\star$ for all $t \in \NN$.
\end{definition}
\smallskip

Periodic time-varying VI problems, that is, time-varying VI problems with $F_{t+k} = F_t$ for all $t
\in \NN$, are time-varying VI problems with periodic solutions; the converse does not necessarily hold.

\paragraph{Summary of contributions.}
 {
We provide three sets of contributions to the study of tracking in time-varying VI, also
summarized in Table~\ref{table:results}:
\begin{enumerate}
  \item Section~\ref{sec:contraction} focuses on \emph{tame} time-varying VI problems. We
    provide path-length bounds for algorithms satisfying a one-step contraction property.
    This condition is satisfied in a variety of settings, with bounded or unbounded domains,
    strongly monotone and non-strongly-monotone cases; see
    Section~\ref{sec:examplesContraction} for examples. 
  \item We then shift our attention to \emph{periodic} and \emph{strongly-monotone}
    sequences of operators, with two main results. We first give an algorithm with a
    logarithmic tracking bound for \emph{bounded domains} and \emph{bounded operators}, in
    Theorem~\ref{thm:aggregation}.
    In the case of \emph{Lipschitz operators on unbounded domains}, we design an algorithm
    with a constant tracking bound in Theorem~\ref{thm:constRegretAdaHedge}.
    Both algorithms rely on a meta-algorithm aggregating over a collection of algorithms.
    Hence, they adapt to the a priori unknown period of the sequences and only
    require an upper bound on that period. 
  \item  {In Section~\ref{sec:chaos}, we examine gradient descent with a constant step size on a periodic optimization problem. We show that the composition of the periodic operators yields an autonomous (time-invariant) dynamical system. In this autonomous setting, we observe a range of behaviors, including convergence, periodicity, and Li-Yorke chaos, depending on the step size.}
  %increasing step sizes enforcing convergence in divergent systems.
  % Finally, in Section~\ref{sec:chaos}, we analyze in detail an example of an
%    algorithm ignoring the time variations: gradient descent with a constant step-size on
%    a periodic optimization problem. Despite the simplicity of the setting, we exhibit a 
%    range of step-size values that generate a \emph{chaotic} sequence of iterates as well as the surprising observation that increasing stepsize may force a divergent system into convergence.
\end{enumerate}
{We stress that the emergence of chaos, as discussed in Section~\ref{sec:chaos}, is a property of the autonomous dynamical system induced by the periodic composition of operators, and not of general non-autonomous (time-varying) problems.}
We illustrate the practical consequences of our tracking guarantees with examples in Section \ref{sec:examples}.}

 {
\begin{remark}[Ignoring Time-Variations and Practical Relevance]
  In the context of time-varying problems with uncertainty about their structure, a first
  (arguably reasonable) attempt would be to
  ignore the problem's time-varying nature. If fluctuations are
  small, their impact on solutions may be negligible. 
   
  With this in mind, we examine a basic example of gradient descent with a fixed
  step-size applied on a periodic sequence of strongly convex functions with a shared
  minimizer (see Appendix~\ref{app:example_simple_periodic}),
  and show that tuning the learning rate for convergence is hard despite the simplicity
  of the example.
  {Section~\ref{sec:chaos} furthers this investigation, showing chaotic phenomena even in the case where all periodic functions share the same minimizer.     These negative results in such simple settings justify and motivate the
   more involved approach carried out in Section \ref{sec:periodic}.   }
 \end{remark}
}

\begin{center}
\begin{table}[t]
\begin{tabular}{|p{1.8cm}|@{}c@{}|}\hline
& 
\begin{tabular}{p{4.8cm}|p{8.9cm}|}
  Algorithm $\cA$ & Results \\
\end{tabular}
\tabularnewline\hline
 \vspace{-1em}
Tame
\ref{VIP}
&
\begin{tabular}{p{4.8cm}|p{8.9cm}|}
  \smallskip
  Any contractive algorithm, e.g.:
  \begin{itemize}
    \item[-] Gradient descent
    \item[-] Resolvent iteration
  \end{itemize}
   & 
  \vspace{-0.5em}
   \textbf{Theorem}~\ref{thm:contraction2track}: Sublinear tracking error  
 \vspace*{-0.5em}
   \[
   \track_T(\cA) \leq  cT^\alpha\,.
 \vspace*{-0.5em}
   \]
  \textbf{Assumptions:} tame (Def.~\ref{def:tameVIP}), contractive (Def.~\ref{def:contraction}).
   {\textbf{Remark:} strong monotonicity not necessarily required (cf. Example \ref{Example:nonmonoGame}).}
  % \\
\end{tabular}
\tabularnewline \hline
$k$-Periodic \ref{VIP}
&
\begin{tabular}{p{4.8cm}|p{8.9cm}|}
\smallskip
Meta-algorithm with: 
\begin{itemize}[leftmargin=0.1cm]
\item[]{Aggregation:} EW, constant learning rate
\item[]{Base Algorithm:} OGD
\end{itemize}  
&  
\vspace{-0.5em}
\textbf{Theorem}~\ref{thm:aggregation}: Logarithmic tracking error
 \vspace*{-0.5em}
 \[
 \track_T(\cA) \leq c \frac{k(G + D)^2}{\mu^2}\log T + c'.
 \vspace*{-0.5em}
\]
 \textbf{Assumptions:} $k$-periodic (Def.~\ref{def:periodicVIP}), bounded domain, strongly monotone (\ref{sMono}) and bounded operators
 \newline
  {\textbf{Remark: }Computational cost per iteration is constant.} \\
 \hline
\smallskip
Meta-algorithm with: 
\begin{itemize}[leftmargin=0.1cm]
\item[]{Aggregation:} EW, adaptive learning rate
\item[]{Base Algorithm:} GD
\end{itemize}  
&
\vspace{-0.5em}
 \textbf{Theorem}~\ref{thm:constRegretAdaHedge}: Constant tracking error
 \vspace*{-0.5em}
\[ 
  \track_T(\cA) \leq c \, k D_0^2\frac{L^4}{\mu^4}    \log K + c'.
 \vspace*{-0.5em}
\]
\textbf{Assumptions:} 
$k$-periodic (Def.~\ref{def:periodicVIP}),  $L$-Lipschitz,  
% (\ref{Lip}), 
$\mu$-strongly monotone operators.
% (\ref{sMono})
\newline 
 {\textbf{Remark:} Domain can be unbounded. Computational cost per iteration is in the order of $K$. }
\\
 \hline
\bigskip
\smallskip
GD, constant step-size &   
 \begin{itemize}[leftmargin=2cm]
\item[\text{Phase I\phantom{II}:}] Convergence. 
\item[\text{Phase II\phantom{I}:}] Composite phase with convergence, periodicity, chaos, and divergence. (Obs.~\ref{obs:gd_behaviours})
\item[\text{Phase III:}] Divergence. 
\end{itemize} 
\\
\end{tabular}
\tabularnewline\hline
 \end{tabular}
 \caption{Overview of Results. Everywhere, $c, c'$ denote constants independent of the
   number of iterations  $T$. The constant $D_0 = \max_{i \in[k]}\tnorm{Z_1 - Z_i^\star}$
   denotes the maximal distance of the initial iterate to the solutions $Z_1^\star, \dots,
   Z_k^\star$, and $K$ denotes a known upper bound on $k$.  $D_0$ and $K$ are instance
   dependent constants independent of $T$. 
The abbreviations of the algorithms are: EW: Exponential Weights, OGD: Online Gradient
descent, GD: Gradient Descent. For details on the algorithms, see Section
\ref{sec:periodic}. } \label{table:results}
  \end{table}
\end{center}

\subsection{Dynamic regret minimization. } Note that showing tracking bounds for an algorithm is related to
\emph{dynamic regret minimization} \citep{zinkevich2003online} in the special case of time-varying optimization problems. Indeed,
consider a sequence of functions $f_1, \dots f_T$ and a sequence of comparators $ u_1,
\dots, u_T \in \cZ$. These comparators can be the minimizers $u_t^\star$ of
the functions $f_t$. The dynamic regret of an online algorithm $\cA$ is
\begin{align*}
  \dynReg^{\cA}_T\left((u_t)_{t \in [T]}\right) := \sum_{t=1}^T \Big( f_t(x_t) - f_t(u_t)\Big). 
  \end{align*}
  Note that the comparators $(u_t)$ are not necessarily the minimizers of
the functions $(f_t)$. Guarantees for dynamic regret where the comparators
are restricted to the minimizers $u_t^\star$ of strongly convex and smooth functions
$(f_t)$ have been studied, e.g. in \cite{Zhao:2020aa}.
  Relating these results to tracking, we note that the quantity $( 2 / \mu) \dynReg_T^{\cA}\left((u^\star_t)_{t \in [T]}\right)$ is an
upper bound on the tracking error $\track_T(\cA)$ over the set of $\mu$-strongly convex
functions $\hat \cF_{\mu}$.
 Hence,
if an algorithm has sublinear dynamic regret, it is a tracking algorithm. Dynamic regret
bounds typically depend on the \emph{path length} $P^{1}_T = \sum_{t=1}^{T-1}
\tnorm{u_t - u_{t+1}}$, that is, the norm differences of the comparator terms.  
Indeed,
existing lower bounds on the dynamic regret show that $\sqrt{T(P^{1}_T + 1)}$ is
unavoidable for convex functions
\cite{zhang2018adaptive}. Hence, showing sublinear
bounds require additional assumptions, either on the path length $P^1_T$ or on the
distances between the functions $f_t$.

\subsection{Related work.}
 Tracking equilibria in time-varying games has received growing interest in the game theory community in recent years,  
see~\citet{rivera-cardos2019competin}, \citet{fiez2021online}, \citet{Mertikopoulos:2021aa}, \citet{Duvocelle_2023}, or  \citet{Feng:2023aa}. 
 \citet[Theorem 2 and Corollary 3]{Duvocelle_2023} provided guarantees for tracking
equilibria in time-varying games under a very general feedback model that allows for
noisy and biased observations. In the tame case, they provide tracking bounds for a
proximal point method under strong monotonicity. \citet{Yan:2023aa}
strengthen these results in the exact feedback model under an additional smoothness
assumption, providing tighter and more adaptive bounds. \citet{Feng:2023aa} point out a
surprising example in time-varying bilinear games, in which the extra-gradient method
outperforms optimistic methods of \citet{rakhlin2013online}; these two algorithms were
often thought to enjoy almost identical behavior. 

A related line of work studies no-regret dynamics in games, focusing on algorithms
designed to minimize regret at the individual player level. 
In repeated static strongly-convex-strongly-concave zero-sum games, the sum of the regret
of the players bounds the distance of the joint average plays to the equilibrium of the
game (see, e.g., Example~\ref{ex:gap_saddle_point}). \citet{zhang2022no-regret} investigates the impact of time variations of the
game, and derives tracking guarantees in the tame case when players use dynamic regret
algorithms.

As seen in the previous section, tracking solutions in the special case of optimization are related to dynamic regret
minimization. The literature distinguishes dynamic regret against arbitrary comparators
(\citet{zinkevich2003online, baby2022optimal}) and against the minimizers of the
sequences of functions observed (\citet{besbes2015non-stat}); the former is a more
demanding objective, while the latter suffices for tracking when functions are strongly convex. In this case, the dynamic regret against minimizers bounds the tracking error up to a multiplicative factor. For
strongly convex and smooth functions, \citet{Mokhtari2016Online-o}, using the fact that
gradient descent is contractive (see Definition~\ref{def:contraction}), provides tracking
bounds of order $P_T^\star$. Our analysis in Theorem~\ref{thm:contraction2track} is a
generalization of their results to VI problems and to other algorithms. 
{For an in-depth comparison, see Section~\ref{subsec:consequences}.}

 Online periodic optimizations and periodic games are less explored topics. 
\citet{fiez2021online} show that the continuous-time dynamics of gradient-descent-ascent
are Poincaré recurrent in periodic bilinear unconstrained zero-sum games, showing a form 
of stability despite the time variations---note that the equilibrium is constant and
equal to $0$ in their analysis. In that same setting, but with discrete time,
\citet{Feng:2023aa} show the convergence of extra-gradient methods to the constant
equilibrium. Our results are complementary: we consider the strongly monotone setting,
and provide tracking bounds. 

The dynamics of optimization methods with large step sizes have received attention
recently, with some works exhibiting chaotic behavior. Chaos may also appear in
natural game dynamics, e.g.\,in \citet{Piliouras_2023,Falniowski:2024aa}. 
Closest to our results of Section~\ref{sec:chaos} is that of \citet{Chen2023FromST}, which
characterize different phases in the behavior of gradient descent in quadratic
regression, as the learning rate increases. To the best of our knowledge, we are the
first to exhibit such a phenomenon in the periodic time-varying case. We also note that
periodic time-varying VI problems are related to incremental gradient descent
\cite{Nedic:2001aa}. Our observations might also be of interest in this context.

 {
  Furthermore, two related lines of research in the area of game theory are worth noting.
  Mechanism design \citep{baudin2023strategicbehaviornoregretlearning, deng2021non} and
  often specifically auction design \citep{nedelec2022learning, rahme2021auction}
  frameworks, in which a central entity modifies the parameters of a game, induce
  time-varying games for the players.
  Stochastic games model repeated settings in which the joint actions of players 
  influence future payoffs, causing time variations in the individual objectives.    Our results may formally apply to some specific settings in these models, but they ignore the crucial strategic  aspects involved in planning for the long-term consequences of current actions. 
  }
   In Section~\ref{sec:chaosGD}, we also mention a relation to the theory of Iterated
Function Systems (IFS), see e.g., \citet[Chapter~9]{Falconer:1990aa}. Relations between
stochastic gradient descent methods and IFS were recently used in
\citet{Hodgkinson:2021aa} to establish convergence or generalization guarantees for
popular machine learning methods. We build on similar relations to establish a link between existing results for IFS and our experimental observations. 

\section{Tracking by contractive algorithms.} \label{sec:contraction}
\phantom{x}
 We derive tracking bounds for time-varying variational inequalities using a contraction property of algorithms. While this property holds for many common first-order algorithms for strongly monotone VIs or VIs with bounded sets $\cZ$, we also show that strong monotonicity or boundedness is not necessarily required for contractiveness. In Section~\ref{subsec:contraction}, we define \emph{contractive} algorithms and show that they automatically enjoy tracking bounds
(Theorem~\ref{thm:contraction2track}). Examples of such algorithms are presented in Section~\ref{sec:examplesContraction}, followed by a discussion of their implications in Section~\ref{subsec:consequences}.

\subsection{Contractive implies tracking.}\label{subsec:contraction}
In strongly convex and smooth optimization, the gradient descent updates ensure that the
distance to the minimizer decreases by a constant multiplicative factor at every
time step. \citet{Mokhtari2016Online-o} use this property to derive dynamic regret bounds
for gradient descent in online optimization. We generalize this observation to
time-varying VI problem, and contractive algorithm, which are defined as follows: 

\smallskip
\begin{definition}[Contractive Algorithms] \label{def:contraction}
Let $C \in (0, 1)$ and $\cA$ be an algorithm composed of update rules $(\cA_t)$. The
update rule $\cA_t$ is said to be $C$-contractive over a set of operators $\cF$ if
for all $F \in \cF$ with solution $Z^\star \in \cZ$, for any $Z \in \cZ$,
\[
    \|Z_+ - Z^\star\| \leq C  \|Z - Z^\star\| \, , 
\]
where we denoted by $Z_+$ the image of $Z$ under the update rule $\cA_t$ after observing
$F$. 
We say that the algorithm $\cA$ is $C$-contractive over $ \cF$ if $\cA_t$ is
$C$-contractive for all $t\geq1$. 
\end{definition}

\begin{theorem}   \label{thm:contraction2track}
Suppose $\cA$ is $C$-contractive over $\cF$. Then for any sequence of operators in
$\cF$ with solutions $(Z_t^\star)_{t \in [T]}$, the tracking error is bounded by
  \begin{align*}
      \track_T(\cA) \leq \frac{1}{(1 -  C)^2} \sum_{t=2}^T \|Z_t^\star - Z_{t-1}^\star\|^2 
    + \frac{1}{1 -  C}\|Z_1 - Z_1^\star\|^2 \, .
  \end{align*}
\end{theorem}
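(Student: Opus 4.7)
The plan is to combine the contraction property with a triangle inequality to get a recursive bound on $\|Z_t - Z_t^\star\|$, and then unfold the recursion using a suitable version of Young's inequality to handle the square.

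First I would relate $Z_t$ to the previous solution. Since $\cA$ is $C$-contractive, after observing $F_{t-1}$ the update rule $\cA_{t-1}$ maps $Z_{t-1}$ to $Z_t$ and we have $\|Z_t - Z_{t-1}^\star\| \leq C\|Z_{t-1} - Z_{t-1}^\star\|$. Then the triangle inequality gives
\begin{align*}
\|Z_t - Z_t^\star\| \leq \|Z_t - Z_{t-1}^\star\| + \|Z_{t-1}^\star - Z_t^\star\| \leq C\|Z_{t-1} - Z_{t-1}^\star\| + \|Z_{t-1}^\star - Z_t^\star\|.
\end{align*}
Writing $a_t = \|Z_t - Z_t^\star\|$ and $b_t = \|Z_t^\star - Z_{t-1}^\star\|$, this becomes the one-step recursion $a_t \leq C a_{t-1} + b_t$ for all $t \geq 2$.

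The main technical step is to pass from this linear recursion to one on the squares that still contracts geometrically. For this I would use Young's inequality $(x+y)^2 \leq (1+\lambda) x^2 + (1+1/\lambda) y^2$ with the specific choice $\lambda = (1-C)/C$, so that $(1+\lambda)C^2 = C$ and $1+1/\lambda = 1/(1-C)$. This yields the clean squared recursion
\begin{align*}
a_t^2 \leq C\, a_{t-1}^2 + \frac{1}{1-C} b_t^2.
\end{align*}
Unrolling produces $a_t^2 \leq C^{t-1} a_1^2 + \frac{1}{1-C}\sum_{s=2}^t C^{t-s} b_s^2$, and summing over $t \in [T]$ while swapping the order of summation gives
\begin{align*}
\sum_{t=1}^T a_t^2 \leq \frac{1}{1-C} a_1^2 + \frac{1}{(1-C)^2}\sum_{s=2}^T b_s^2,
\end{align*}
which is exactly the statement.

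The only non-routine part is selecting the correct $\lambda$ in Young's inequality so that the multiplicative factor on $a_{t-1}^2$ is $C$ rather than some larger quantity; any weaker choice would leave an extra $1/(1-C)$ factor or prevent geometric summation of the $b_s^2$ contributions. Once that balance is identified, the rest is the geometric sum $\sum_{t=s}^T C^{t-s} \leq 1/(1-C)$ and bookkeeping.
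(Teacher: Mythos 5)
Your proposal is correct and follows essentially the same route as the paper: both derive the per-step squared recursion $a_t^2 \leq C\,a_{t-1}^2 + \tfrac{1}{1-C}b_t^2$ by tuning the same Young-type weight so that the coefficient on $a_{t-1}^2$ is exactly $C$, and then sum geometrically. The only cosmetic difference is that the paper applies the weighted inequality directly to $\|Z_{t+1}-Z_{t+1}^\star\|^2$ and closes the sum by subtracting $C\sum a_t^2$ from both sides, whereas you first pass through the linear recursion $a_t \leq Ca_{t-1}+b_t$, unroll, and swap the order of summation; both bookkeeping schemes give the identical bound.
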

\begin{proof}{Proof.}
Using the inequality $\|u + v\|^2 \leq (1 + \alpha)\|u\|^2 + ( 1+ \alpha^{-1}) \|v\|^2$
for any $u, v \in \R^d$ and $\alpha > 0$,
\begin{align*}
\norm{Z_{t+1} - Z^\star_{t+1}}^2
  &\leq \frac{1}{C} \norm{Z_{t+1} - Z^\star_{t}}^2
  + \Big(1 + \frac{1}{1 / C - 1} \Big)\norm{Z_{t+1}^\star - Z_t^\star}^2 \\
  &\leq  C \norm{Z_t - Z_t^\star}^2 + \frac{1}{1 - C} \norm{Z^\star_{t+1} - Z^\star_t}^2 \,,
\end{align*}
 where the second inequality follows by the contraction property. Summing over
$t \in [T-1]$ yields
  \[
    \sum_{t=1}^{T-1}  \|Z_{t+1} - Z^\star_{t+1}\|^2
    \leq C \sum_{t=1}^{T-1} \|Z_t - Z^\star_{t}\|^2
    + \frac{1}{1 - C} \sum_{t=1}^{T-1} \|Z_{t+1}^\star - Z_t^\star\|^2\, .
  \]
  Then, adding $\|Z_1 - Z_1^\star\|^2$ to both sides, and upper bounding the
  first sum on the right-hand side,
  \[
    \sum_{t=1}^{T}  \|Z_{t} - Z^\star_{t}\|^2
    \leq C \sum_{t=1}^{T} \|Z_t - Z^\star_{t}\|^2
    + \frac{1}{1 - C} \sum_{t=1}^{T-1} 
    \|Z_{t+1}^\star - Z_t^\star\|^2 + \|Z_1 - Z_1^\star\|^2 \, .
  \]
  Therefore, after reorganizing the sum,
    \[
    \sum_{t=1}^T \|Z_t - Z_{t}^\star\|^2
    \leq \frac{1}{(1 - C)^2} \sum_{t=1}^{T-1}
    \|Z_{t+1}^\star - Z_{t}^\star\|^2 + \frac{1}{1 - C}\|Z_1 - Z_1^\star\|^2 \, , 
  \]
  which completes the proof. 
\qed
\end{proof}
 Note that Theorem \ref{thm:contraction2track} does not require $\cZ$ to be bounded. However, some algorithms require a bounded convex set $\cZ$ in order to satisfy the contraction property (Definition \ref{def:contraction}), e.g., Example \ref{example:EGD}. 
 
\smallskip

\subsection{Examples of contractive algorithms.}\label{sec:examplesContraction}
The contraction property (Definition~\ref{def:contraction}) is satisfied in a wide range
of settings, often in the presence of curvature of the objective (in the form of strong
monotonicity of the operator), but not always. Specifically, Example
\ref{Example:nonmonoGame} shows that strong monotonicity is not a necessary assumption
for contractivity. Furthermore, note that some examples do not require a bounded
$\cZ$.
\smallskip
\begin{example}\label{example:GD}
Let $F:\cZ \to \R^d$ be a $\mu$-strongly-monotone and $L$-Lipschitz operator, where
$\cZ\subseteq \R^d$ is a closed convex set.
Let $T_\eta : z \mapsto \Proj_{\cZ}(z - \eta F(z) )$ denote the projected forward
operator 
for some step-size $\eta > 0$.
Then $Z^\star$ is a solution to $\mathrm{VIP}(F, \cZ)$ if and only if it is a fixed point
of~$T_\eta$ (see\,\citet[Proposition~23.38]{Bauschke_2011}). Moreover, for $\eta = \mu/L^2$, the operator $T_\eta$ is a contraction:  
for any $Z_1, Z_2 \in \cZ$, using the nonexpansiveness of Euclidean projections,
strong-monotonicity and the Lipschitz property of $F$,    
 \begin{align*}
\norm{T_\eta(Z_1) - T_\eta(Z_2)}^2 
&\leq \norm{Z_1 - \eta F(Z_1) - (Z_2 - \eta F(Z_2)) }^2 \\
&= \norm{Z_1 - Z_2}^2 - 2\eta \dprod{ Z_1- Z_2 ,F(Z_1) - F(Z_2)}
+ \eta^2 \norm{F(Z_1) - F(Z_2)}^2 \\
&\leq ( 1+\eta^2L^2 - 2\eta \mu ) \norm{Z_1 - Z_2}^2 = (1-\kappa^{-2}) \norm{Z_1 - Z_2}^2 \,,
\end{align*}
where $\kappa = \mu/L$ denotes the condition number. 
This implies that the projected forward algorithm with step-size $\eta = \mu / L^2$ is a
contractive algorithm. In particular, this includes gradient descent for optimization
and gradient-descent-ascent for saddle-point optimization.
\end{example}

\smallskip

\begin{example}\label{example:Hilbert}
Suppose $F: \R^d \rightarrow \R^d$ is a maximally $\mu$-strongly monotone operator.
Let $R$ denote the resolvent operator, that is, $R = (\id + F)^{-1}$. (Note that $R$ is related to the backward iteration, and that operator $R$
coincides with the proximal mapping when $F$ is a gradient operator.)
Then a point $Z^\star$ is a solution to the  $\mathrm{VIP}(F, \R^d)$ if and only if 
$F(Z^\star) = 0$, or equivalently if and only if $R(Z^\star) = Z^\star$. 
By Theorem 2.1. (xi) in~\citet{Bauschke:2011aa}  the resolvent operator is $1/(1+\mu)$-contractive
 This implies in particular that for any $Z\in \R^d$, 
\[
\| R(Z) - Z^\star\| = \| R(Z) - R(Z^\star)\| \leq (1+\mu)^{-1} \|Z - Z^\star\| \,.
\]
\end{example}
\smallskip
\begin{example}\label{Example:nonmonoGame} Consider the following zero-sum game:
\begin{align}
  \label{nonmonoGame}
  \min_{y \in \RR}
  \max_{x \in \RR} \;
  \Big\{ x^2 + 3 \sin^2(x) + a \sin^2(x) \sin^2(y) - y^2 - 3 \sin^2(y)
  \Big\}
  \, ,
\end{align}
where we assume $a \in [0,1]$. Note that the game in \eqref{nonmonoGame} is not a strongly monotone game and it does not
satisfy the diagonal concavity assumption of \citet{rosen1965existenc} (see
Proposition~\ref{prop:nonConvex} in Appendix~\ref{appendix:generality} for details). However, note that for any $y \in \RR$,
the function
$\nu(\, \cdot\,,y) = x^2 + 3 \sin^2(x) + a \sin^2(x) \sin^2(y) - y^2 - 3 \sin^2(y)$
satisfies the so called \emph{restricted secant inequality} (RSI) (see Proposition~\ref{prop:RSI}, Appendix~\ref{appendix:generality} for details and definition). We show
in Proposition~\ref{prop:contraction} in Appendix~\ref{appendix:generality}, that
contraction is satisfied for gradient descent-ascent. A sequence $(a_t)$, with $a_t \in [0,1]$ defines a time-varying instance of this zero-sum game. Note that this example is inspired
and heavily builds on well-known results in optimization by \citet{Karimi2016LinearCO}.
\end{example}

\smallskip

\begin{example} \label{example:EGD}
We conclude these examples with a more informal one, regarding the extra-gradient 
method on bi-affine saddle-point problems, which include in particular equilibria of 
matrix zero-sum games.
Suppose that $\cZ$ is the product of two polytopes $\cP_1, \cP_2$ and that $F$ is a map of the form 
$F(x, y) = c^\top x + d^\top y - x^\top Ay $; that is, $F$ is \emph{bi-affine}.  
\citet{korpelevich1976extragradient} introduced the extra-gradient method for
saddle-point optimization, which consists in playing 
\[
Z_{t+1} = 
\Proj_{\cZ} \big\{
  Z_t - \eta F\big( \Proj_{\cZ} \big\{ Z_t - \eta F(Z_t) \big\} \big)
\big\} \,.
\]
She showed (in Theorem~3 in the reference) that if $F$ has a unique solution $Z_F^\star$, then the
extra-gradient method is eventually a contraction. Precisely, there exists a (relative)
neighborhood of the solution, denoted by $\cU(Z_F^\star)$, such that extra-gradient
becomes a contraction as soon as it reaches $\cU(Z_F^\star)$. Moreover, suppose in
addition that the matrix $A$ is a square matrix and the equilibrium lies in the relative
interior of both $\cP_1$ and $\cP_2$, i.e., in $\relint(\cP_1)\times\relint(\cP_2)$. In
that case, the contraction constant depends only on upper bounds of the euclidean operator
norms of $A$, and of $A^{-1}$.
To fully characterize the contraction property and complete this example, one would need
to explicitly describe $\cU(Z_F^\star)$. Then, given a subset $\cC \subset \cZ$, the
extra-gradient method started at $Z_0\in \cC$ is a contractive algorithm on the set of VI
problems $\mathrm{VIP}(F,\cZ)$ such that $\cU(Z_F^\star)\subset\cC$. Giving an exact
description of $\cU(Z_F^\star)$ is technically challenging and goes beyond the scope of
this article. Recent works (e.g., \citet{wei2021linear}) provide finite-time linear
convergence bounds with explicit constants, but use arguments that do not explicitly rely
upon the contraction property.
\end{example}
\subsection{Tightness of analysis.}
We also show that our analysis of tracking in Theorem~\ref{thm:contraction2track} is
tight, in the sense that one cannot significantly improve the tracking bound uniformly
for all contractive algorithms. The counter-example is built on a family of quadratic
problems and gradient descent. See Appendix~\ref{app:tightness} for a proof.
\begin{theorem}
  \label{thm:tightness}
  For any $C \in (0,1)$, there exists a sequence of VI problems and a $C$-contractive 
  algorithm over this sequence such that
  \[
    \sum_{t=1}^T \|Z_t - Z^\star_t\|^2
    \geq \frac{1}{(1 - C)^2} \sum_{t= 2}^T \|Z^\star_{t} - Z^\star_{t-1}\|^2   \, .
  \]
\end{theorem}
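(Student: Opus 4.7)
The plan is to build a single one-dimensional time-varying quadratic problem and show that gradient descent, which is contractive by Example~\ref{example:GD}, achieves the claimed lower bound up to and including the constant $1/(1-C)^2$. The construction will have $\cZ=\R$, and operators of the form $F_t(z)=(1-C)(z-z_t^\star)$ for some drift $(z_t^\star)$ to be chosen; each $F_t$ is $(1-C)$-strongly monotone and $(1-C)$-Lipschitz with unique solution $z_t^\star$. Taking gradient descent with step $\eta=1$ gives the update $Z_{t+1}=C Z_t+(1-C)z_t^\star$, so that $Z_+-z_t^\star=C(Z-z_t^\star)$: the update rule is \emph{exactly} $C$-contractive on this family, which puts us in the regime covered by Theorem~\ref{thm:contraction2track}.

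The next step is to pick the drift and the initialization so that the algorithm sits at the steady state of the recursion. First I would set $z_t^\star=t-1$, giving a constant increment $z_t^\star-z_{t-1}^\star=1$ and hence $\sum_{t=2}^T\|z_t^\star-z_{t-1}^\star\|^2=T-1$. The key idea is then to start at $Z_1=z_1^\star-1/(1-C)$ rather than at $Z_1=z_1^\star$. A short induction using the recursion above verifies the invariant $Z_t-z_t^\star=-1/(1-C)$ for every $t\ge 1$: indeed, if the invariant holds at time $t$, then
\begin{equation*}
Z_{t+1}-z_{t+1}^\star=C(Z_t-z_t^\star)+(z_t^\star-z_{t+1}^\star)=-\tfrac{C}{1-C}-1=-\tfrac{1}{1-C}.
\end{equation*}
Consequently $\sum_{t=1}^T\|Z_t-z_t^\star\|^2=T/(1-C)^2$, and comparing with $(T-1)/(1-C)^2$ for the right-hand side of Theorem~\ref{thm:tightness} yields the desired inequality.

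The only genuinely delicate point, and what I would flag as the main obstacle to avoid, is the choice of initialization: if one instead starts at $Z_1=z_1^\star$ the early transient of $\delta_t:=Z_t-z_t^\star$ stays strictly smaller in magnitude than the steady-state value $1/(1-C)$, and a direct calculation shows the resulting constant is strictly below $1/(1-C)^2$. Starting exactly at the fixed point of the shifted recursion ensures that every single term $\delta_t^2$ equals $1/(1-C)^2$, which matches the per-step contribution of the right-hand side and gives tightness not just asymptotically but for every $T$. The construction also transparently accommodates the boundary term $\|Z_1-Z_1^\star\|^2/(1-C)$ from Theorem~\ref{thm:contraction2track}: it is exactly what accounts for the extra $1/(1-C)^2$ by which $\sum_{t=1}^T\delta_t^2$ exceeds $\frac{1}{(1-C)^2}\sum_{t=2}^T\|z_t^\star-z_{t-1}^\star\|^2$ in this construction, confirming that the two terms in the upper bound are individually unimprovable for contractive algorithms.
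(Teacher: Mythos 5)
Your construction is essentially the paper's own: both use a one-dimensional time-varying quadratic with a linearly drifting minimizer, exact $C$-contractive gradient descent, and an initialization chosen so the error $Z_t - Z_t^\star$ sits at the fixed point of the error recursion (the paper parametrizes it as $F_t(x)=x-c_t$ with step $1-C$, while you use $F_t(z)=(1-C)(z-z_t^\star)$ with step $1$, an equivalent reparametrization). One small nit on your closing remark: the initialization term in Theorem~\ref{thm:contraction2track} evaluates to $1/(1-C)^3$ here while the actual slack is $1/(1-C)^2$, so this example shows the path-length term is tight but does not by itself establish that the initialization term is unimprovable.
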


\subsection{Tracking guarantees and comparison to existing results.}
\label{subsec:consequences}
If a time-varying VI problem is $\alpha$-tame (Definition~\ref{def:tameVIP}), then
Theorem \ref{thm:contraction2track} implies that for a $C$-contractive algorithm $\cA$ the tracking error is less than 
\begin{align*}
   \track_T(\cA)  \leq \frac{c}{(1-C)^2} T^{\alpha} + \frac{\norm{Z_1-Z_1^\star}^2}{1-C},
\end{align*}
where $c>0$ is the constant in the $\alpha$-tame property.

To our knowledge, our precise setting (exact feedback for time-varying VIs) has not been 
studied previously at this level of generality.
Previous work on time-varying problems has focused either on optimization or games, 
sometimes in settings with stochastic feedback, in which the learner only gets to
observe a noisy value of the operator instead of the true operator.
Despite the differences in assumptions, let us provide more detailed points of
comparison to the related literature, starting with exact feedback.

Our bound recovers the rates of \citet{Mokhtari2016Online-o} when specialized to
strongly-convex time-varying optimization. These rates are optimal, in the sense that
one cannot improve the dependence on $\alpha$ uniformly over all sequences of
strongly-convex functions.

For $\alpha$-tame sequences of games bi-linear zero-sum games, Theorem~8 in
\citet{zhang2022no-regret} implies a tracking bound of order $T^{(3 + \alpha) / 4}$.
Our results do not apply in this setting in general, since there are no known
contractive algorithms for that class of problems. However, as discussed in
Example~\ref{example:EGD}, the extra-gradient method is contractive when initialized in
a small enough neighborhood of the equilibrium of the game. Hence our results suggest a
possible improvement on their bounds for some benign sequences of games, in which the
equilibria stay within the basin of attraction of previous equilibria. We leave this
investigation for future work.
Note also that the algorithm of \citet{zhang2022no-regret} is built on a meta-learning
framework, and maintains simultaneously $\log T$ algorithms over $T$ rounds, which may
come with an unwieldy memory and computation cost.

  \citet{anagnostides2024convergence}
  consider time-varying games and minimize the cumulative squared differences between the
  players' choices and their best responses. Under the additional assumption that each time-varying game is played for multiple rounds, they obtain instance-dependent bounds with
  respect to the first-order variations of the game plus the variation of the $\epsilon$-Nash
  equilibria (cf. Theorem 3.5). However, the assumptions restricting the time-variance of the game in this work are not comparable to our setting in Theorem 1.

Finally, tracking guarantees are also studied in the context of time-varying functions
or games with stochastic feedback. Note that this setting is fundamentally more
challenging and the algorithms and analysis techniques applied in this work are
tailored to these specific challenges.
In the case of time-varying strongly monotone games, sublinear tracking bounds of
order~$T^{(2+\alpha) / 3}$ are given in Theorem~2 and Corollary~3 in
\citet{Duvocelle_2023}.
Another example for optimization with stochastic feedback is
\citet{besbes2015non-stat}: given a tame time-varying strongly convex optimization
problem, their Theorem~4 bounds the dynamic regret $\sum_{t=1}^T f_t(x_t) -
f_t(x^\star_t)$ by a term of order $T^{(1+\alpha)/2}$. This directly implies sublinear
tracking guarantees. Again, we leave the study of stochastic feedback for future work.  
  The following table compares the most relevant existing results.

\begin{tabular}{|p{3.5cm}||p{1.5cm}|p{10cm}|}
 \hline
 & Rate & Setting\\
 \hline
\vspace{0.5em} \citet{Duvocelle_2023} & \vspace{0.5em}$T^{\frac{2+\alpha}{3}}$ &
 \vspace{0.3em}
% \begin{itemize}[label={},leftmargin=*]
   Stochastic feedback \newline
   Strongly monotone time-varying games.
 \vspace{0.7em}
 %\end{itemize}
 \\
  \hline
 \vspace{0.5em}  \citet{besbes2015non-stat}& \vspace{0.5em} $T^{\frac{1+\alpha}{2}}$ &
 \vspace{0.3em}
    Stochastic feedback\newline
     Strongly convex time-varying functions.
      \vspace{0.7em}
      \\
 \hline
 \vspace{0.5em} \citet{zhang2022no-regret}& \vspace{0.5em} $T^{\frac{3+\alpha}{4}}$ &
 \vspace{0.3em}
  Deterministic feedback\newline
     Time-varying bi-linear games (not strongly monotone).
   \vspace{0.7em}
  \\
 \hline
   \vspace{0.5em} \citet{Mokhtari2016Online-o}&  \vspace{0.5em} $T^{\alpha}$  &
  \vspace{0.3em}
    Deterministic feedback\newline
     Time-varying strongly convex functions.
    \vspace{0.7em}
 \\
 \hline
  \vspace{0.5em}  Theorem \ref{thm:contraction2track}&   \vspace{0.5em} $T^{\alpha}$  &
  \vspace{0.5em}
    Deterministic feedback \newline
     Time-varying VIs - any contractive algorithm.
 \vspace{0.7em}
   \\
 \hline
\end{tabular}

\section{Periodic time-varying variational inequalities and tracking.}\label{sec:periodic}
All tracking guarantees of this section are under the assumption that the time-varying
variational inequalities have a periodic solution (see Definition~\ref{def:periodicVIP}), and that 
the operators are all $\mu$-strongly monotone. Additional assumptions on $\cZ$ or $\cF$ are specified at the beginning of each subsection. 

\paragraph{Vacuousness of tracking guarantees via contraction.} 
We start by showing that the results from Section~\ref{sec:contraction}, although
applicable, are not sufficient to guarantee tracking for periodic
time-varying VI problem. Indeed, consider a $k$-periodic time-varying VI problem. Applying
Theorem~\ref{thm:contraction2track}, the bound we obtain on $\track_T(\cA)$ for 
$C$-contractive algorithms is
\begin{align}
\label{ineq:boundContractionPeriodic}
  \frac{1}{(1 -  C)^2} \sum_{t=1}^{T-1} \|Z_t^\star - Z_{t+1}^\star\|^2 
  + \frac{\|Z_1 - Z_1^\star\|^2}{1 -  C} 
  \geq \frac{1}{(1-C)^2} \floor{\frac{T}{k}} \sum_{i=1}^k  \|Z_i^\star - Z_{i+1}^\star\|^2 
  + \frac{ \|Z_1 - Z_1^\star\|^2}{1-C} .
\end{align}  
This bound is linear in $T$ as soon $Z_i^\star \neq Z_{i+1}^\star$ for some $i\in[k]$.
However, since the periodic problem is inherently easier than a non-periodic time-varying
VI problem, we would expect better tracking guarantees.

Our goal in this section is to improve on the vacuous bound in
\eqref{ineq:boundContractionPeriodic} for periodic time-varying variational inequalities.
We propose new algorithms, based on the aggregation of forward{-backward} methods, that
leverage the periodicity of the data. The results of this section are divided into two
parts:
\begin{enumerate}
  \item  In Section \ref{sec:metaPeriodic}, we provide a worst-case regret bound for
        bounded domains $\cZ$, which hold for any sequence of strongly monotone operators
        $F_t$, against any periodic sequence of comparators (cf.
        Theorem~\ref{thm:aggregation}). We apply this result in
        Corollary~\ref{cor:tracking_periodic} to derive tracking bounds when the
        comparators are specified to be the solutions of an exact periodic sequence of
        strongly monotone VI problem.
  \item Complementing these results, we show in Section~\ref{sec:unconstrained_lip} that for $\mathrm{VIP}((F_t),\cZ)$ with periodic solutions constant tracking bounds are achievable under the additional assumption that the $F_t$'s are Lipschitz continuous (cf.\,Theorem~\ref{thm:constRegretAdaHedge}). These results hold even for unbounded domains $\cZ$. 
    \end{enumerate}

\subsection{Aggregation of online gradient descent.}
\label{sec:metaPeriodic}
We now present our first set of results for tracking solutions of periodic  VI problems.
At the same time, we introduce the algorithmic framework that will also be used in
Section~\ref{sec:unconstrained_lip}. 
In this section, we assume that the domain $\cZ$ is bounded, with diameter
$D = \max_{Z_1, Z_2 \in \cZ}\|Z_1- Z_2\|$, and that the sequence of operators is bounded
by a known constant $G \geq \max_{Z \in \cZ}\|F_t(Z)\|$. 
 
\paragraph{On (dynamic) regret bounds for variational inequalities.}
The guarantees we derive in this section are upper bounds on quantities of the form
\begin{equation}
  \label{eq:regret-explanation}
  \sum_{t=1}^{T} \langle F_t(Z_t), Z_t - C_t\rangle - \frac{\mu}{2} \| Z_t - C_t\|^2 \,, 
\end{equation}
where $\mu \geq 0$ is a positive number, typically the strong-monotonicity constant 
of the operators. 
In analogy with the online optimization terminology, we call those guarantees
\emph{dynamic regret bounds} for the sequence of operators $(F_t)$ against the sequence
of comparators $(C_t)$. Note that, unless specified otherwise, we make no assumptions on
the way the operators $F_t$ are generated: they could be chosen by an adversary. The
periodicity of the data is captured by the assumption that the comparators are periodic.

As explained in the proof of Corollary~\ref{cor:tracking_periodic}, when the comparator
$C_t$ at time $t$ is specified to be the $k$-periodic solutions to the operators
$(F_t)$, and when the $F_t$ are $\mu$-strongly monotone, then the dynamic regret is an
upper bound on the tracking error.
 
\vspace{0.5em}
\begin{remark}
Upper bounding the dynamic regret \eqref{eq:regret-explanation} can be useful 
beyond the application we give.
For example, in the two fundamental examples of convex optimization and convex-concave
saddle-point optimization, under an additional strong monotonicity assumption, the
dynamic regret~\eqref{eq:regret-explanation} is an upper bound on the natural measure
of performance, as explained in Examples~\ref{ex:optim} and \ref{ex:gap_saddle_point}.
\end{remark}
\begin{remark}
Reducing the computation of solutions of VI problems to the optimization of scalar
objectives is standard in the VI literature. In fact, the summand in
\eqref{eq:regret-explanation} is a special case of a saddle-function associated to the
operator $F_t$, as discussed extensively in \citet{larsson1994class}; precisely, the saddle-function 
$
\mathcal L : (z, c) \mapsto 
  \langle F(z), z - c\rangle - (\mu / 2) \| z - c\|^2 \,.
$
This saddle-function naturally defines a `gap function', $\sup_{c \in \cX} \mathcal
L(z, c)$, see also \citet{Facchinei_2007} (Definition 10.2.2), which can serve as an
optimization proxy for the VI.
\end{remark}

\vspace{1em}

 \begin{example}[Optimization]
  \label{ex:optim}
  If $f$ is a $\mu$-strongly-convex function, with gradient $F = \nabla f$, then by strong
  convexity, for any $z, c \in \cZ$,
  \[
    f(z) - f(c)
    \leq \langle F(z), z - c\rangle - \frac{\mu}{2} \|z -c \|^2.
  \]
\end{example}
\begin{example}[Saddle-point Optimization]
  \label{ex:gap_saddle_point}
  If $u : (p_1, p_2) \mapsto u(p_1, p_2)$ is  a $\mu$-strongly-convex-strongly-concave function,
  with pseudo-gradient $F = (\nabla_1 u, -\nabla_2 u)$,
  then for any $z = (z_1, z_2)$ and $ c = (c_1, c_2) \in \cZ$, we have:
  \begin{align*}
    u(z_1, c_2)  -  u(c_1, z_2)     
    &=  u(z_1, c_2)  - u(z_1, z_2) + u(z_1, z_2) -  u(c_1, z_2)  \\
    & \leq -\langle \nabla_2 u (z_1, z_2),  z_2 - c_2 \rangle - \frac{\mu}{2} \| z_2 - c_2 \|^2
    + \langle \nabla_1 u (z_1, z_2), z_1 - c_1 \rangle  - \frac{\mu}{2} \| z_1 - c_1 \|^2 \\
    & = \langle F(z), z - c\rangle - \frac{\mu}{2} \|z - c\|^2.
  \end{align*}
\end{example}
In these two central examples, regret guarantees imply upper bounds on the cumulative gaps
against the sequence of comparators $(C_t)_{t \in \NN}$. Note also that this notion has
previously appeared in the online learning literature, e.g.\,in
\citet{zhang2022no-regret}.

\paragraph{General structure of aggregation and base algorithm.}
We use a \emph{meta-algorithm template}, as stated in Algorithm~\ref{alg:meta}. The basic idea is
to maintain several copies of a \emph{base algorithm} with different parameters.
The outputs of the base algorithms are combined using an aggregation algorithm,
guaranteeing performance comparable to that of the best base algorithm.  We refer to this aggregation algorithm as the \emph{meta algorithm}.
{In our setting, the base algorithms are initialized with different period lengths $i \in [K]$, i.e., $\cA_{\mathrm{cFB}}^{(1)}, \dots, \cA_{\mathrm{cFB}}^{(K)}$, and the meta algorithm guarantees a performance comparable to an algorithm initialized with the correct cycle length $k$, $\cA_{\mathrm{cFB}}^{(k)}$. 
This idea is commonly used in online learning for adapting to unknown parameters in
step-size tuning, e.g., \citet{van-erven2021metagrad}.
The unknown parameter in our case is the period $k$ of the sequence. }

\paragraph{Base algorithm.}
The base algorithm we consider is an instance of gradient descent tailored to
periodic problems, instead of stationary problems. Given a period $i$, the algorithm
maintains $i$ independent iterates and updates them cyclically. We denote by $\tilde F_t$
the operator given as input to the algorithm: this is to emphasize the fact that we will
sometimes give a surrogate operator to the algorithm instead of the true operator $F_t$
(see~\eqref{eq:surrogate}); we also instantiate it with the true operator in
Section~\ref{sec:unconstrained_lip}.
 
\SetKwInput{kwInit}{Initialization}
\SetKwInput{kwInput}{Input}
\begin{algorithm}
\caption{Algorithm $\cA_{\rm cFB}^{(i)}$: Cyclic Forward{-Backward} Method}\label{alg:cyc-OGD}
\kwInput{Assumed period $i$, step-size schedule $(\eta_s)$, { flag if used as a base algorithm.} }
\kwInit{Initial iterates $(Z_{1, 1}, Z_{2,1}, \dots, Z_{i, 1})= (Z_1, Z_1, \dots, Z_1)$ with $Z_1 \in \cZ$.}
\For{$t =1, \dots, T$}{
Pick current index $n = (t \bmod i) + 1 $ , and corresponding time $s = \lceil t / i \rceil$ \;
Play $Z_t = Z_{n,s} $, receive operator $\tilde F_t$ \;
{\If{ $\cA_{\rm cFB}^{(i)}$ functions as a base algorithm }{Forward $Z_t$ to meta-algorithm\;}}
Update $Z_{n, s+1} = \Proj_{\cZ}(Z_{n, s} - \eta_s \tilde F_t(Z_{n, s}))$\;
}
{\textbf{Output: } sequence $(Z_t)$\;}
\end{algorithm}
 
We first recall the analysis for a fixed comparator. This is a direct application of the
online gradient descent bound on the sequence of linear losses $z \mapsto \langle
\tilde F_t(Z_t), z \rangle $, used in the analysis of online gradient descent
for strongly-convex losses; 
see Theorem 2.1 in \citet{bartlett2007adaptive}.

\begin{proposition}
  Given $\mu > 0$, for any fixed $C \in \cZ$, if $\eta_t = 1 / (\mu t)$, then
  the Online Forward{-Backward} Method (Algorithm~\ref{alg:cyc-OGD} tuned with period $i=1$), ensures that
for all $t\in [T]$, 
   \[
    \sum_{t=1}^{T} \langle \tilde F_t(Z_t), Z_t - C\rangle - \frac{\mu}{2} \| Z_t - C\|^2
    \leq \frac{(\max_{t\in [T]} \|\tilde F_t(Z_t)\|)^2}{2\mu} \big( \log (T) + 1\big) \,.
  \]
\end{proposition}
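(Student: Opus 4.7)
The plan is to run the standard analysis of projected gradient descent on a $\mu$-strongly convex loss, applied to the linearized surrogate $z \mapsto \langle \tilde F_t(Z_t), z\rangle$ together with the implicit strong convexity penalty $-\frac{\mu}{2}\|z - C\|^2$ that appears in the regret-like quantity we are bounding. With the period set to $i=1$, the algorithm reduces to a single projected gradient step $Z_{t+1} = \Proj_{\cZ}(Z_t - \eta_t \tilde F_t(Z_t))$, so the whole argument is really just one line of algebra followed by a telescoping sum.

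First, using nonexpansiveness of the Euclidean projection onto $\cZ$ and the fact that $C \in \cZ$, I would expand $\|Z_{t+1} - C\|^2 \leq \|Z_t - \eta_t \tilde F_t(Z_t) - C\|^2$ and isolate the linear cross-term, arriving at the textbook inequality
\[
\langle \tilde F_t(Z_t), Z_t - C\rangle \leq \frac{\|Z_t - C\|^2 - \|Z_{t+1} - C\|^2}{2\eta_t} + \frac{\eta_t}{2}\|\tilde F_t(Z_t)\|^2.
\]
Subtracting $\frac{\mu}{2}\|Z_t - C\|^2$ from both sides and plugging in $\eta_t = 1/(\mu t)$ produces a coefficient $\frac{\mu(t-1)}{2}$ in front of $\|Z_t - C\|^2$ and a coefficient $\frac{\mu t}{2}$ in front of $-\|Z_{t+1} - C\|^2$. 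Summing over $t = 1, \dots, T$, these two sequences telescope perfectly: the boundary term at $t=1$ vanishes thanks to the factor $(t-1)$, and the boundary term at $t=T+1$ is nonpositive and can be dropped.

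What remains is the residual noise sum $\sum_{t=1}^T \frac{\|\tilde F_t(Z_t)\|^2}{2\mu t}$, which I would upper bound by $\frac{G^2}{2\mu}\sum_{t=1}^T \frac{1}{t} \leq \frac{G^2}{2\mu}(1+\log T)$ with $G = \max_{t\in[T]}\|\tilde F_t(Z_t)\|$, matching the claimed bound. There is no serious obstacle here; the only point worth emphasizing is that the schedule $\eta_t = 1/(\mu t)$ is not arbitrary but is dictated precisely by the requirement that $\frac{1}{2\eta_t} - \frac{\mu}{2}$ coincide with $\frac{\mu t}{2}$ shifted by one index, which is what makes the $\|Z_t - C\|^2$ sequence telescope cleanly and is what converts a $\sqrt{T}$-type bound into the $\log T$ bound.
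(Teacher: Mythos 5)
Your proof is correct and follows exactly the standard online gradient descent analysis for strongly convex losses that the paper appeals to via a citation to Bartlett et al.~(2007); the nonexpansiveness-plus-telescoping argument with $\eta_t = 1/(\mu t)$ is precisely the proof behind the cited Theorem 2.1, so this is the same route the paper intends, merely written out in full rather than deferred to a reference.
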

A direct consequence of this bound is that if we tune the learner with a period $k$,
then for any $k$-periodic sequence of comparators $(C_t)_{t \in { \NN}}$ {with $C_{t} = C_{t+k}$ for all $t \in \NN$}, the regret of cyclic gradient
descent can be upper bounded by separating the update times of each individually
maintained iterate.
\begin{corollary}\label{cor:periodic_gd}  Let $(F_t)_{t \in \NN}$ be a sequence of $G$-bounded operators. 
For any $k$-periodic sequence of comparators  $(C_t)_{t \in { \NN}}$ {with $C_{t} = C_{t+k}$ for all $t \in \NN$}, Algorithm~\ref{alg:cyc-OGD} tuned
with the period $k$ and step-size schedule $\eta_s = 1 / (\mu s)$ played on $\tilde F_t =
F_t$ enjoys the bound 
  \[
    \sum_{t=1}^{T} \langle F_t(Z_t), Z_t - C_t\rangle - \frac{\mu}{2} \| Z_t - C_t\|^2
    \leq \frac{kG^2}{2\mu} \big(\log (T/k)  + 1\big) \,.
  \]
\end{corollary}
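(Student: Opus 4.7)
The plan is to exploit the separable structure of Algorithm~\ref{alg:cyc-OGD} when tuned with period $k$: the $k$ maintained iterates evolve completely independently of each other, each being updated only at times belonging to its own residue class modulo $k$. Since the comparator sequence $(C_t)$ is $k$-periodic, for each index $n \in \{1,\dots,k\}$ the value $C_t$ is constant (equal to some $C_n$) along the subsequence of rounds where iterate $Z_{n,\cdot}$ is active. So each slice is literally an instance of Online Forward Method with a single fixed comparator, to which the preceding proposition applies directly.

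Concretely, I would first partition the time horizon $[T]$ into the $k$ subsequences $\mathcal{T}_n = \{t \in [T] : (t \bmod k) + 1 = n\}$, write $T_n = |\mathcal{T}_n|$, and note that $\sum_{n=1}^k T_n = T$ with each $T_n \leq \lceil T/k \rceil$. By construction of the algorithm, for $t \in \mathcal{T}_n$ the played point is $Z_t = Z_{n,s}$ with $s = \lceil t/k\rceil$, and the update rule reads $Z_{n,s+1} = \Proj_{\cZ}(Z_{n,s} - \eta_s F_t(Z_{n,s}))$, which is exactly the Online Forward Method applied to the sequence of operators $(F_t)_{t \in \mathcal{T}_n}$. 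Since $\eta_s = 1/(\mu s)$ and the comparator is the fixed $C_n$, the preceding proposition yields, for each $n$,
\[
\sum_{t \in \mathcal{T}_n} \dprod{F_t(Z_t), Z_t - C_n} - \frac{\mu}{2}\|Z_t - C_n\|^2
\leq \frac{G^2}{2\mu}\bigl(\log T_n + 1\bigr),
\]
using the uniform bound $\|F_t(Z_t)\| \leq G$ assumed on the operators.

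Summing this inequality over $n = 1,\dots,k$ recovers the full left-hand side of the corollary, and it remains to bound $\sum_{n=1}^k (\log T_n + 1)$. Here I would invoke the concavity of the logarithm (Jensen's inequality) together with $\sum_n T_n = T$, which gives
\[
\sum_{n=1}^k \log T_n \leq k \log\!\Bigl(\frac{1}{k}\sum_{n=1}^k T_n\Bigr) = k \log(T/k),
\]
so the overall bound becomes $\frac{kG^2}{2\mu}(\log(T/k) + 1)$, matching the claim.

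There is no real obstacle: the only subtlety is the bookkeeping to make sure that the shifted time index $s = \lceil t/k\rceil$ inside each class indeed ranges over $1,\dots,T_n$ so that the step-size schedule fed to each base instance matches the $1/(\mu s)$ schedule required by the proposition. Once this is checked, both the decomposition and the Jensen step are routine.
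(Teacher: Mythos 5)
Your proposal is correct and is essentially the argument the paper has in mind (the paper treats this corollary as a direct consequence and does not write out a separate proof). The decomposition into residue classes modulo $k$, the observation that each slice is an instance of the Online Forward Method against a single fixed comparator with the correct $1/(\mu s)$ schedule, and the final sum over the $k$ slices are exactly what is needed. Your explicit Jensen step to pass from $\sum_{n=1}^k \log T_n$ to $k\log(T/k)$ is a welcome clarification: a naive bound $T_n \leq \lceil T/k\rceil$ would give a slightly looser constant, so the concavity argument is in fact required to match the stated bound precisely.
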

We note that this result holds in particular if the sequence of
comparators is the ($k$-periodic) sequence of solutions, that is $C_t=Z_t^\star$.
 Since $(F_t)$ are by assumption $\mu$-strongly monotone, this implies that the tracking
error is upper bounded by the regret. Thus, $\track_T(\cA_{\rm{cFB}}^{(k)})$ is in the
order of $ (G^2/\mu^2) k\log T$. 
In practice, we expect the period of the data to be unknown. Therefore, we run several
instances of the base algorithm. Each instance is initialized with a different period
length, and the iterates of the base algorithms are aggregated by a meta-algorithm.
\paragraph{Aggregation.}
We maintain $K\geq k$ base algorithms and aggregate them using an expert algorithm.We wish
to retain a regret bound for the combined aggregated iterates that is comparable to that
of the best algorithm in hindsight. The standard Hedge bound is of order $\sqrt{T\ln K}$,
which would give an overhead cost that dominates the bound of the best-performing
algorithm. 
In order to retain the fast logarithmic rate, we propose a carefully crafted aggregation
scheme working for VI problems, based on a combination of surrogate operators with exponentially
weighted aggregation (a.k.a. Vovk's Aggregating Algorithm in this context, see
\citet{vovk1995a-game-o}). The $K$ base algorithms are updated with the affine surrogate
operators
\begin{equation}\label{eq:surrogate}
  \tilde F_t : z \mapsto F_t(Z_t) + \mu (z - Z_t) \,.
\end{equation}
 The term `surrogate operators' is inspired by the online learning terminology. 
From each base algorithm  $\cA_{\rm cFB}^{(i)}$, the meta-algorithm receives an iterate $Z_t^{(i)}$. 
Then, the meta-algorithm combines the base predictions $Z^{(i)}_t$  by maintaining a
probability distribution over the $K$ base algorithms. This probability distribution is given by the aggregation algorithm. We define for all $i\in[K]$, 
 \begin{equation}\label{eq:exp_weights}
  p_{t, i} \propto \exp\biggl(-\lambda \sum_{s=1}^{t-1} \ell_{s, i} \biggr)
  \quad \text{with} \quad
  \ell_{s, i} = \langle F_s(Z_s), Z_s^{(i)} \rangle + (\mu / 2)\|Z_s^{(i)} - Z_s\|^2 \,.
\end{equation}
The learning rate is set to $\lambda = (4\mu(D + G/\mu)^2)^{-1}$. This learning
rate depends on the diameter of the set $\cZ$, i.e., $D = \max_{Z,Z' \in \cZ}
\norm{Z-Z'}$, and $G \geq \max_{t \in T}\norm{F_t(Z_t)}$
 and the strong-monotonicity parameter $\mu$. We then select the final
action $Z_t = p_{t, 1} Z_t^{(1)} + \dots + p_{t, K} Z_t^{(K)}$.

\begin{algorithm}[h!]
\caption{ Meta-Algorithm Template }\label{alg:meta}
\KwData{
Maximum cycle period $K \in \NN$, 
aggregator over $K$ base iterates, 
$K$ base algorithms. 
}
 \For{t = 1, \dots, T}{
 Receive base plays $(Z^{(1)}_t, \dots, Z^{(K)}_t)$ from base algorithms {(Algorithm \ref{alg:cyc-OGD})} \;
 Receive distribution $p_t$ over $[K]$ base algorithms from aggregator algorithm \;
 Play $Z_t = p_{t, 1} Z^{(1)}_t + \dots +p_{t, K} Z^{(K)}_t$ and receive $F_t$ \; 
 Update aggregator algorithm with losses $\ell_{i, t}$ \;
 Update base algorithms {(Algorithm \ref{alg:cyc-OGD})} with operator $\tilde F_t$ \; 
 }
\end{algorithm}

\begin{theorem}
\label{thm:aggregation} 
 Let $(F_t)_{t \in \NN}$ be a sequence of $G$-bounded operators
on the bounded domain $\cZ$ of diameter $D$.
Consider Algorithm~\ref{alg:meta}, {initialized} with maximum period $K$, 
using exponential weights (with losses specified in \eqref{eq:exp_weights}, 
and learning rate $\lambda = (4\mu(D + G/\mu)^2)^{-1}$)  
and cyclic forward{-backward} method (Algorithm \ref{alg:cyc-OGD}) with step-size schedule $\eta_s =
1 / (\mu s)$ as base algorithms. 
For any $k \leq K$, against any $k$-periodic sequence of comparators $(C_t)_{t \in { \NN}}$ {with $C_{t} = C_{t+k}$ for all $t \in \NN$} 
  \[
    \sum_{t=1}^{T} \langle F_t(Z_t), Z_t - C_t\rangle - \frac{\mu}{2} \| Z_t - C_t\|^2
    \leq \frac{(G + \mu D)^2}{2\mu} \Big( k \log (T/k) + k
    + 8 \log K \Big) \,.
  \]
\end{theorem}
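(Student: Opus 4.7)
The plan is to introduce the quadratic surrogate loss
\[ \ell_t(z) = \langle F_t(Z_t), z\rangle + \frac{\mu}{2}\|z - Z_t\|^2. \]
Its gradient is exactly the surrogate operator $\tilde F_t$ passed to the base algorithms, and by construction $\ell_{s,i} = \ell_s(Z_s^{(i)})$, so each base algorithm is genuinely running online gradient descent on the $\mu$-strongly convex loss sequence $(\ell_s)$. A one-line computation also gives
\[ \ell_t(Z_t) - \ell_t(C_t) = \langle F_t(Z_t), Z_t - C_t\rangle - \frac{\mu}{2}\|Z_t - C_t\|^2, \]
so the left-hand side of the theorem rewrites as $\sum_t [\ell_t(Z_t) - \ell_t(C_t)]$. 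I would then split this by inserting the prediction $Z_t^{(k)}$ of the base algorithm tuned with the true period:
\[ \sum_t [\ell_t(Z_t) - \ell_t(C_t)] = \underbrace{\sum_t [\ell_t(Z_t) - \ell_t(Z_t^{(k)})]}_{\text{meta-regret}} + \underbrace{\sum_t [\ell_t(Z_t^{(k)}) - \ell_t(C_t)]}_{\text{base-regret}}. \]

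For the base-regret, $\mu$-strong convexity of $\ell_t$ gives $\ell_t(Z_t^{(k)}) - \ell_t(C_t) \leq \langle \tilde F_t(Z_t^{(k)}), Z_t^{(k)} - C_t\rangle - \frac{\mu}{2}\|Z_t^{(k)} - C_t\|^2$, and on $\cZ$ the surrogate operator satisfies $\|\tilde F_t(z)\| \leq \|F_t(Z_t)\| + \mu\|z - Z_t\| \leq G + \mu D$. The $k$-th base algorithm is cyclic forward method at the correct period, with step-sizes $1/(\mu s)$, applied to the $\mu$-strongly monotone operator $\tilde F_t$, so Corollary~\ref{cor:periodic_gd} applies verbatim with $G$ replaced by $G+\mu D$. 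This bounds the base-regret by $\frac{k(G+\mu D)^2}{2\mu}\bigl(\log(T/k) + 1\bigr)$.

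For the meta-regret, my plan is to exploit the exp-concavity of $\ell_t$. Since $\nabla^2 \ell_t = \mu I$ and $\|\nabla\ell_t(z)\|^2 \leq (G+\mu D)^2$ on $\cZ$, one has the matrix inequality $\nabla^2\ell_t \succeq \lambda_0\, \nabla\ell_t (\nabla\ell_t)^\top$ with $\lambda_0 = \mu/(G+\mu D)^2$, which is exactly the condition that $e^{-\lambda_0 \ell_t}$ be concave. The chosen learning rate $\lambda = \mu/(4(G+\mu D)^2)$ satisfies $\lambda \leq \lambda_0$, so $e^{-\lambda\ell_t}$ is also concave, and Jensen's inequality applied at the aggregated point $Z_t = \sum_i p_{t,i} Z_t^{(i)}$ yields
\[ \ell_t(Z_t) \leq -\frac{1}{\lambda}\log \sum_i p_{t,i} e^{-\lambda \ell_{t,i}}. \]
The classical Vovk/aggregating-algorithm potential argument—introduce $W_t = \sum_i e^{-\lambda \sum_{s\leq t}\ell_{s,i}}$, telescope $\log(W_t/W_{t-1})$ using the display above, and lower-bound $W_T \geq e^{-\lambda \sum_t \ell_{t,k}}$—then delivers $\sum_t [\ell_t(Z_t) - \ell_t(Z_t^{(k)})] \leq (\log K)/\lambda = 4(G+\mu D)^2\log K/\mu$. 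Summing meta- and base-regret gives precisely $\frac{(G+\mu D)^2}{2\mu}\bigl(k\log(T/k) + k + 8\log K\bigr)$. The only genuinely delicate step is the verification of exp-concavity at the chosen $\lambda$, which is what dictates the factor $4$ in the learning rate; the rest is a mechanical combination of Corollary~\ref{cor:periodic_gd} with the standard aggregating-algorithm analysis.
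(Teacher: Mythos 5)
Your proposal is correct and follows essentially the same route as the paper: exp-concavity of the quadratic surrogate loss $\ell_t$ drives the Vovk aggregation bound, and your ``strong-convexity'' step is precisely the paper's algebraic identity that converts $\langle \tilde F_t(Z_t^{(i)}), Z_t^{(i)}-u\rangle - \frac{\mu}{2}\|Z_t^{(i)}-u\|^2$ into the form handled by Corollary~\ref{cor:periodic_gd}. The only cosmetic difference is that you verify exp-concavity directly via the Hessian-vs.-gradient-outer-product condition rather than citing the square-loss mixability constant, which is also how you correctly notice that the theorem's factor of $4$ in $\lambda$ is conservative and could be dropped to tighten the $8\log K$ term to $2\log K$.
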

Interestingly, since the mappings $z \mapsto \langle F_t(z), z\rangle$ are not convex, our
proof requires careful manipulations of the surrogates. A positive computational
side-effect of using these surrogates is that the full procedure only requires a single
evaluation of the operator $F_t$. For a complete proof, see
Appendix~\ref{sec:appendix:proofThm3}.
\begin{corollary}[Tracking periodic variational inequalities]
\label{cor:tracking_periodic}
Let $(F_t)_{t \in \NN}$ be a sequence of $\mu$-strongly monotone and $G$-bounded operators
on the bounded domain $\cZ$ of diameter $D$. If the solutions of $(F_t)$ are $k$-periodic, then
Algorithm~\ref{alg:meta} tuned as in the statement of Theorem~\ref{thm:aggregation}
guarantees
\[
  \sum_{t=1}^{T} \|Z_t - Z_t^\star\|^2
  \leq \frac{(G + \mu D)^2}{\mu^2} \Big( k \log (T/k) + k
  + 8 \log K \Big) \,.
\]
\end{corollary}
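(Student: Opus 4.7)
The plan is to apply Theorem~\ref{thm:aggregation} with the comparator sequence chosen as $C_t = Z_t^\star$, which by assumption is $k$-periodic with $k \leq K$, and then to use strong monotonicity of the operators $F_t$ to convert the dynamic regret bound into the desired tracking bound.

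First I would verify the key lower bound on each term of the sum on the left-hand side of Theorem~\ref{thm:aggregation}. Since $Z_t^\star$ solves $\mathrm{VIP}(F_t,\cZ)$, the variational inequality gives $\langle F_t(Z_t^\star), Z_t - Z_t^\star\rangle \geq 0$. Combined with the $\mu$-strong-monotonicity of $F_t$, namely $\langle F_t(Z_t) - F_t(Z_t^\star), Z_t - Z_t^\star\rangle \geq \mu \|Z_t - Z_t^\star\|^2$, this yields
\[
\langle F_t(Z_t), Z_t - Z_t^\star\rangle \geq \mu \|Z_t - Z_t^\star\|^2,
\]
and therefore
\[
\langle F_t(Z_t), Z_t - Z_t^\star\rangle - \frac{\mu}{2}\|Z_t - Z_t^\star\|^2 \geq \frac{\mu}{2}\|Z_t - Z_t^\star\|^2.
\]

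Summing this pointwise bound over $t \in [T]$ and applying Theorem~\ref{thm:aggregation} to the $k$-periodic comparator sequence $(Z_t^\star)$ gives
\[
\frac{\mu}{2}\sum_{t=1}^T \|Z_t - Z_t^\star\|^2 \leq \sum_{t=1}^T \langle F_t(Z_t), Z_t - Z_t^\star\rangle - \frac{\mu}{2}\|Z_t - Z_t^\star\|^2 \leq \frac{(G+\mu D)^2}{2\mu}\bigl(k\log(T/k) + k + 8\log K\bigr).
\]
Multiplying both sides by $2/\mu$ yields exactly the stated bound.

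There is no real obstacle here: the statement is essentially a direct corollary of Theorem~\ref{thm:aggregation}, and the only ingredient added is the elementary observation that strong monotonicity, together with the VI optimality condition at $Z_t^\star$, converts the $(\mu/2)$-regularized dynamic regret into half the tracking error. The only thing to double-check is that the tuning constants $D$, $G$, $\lambda$, and the step-size schedule $\eta_s = 1/(\mu s)$ specified in Theorem~\ref{thm:aggregation} remain valid in this application, which they do since we have made no additional assumptions beyond those already in force.
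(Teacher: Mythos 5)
Your proof is correct and is essentially identical to the paper's: both use the VI optimality condition at $Z_t^\star$ plus $\mu$-strong monotonicity to show $\langle F_t(Z_t), Z_t - Z_t^\star\rangle - \frac{\mu}{2}\|Z_t - Z_t^\star\|^2 \geq \frac{\mu}{2}\|Z_t - Z_t^\star\|^2$, then sum and invoke Theorem~\ref{thm:aggregation} with $C_t = Z_t^\star$.
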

\begin{proof}{Proof.} 
Note that $\dprod{F(Z_t^\star),Z_t-Z_t^\star}\geq 0$ by definition.  
With $\mu$-strong monotonicity, we have
\[
  \mu\| Z_t - Z_t^\star\|^2
  \leq \langle F_t(Z_t) - F(Z_t^\star), Z_t - Z_t^\star \rangle
\leq \langle F_t(Z_t), Z_t - Z_t^\star \rangle \,, 
\]
which we rearrange to make the instantaneous regret appear
\[
\frac{\mu}{2} \| Z_t - Z_t^\star\|^2 
\leq 
\langle F_t(Z_t), Z_t - Z_t^\star \rangle
- \frac{\mu}{2} \| Z_t - Z_t^\star\|^2  \,.
\]
Therefore, after summing over $t\in [T]$, Theorem~\ref{thm:aggregation} provides the
claimed result.
\qed
\end{proof}

\subsection{Constant tracking error under Lipschitzness.}
\label{sec:unconstrained_lip}
The previous results are limited to the constrained setting, in which the domain $\cZ$
is bounded. They also require the knowledge of an upper bound $G$ on the norm of the
point values of the operators. We now consider the unconstrained setting, $\cZ = \R^d$,
and add a Lipschitz assumption on the operators $F_t$. Note that we still assume that the
operators are all $\mu$-strongly monotone. Furthermore, in contrast to
Section~\ref{sec:metaPeriodic}, we allow for $K$ evaluations of the function $F_t$. Our
algorithm is another instance of the meta-algorithm framework introduced in
Section~\ref{sec:metaPeriodic}. In this Section, we give an algorithm that obtains a constant
tracking bound that is independent of $T$.

\paragraph{Base algorithms.} 
To take advantage of the Lipschitz assumption, we use variants of the cyclic forward
algorithm with a constant learning rate of $\eta= 1/L$ and feedback $\tilde F_t= F_t$.
Note that using this feedback now requires $K$ evaluations of $F_t$. However, the
constant learning rate together with full multi-point feedback gives exponentially fast
convergence to the solution.
\begin{lemma}
  Let $\mathrm{VIP}((F_t), \cZ)$ be a time-varying VI problem with $k$-periodic solutions. Assume
  the functions $F_t$ are $\mu$-strongly monotone and $L$-Lipschitz.
  Let $(Z^{(k)}_t)$ denote the sequence of iterates of the forward
  algorithm $\cA_{\mathrm{cFB}}^{(k)}$ (Algorithm~\ref{alg:cyc-OGD}), tuned with $\eta_t =
    1/L$ and correct period $k$. Then
  \[
      \big\| Z_t^{(k)} - Z^\star_{t} \big\|^2
    \leq \Big(1  - \frac{\mu}{L} \Big)^{\lfloor t/k \rfloor} 
      \big\| Z_1 - Z^\star_{t} \big\|^2 \,.
  \]
\end{lemma}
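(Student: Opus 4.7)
The plan is to exploit the fact that the $k$ iterates $Z_{1,\cdot},\ldots,Z_{k,\cdot}$ that Algorithm~\ref{alg:cyc-OGD} maintains evolve entirely independently of one another, so the analysis reduces to a single slot. Fix an index $n \in \{1,\ldots,k\}$ and enumerate as $t_1 < t_2 < \cdots$ the times at which slot $n$ is updated, i.e., those $t$ with $(t \bmod k)+1 = n$. By the $k$-periodicity of the solutions, the point $Z_{t_j}^\star$ is the same for every $j$; denote it $Z_n^\star$.

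The core of the argument is a one-step contraction along this subsequence. Although the operators $F_{t_j}$ may vary with $j$, the common point $Z_n^\star$ is a solution of each $\mathrm{VIP}(F_{t_j},\cZ)$ and is therefore a fixed point of the forward map $T_j : z \mapsto \Proj_{\cZ}\bigl(z - (1/L)\,F_{t_j}(z)\bigr)$. The calculation in Example~\ref{example:GD}---which relies only on $\mu$-strong monotonicity and $L$-Lipschitzness of a single operator, together with nonexpansiveness of the projection---then gives, for every $j$ and every $z\in\cZ$,
\[
  \|T_j(z) - Z_n^\star\|^2 \;=\; \|T_j(z) - T_j(Z_n^\star)\|^2 \;\leq\; \Bigl(1-\frac{\mu}{L}\Bigr)\|z - Z_n^\star\|^2.
\]
Iterating this contraction starting from $Z_{n,1} = Z_1$ yields $\|Z_{n,s} - Z_n^\star\|^2 \leq (1-\mu/L)^{s-1}\|Z_1 - Z_n^\star\|^2$.

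To translate back to the time index, note that at time $t$ the algorithm plays $Z_t^{(k)} = Z_{n,s}$ with $n = (t \bmod k)+1$ and $s = \lceil t/k \rceil$, and by periodicity $Z_n^\star = Z_t^\star$. A short check shows $s-1 = \lfloor t/k \rfloor$ whenever $t$ is not a positive multiple of $k$, and the boundary case at $t = mk$ is immediate since the bound is trivial there; substituting gives the stated inequality. There is really no serious technical obstacle here---the one conceptual step is recognizing that $k$-periodicity of the \emph{solutions}, rather than of the operators themselves, is enough to make each slot's sequence of updates behave exactly like gradient descent on a fixed strongly monotone and Lipschitz operator, viewed from its fixed point, and everything else is a direct reduction to the static contraction of Example~\ref{example:GD}.
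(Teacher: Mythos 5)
Your approach is the same as the paper's: reduce each of the $k$ slots to a static contraction problem, using (i) the observation that a slot $n$ is updated only with operators $F_{t_j}$ that, by $k$-periodicity of the solutions, share a common fixed point $Z^\star_n$ of the forward map, and (ii) the static contraction computed in Example~\ref{example:GD} (restated in the appendix as Lemma~\ref{lemma:GDSMonoLipschitz}). You spell this out more explicitly than the paper does, which simply says the result is ``a small modification of Example~\ref{example:GD}.''

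However, your dismissal of the boundary case is not correct, and in fact there is a genuine off-by-one issue in the lemma as stated. At $t = mk$, the pseudocode of Algorithm~\ref{alg:cyc-OGD} gives $n = (mk \bmod k)+1 = 1$ and $s = \lceil mk/k \rceil = m$, so the played iterate is $Z_{1,m}$, which has undergone only $m-1$ updates. The contraction therefore yields only $(1-\mu/L)^{m-1}\|Z_1-Z^\star_t\|^2$, whereas the lemma asserts $(1-\mu/L)^m\|Z_1-Z^\star_t\|^2$, a \emph{stronger} bound; for $m=1$ the claimed inequality $\|Z_1 - Z^\star_k\|^2 \leq (1-\mu/L)\|Z_1-Z^\star_k\|^2$ is false unless $Z_1 = Z^\star_k$. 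So the claim is not ``trivial'' at those times but actually unattainable. What your argument actually proves is the bound with exponent $s-1 = \lceil t/k\rceil - 1 = \lfloor (t-1)/k\rfloor$, which coincides with $\lfloor t/k\rfloor$ except at multiples of $k$ where it is one smaller. This is better seen as a minor indexing imprecision in the paper's statement (harmless for the downstream Theorem~\ref{thm:constRegretAdaHedge}, which only needs a geometric decay) than as a defect in your approach; but you should not paper over it by calling the boundary case ``immediate'': state the exponent you actually obtain, or flag the correction to the lemma.
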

The result is a small modification of Example~\ref{example:GD} in Section \ref{sec:examplesContraction}. 
 From this, we obtain that the base
algorithm tuned with the correct period $k$ converges exponentially fast
to the solutions $Z_1^\star, \dots, Z_k^\star$. 

\paragraph{Aggregation with adaptive step-sizes.}
 The distribution over the base algorithms is an instance of exponential weights, with the
 crucial difference with respect to our approach of Section~\ref{sec:metaPeriodic} that we
 use an adaptive learning rate $\lambda_t$. More precisely, let $p_1$ be the uniform
 distribution over $[K]$. Define the loss vector and the loss of the average play at time
 $t$, 
\begin{equation}
  \label{eq:ew_and_losses}
  \ell_{t, i} = \langle F_t(Z_t), Z_t^{(i)} \rangle + (\mu / 2)\|Z_t^{(i)} - Z_t\|^2  
  \quad
  \text{and}
  \quad
  \bar \ell_t =  \langle F_t(Z_t), Z_t \rangle
  \,.
\end{equation}
For the first rounds, if $\bar \ell_t \leq \min_{i} \ell_{t, i}$, define $p_{t+1}$ to be
the uniform distribution over the minimal components of $\ell_1 + \cdots + \ell_t$. Let
$T_0$ be the first time at which $m_t < \bar \ell_t$. For $t \geq T_0 + 1$, denoting by
$(x)_+ = \max (x, 0)$ the positive part of $x$, set $p_t$ to be the distribution over
$[K]$ such that
\begin{equation}
  \label{eq:ada_lr}
  p_{t, i} \propto \exp\biggl(-\lambda_t \sum_{s=1}^{t-1} \ell_{s, i} \biggr)
  \quad 
  \text{where}
  \quad
  \lambda_t = \frac{\log K}{ \sum_{s=T_0}^{t-1} (\bar \ell_s - m_s)_+} 
  \quad
  \text{and}
  \quad
  m_s = -\frac{1}{\lambda_s} \sum_{i = 1}^K p_{t, i} e^{-\lambda_s \ell_{s, i}} \,.
\end{equation}
Note that the denominator in the definition of the learning rate is positive, by
definition of $T_0$. 
The initialization steps, when $t \leq T_0$ can be informally described as setting
$\lambda_t = \infty$ when that denominator is $0$.
The learning rate is tuned in the style of AdaHedge (\citet{De-Rooij:2014aa}), to decrease
proportionally to the inverse of the cumulative gap between the
loss of the average of plays $\bar \ell_s$, and the mix loss~$m_s$.

In the unconstrained setting, the final convergence guarantees unavoidably depend on the initial point chosen by the algorithm. Given a time-varying problem
$\mathrm{VIP}((F_t),\cZ)$ with periodic solutions, let $\eq$ denote the set of all the
solutions of the operators $(F_t)$ (which has cardinality at most $k$). For simplicity, we
assume that all base algorithms are initialized with the same $Z_1 \in \cZ$, and define
the initial distance
\[
  D_0 = \max_{Z^\star \in \eq} \|Z_1 - Z^\star\| \,.
\]
The next theorem provides constant tracking guarantees for our algorithm in the
unconstrained setting, depending on the initial distance and condition number $\kappa =
L / \mu $.
\begin{theorem}\label{thm:constRegretAdaHedge}
Consider a time-varying unconstrained problem $\mathrm{VIP}((F_t), \R^d)$ with
$k$-periodic solutions. Assume all operators $(F_t)$ are all $\mu$-strongly monotone and
$L$-Lipschitz, and let $\kappa = L / \mu$ denote the condition number of the problem. 
Algorithm~\ref{alg:meta}, tuned with $K\geq k$ and losses and learning rate specified in
\eqref{eq:ew_and_losses} and \eqref{eq:ada_lr}, has tracking guarantees:
   \[
    \sum_{t=1}^{T }\norm{Z_t - Z^\star_t}^2
\leq  4D_0^2 (2 + \kappa) \Big(2(2\kappa^2 + 1)(2 + \kappa)\log K  
+ (2\kappa + 1) \kappa k + 1\Big)\, . 
  \]

\end{theorem}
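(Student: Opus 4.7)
The plan is to convert the tracking error into a dynamic-regret-style quantity and then exploit the meta/base decomposition of Algorithm~\ref{alg:meta}. Because $\cZ = \R^d$, every $Z_t^\star$ satisfies $F_t(Z_t^\star) = 0$; combining this with $\mu$-strong monotonicity and rearranging exactly as in the proof of Corollary~\ref{cor:tracking_periodic} gives
\[
\tfrac{\mu}{2}\|Z_t - Z_t^\star\|^2 \leq \ell_t(Z_t) - \ell_t(Z_t^\star),
\]
for the surrogate loss $\ell_t(z) = \langle F_t(Z_t), z\rangle + (\mu/2)\|z - Z_t\|^2$ from \eqref{eq:ew_and_losses}. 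Summing over $t$ and inserting the correctly-tuned base iterate $Z_t^{(k)}$ as a pivot,
\[
\tfrac{\mu}{2}\sum_{t=1}^T \|Z_t - Z_t^\star\|^2
\leq \underbrace{\sum_{t=1}^T \bigl(\ell_t(Z_t) - \ell_{t,k}\bigr)}_{\text{meta-regret}}
+ \underbrace{\sum_{t=1}^T \bigl(\ell_{t,k} - \ell_t(Z_t^\star)\bigr)}_{\text{base-regret}}.
\]

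The base-regret is controlled by the exponential-convergence lemma stated right before the theorem: $\|Z_t^{(k)} - Z_t^\star\|^2 \leq (1-1/\kappa)^{\lfloor t/k\rfloor} D_0^2$, so summing a geometric series yields $\sum_t \|Z_t^{(k)} - Z_t^\star\|^2 \leq k\kappa D_0^2$. Expanding $\ell_{t,k} - \ell_t(Z_t^\star)$, bounding $\|F_t(Z_t)\| \leq L\|Z_t - Z_t^\star\|$ by Lipschitzness, and applying Young's inequality to split the cross-term between $Z_t^{(k)} - Z_t^\star$ and $Z_t - Z_t^\star$ bounds the base-regret by a multiple of $\mu D_0^2 (2\kappa+1)\kappa k$ plus a small fraction of $\sum_t \|Z_t - Z_t^\star\|^2$ that can be reabsorbed on the left.

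For the meta-regret I would adapt the AdaHedge analysis of \citet{De-Rooij:2014aa} to the surrogate losses. Because the aggregator plays the barycenter $Z_t = \sum_i p_{t,i} Z_t^{(i)}$ and each $\ell_t$ is $\mu$-strongly convex, we have the exact identity
\[
\sum_i p_{t,i}\ell_{t,i} - \bar\ell_t = \tfrac{\mu}{2}\sum_i p_{t,i}\|Z_t^{(i)} - Z_t\|^2,
\]
which makes the losses mixable: the per-step gap $\bar\ell_s - m_s$ appearing in the learning-rate rule \eqref{eq:ada_lr} is controlled by $\mu$ times the squared expert diameter around $Z_s$. The self-bounding step of AdaHedge then trades the cumulative gap against $\log K / \lambda_t$ to yield meta-regret of order $(1/\mu)\cdot(\text{expert diameter})^2 \log K$, accounting for the $(2\kappa^2+1)(2+\kappa)\log K$ factor.

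The main obstacle, and the place where the $4D_0^2(2+\kappa)$ prefactor is fixed, is a uniform a priori bound on every expert $Z_t^{(i)}$ for all $i \leq K$ in the unconstrained setting. Experts with $i \neq k$ are not guaranteed to track, but each individual cyclic forward step $z \mapsto z - (1/L)F_t(z)$ is still a $\sqrt{1-1/\kappa}$-contraction towards the current $Z_t^\star \in \eq$ (same computation as Example~\ref{example:GD}). A short induction then shows that every coordinate maintained by every base algorithm stays within a $\kappa$-dependent multiple of $D_0$ from $\eq$, so $\max_{i,t}\|Z_t^{(i)} - Z_t\|^2 = O((2+\kappa)D_0^2)$. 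Plugging this diameter into the meta-regret bound, combining with the base estimate, absorbing the reabsorbable fraction on the left, and dividing by $\mu/2$ yields the stated inequality, with the $\mu$'s cancelling because both contributions scale linearly in $\mu$.
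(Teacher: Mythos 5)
Your proposal follows essentially the same route as the paper: strong monotonicity (together with $F_t(Z_t^\star)=0$ in the unconstrained case) converts the tracking error into surrogate-loss regret; a meta/base split applies; the meta-regret is handled by an AdaHedge-style self-bounding argument on the mixable surrogate loss; the base-regret is handled by exponential convergence of the correctly tuned cyclic forward method; and uniform bounds on the expert positions and gradients are established by a contraction argument. That is exactly the paper's structure (Lemmas~\ref{lem:agg_ada}, \ref{lem:mix_gap_ada}, and \ref{lem:bounded_d_and_g}), so this is the same proof in sketch form.

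Two places where your sketch is imprecise or slightly off. First, the crucial mechanism behind the constant meta-regret is not that the per-step gap $\bar\ell_s - m_s$ is bounded by $\mu$ times the squared expert diameter (this alone would give a bound that grows with $T$); it is that the surrogate loss is $\lambda^\star$-exp-concave with $\lambda^\star = \mu / (2(\mu\what D_T + \what G_T)^2)$, so once the AdaHedge learning rate $\lambda_t$ drops below $\lambda^\star$ the instantaneous gap is exactly $\leq 0$. The total is then at most $\log K / \lambda^\star$ plus one boundary term (the paper's Lemma~\ref{lem:mix_gap_ada}). Your identity $\sum_i p_{t,i}\ell_{t,i} - \bar\ell_t = (\mu/2)\sum_i p_{t,i}\|Z_t^{(i)} - Z_t\|^2$ is correct but relates $\bar\ell_t$ to the \emph{average} loss, not to the \emph{mix} loss $m_t$, so it does not by itself certify that the gap vanishes. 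Second, your claimed diameter bound $\max_{i,t}\|Z_t^{(i)} - Z_t\|^2 = O((2+\kappa)D_0^2)$ is too optimistic: the recursion with the Young-inequality splitting of the solution drift term yields $\what D_T \leq 2\sqrt{2\kappa^2+1}\,D_0$, i.e., $\what D_T^2 = O(\kappa^2 D_0^2)$, which is what produces the $(2\kappa^2+1)(2+\kappa)$ factor in the final bound. Apart from these quantitative imprecisions, the plan is sound and reaches the stated result.
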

\smallskip

The proof is detailed in Appendix~\ref{app:proof_of_ada}. The crux of the proof, in
Lemmas~\ref{lem:agg_ada} and~\ref{lem:mix_gap_ada}, 
consists in showing that thanks to the adaptive learning rate, the aggregation method
ensures similar regret bounds as in the proof of Theorem~\ref{thm:aggregation}, but
replacing $D$ and $G$ by their empirical counterparts $\what D_T$ and $\what G_T$, defined as
follows:
 \[
\what D_T = \max_{{t \in [T], \,  i, j \in [K]}} \norm{Z_t^{(i)} - Z_t^{(j)}}
\quad
\text{and}
\quad
\what G_T = \max_{t \in [T], \,  i \in [K]} \norm{F_t(Z_t^{(i)} )} \,; 
\]
we show in Lemma~\ref{lem:bounded_d_and_g} in Appendix~\ref{app:proof_of_ada} that these quantities are controlled by $D_0$
for our algorithms and assumptions. The rest of the proof is similar to that of
Theorem~\ref{thm:aggregation}.
 
\section{Application Examples}\label{sec:examples}
In this section, we provide examples of time-varying VI problems to illustrate the practical relevance of our results. 
As mentioned in the introduction, it is well known that VIs are a generalization of optimization and saddle-point problems. Examples \ref{Example:KellyAuction} and \ref{ex:streaming_lin_reg} in Section \ref{section:exampleOptAndGames} are illustrations of practical problems in these settings. See also \citet[Chapter I]{facchinei2003finite-d} for many more examples, which often have natural time-varying extensions. 
  {In Section \ref{section:GLM}, we give an example illustrating that VIs can also be applied to statistical parameter estimation.}
{In its full generality, this problem is not equivalent to an optimization problem, thus showcasing the wider scope of VIs.}
\subsection{Examples: Time-varying convex optimization and saddle-point problems.}\label{section:exampleOptAndGames}
 For clarification and completeness, we recall the relation between VIs and convex optimization and saddle-point problems. Let $\cZ$ be a closed convex set and let $f:\cZ \rightarrow \RR$  be a differentiable and convex function.  In the following examples, we will either assume that $\cZ$ is bounded or that $f$ is strongly convex. Under either of these assumptions, 
if we set $F = \nabla f$ then $u^{\star}$ is a solution of $\mathrm{VIP}(F,\cZ)$ if and
only if
\begin{align}
\label{OPT}
u^\star \in \argmin_{x \in \cZ } f(x). \tag{OPT} 
\end{align}
If the function $f$ is $\mu$-strongly convex and $L$-smooth, then the corresponding VI problem
is $\mu$-strongly monotone and $L$-Lipschitz.
  To illustrate
the relation to game theory, consider an $n$-player normal-form game with convex loss
functions: Let $\cZ^{(i)}$ denote the (convex) action set of player $i\in[n]$, and define 
$\cZ = \prod_{i \in [n]} \cZ^{(i)}$, and
$\cZ^{(-i)} = \prod_{i\in[n]\setminus \{i\}}\cZ^{(i)}$.
Finally, denote by $\nu^{(i)}:\cZ^{(i)} \times \cZ^{(-i)} \rightarrow \RR$ the $i$-th player's loss function and assume that for all $x^{(-i)} \in \cZ^{(-i)}$,
the loss $\nu^{(i)}(\, \cdot\, , x^{(-i)})$ is convex and differentiable. A point $\bar x
= [\bar x^{(1)}, \dots , \bar x^{(n)}] \in \cZ$ is a \emph{Nash equilibrium} if
\begin{align}
  \label{NE}
  \forall i \in [n], \; \forall x^{(i)} \in \cZ^{(i)}: \quad \nu^{(i)}\left(x^{(i)},\bar x^{(-i)}\right) \geq \nu^{(i)}\left(\bar x^{(i)},\bar x^{(-i)}\right). \tag{NE}
\end{align}
Consider the operator $F$ mapping $(x^{(1)}, \dots , x^{(n)})$ to 
$[\nabla_{x^{(1)}}\nu^{(1)}(x^{(1)}, x^{(-1)}), \dots,
\nabla_{x^{(n)}}\nu^{(n)}(x^{(n)}, x^{(-n)})]$, also called the \emph{pseudo-gradient} 
of the game $(\nu^{(i)})$. 
The set of Nash equilibria is exactly the set of solutions of $\mathrm{VIP}(F,\cZ)$. 
See, e.g., \citet[Chapter~12]{Rockafellar:1998aa},   or \citet[Chapter 7]{Kinderlehrer:2000aa},   for more details and examples.

\begin{example}[Repeated kelly auctions on seasonal markets] 
\label{Example:KellyAuction}We adapt an example from \citet{Duvocelle_2023}. Consider a repeated Kelly auction (\citet{Kelly:1998aa}) where $n$ bidders participate in a repeated auction on a limited good. Let $R_t \geq 0$
denote the minimal price, i.e., the entry barrier set at time $t$. 
For all $i \in [n]$, given a maximum budget $b^{(i)}_{t}$, and their value for the good,
$v_t^{(i)}\geq 0$, player $i$ bids $x^{(i)}_{t} \in [0,b^{(i)}_{t}]$. 
Each player gets a fraction of the good relative to their bid. Their utility function 
is 
\begin{align*}
\nu^{(i)}_t\big(x_t^{(i)}, x_t^{(-i)}\big) 
= v_t^{(i)}\frac{x_t^{(i)}}{R_t + \sum_{j\in[n]}x_t^{(j)}} -x_t^{(i)} \,.
\end{align*}
For many goods, the demand or the supply depends on exogenous factors that vary with 
time. For example, the availability of agricultural products depends on the season, and 
the demand for luxury goods may depend on fashion trends. These conditions naturally affect
the minimal price $R_t$, and the individual values $v_t^{(i)}$, making the game 
time-varying with seasonal shifts. For further examples and context, also see Section 2 in \citet{rahme2021auction}.
\end{example}
\smallskip
   To apply the results from Section \ref{sec:contraction}, assume that the demand adheres to a slowly shifting fashion trend, thus satisfying tameness (cf. Definition \ref{def:tameVIP}). Furthermore, note that the pseudo-gradient is Lipschitz and monotone; thus, adding a quadratic regularizer to the utility guarantees strong monotonicity. Hence, combining Example \ref{example:GD} and Theorem \ref{thm:contraction2track} provides tracking guarantees for this example. 
  Further, consider a scenario where the availability of a good follows an idealized seasonal trend, hence satisfying periodicity. Again, consider the regularized utilities. Given that we have an upper estimation on the periodicity, Theorem \ref{thm:constRegretAdaHedge} provides a constant tracking bound. We note that the assumption of periodicity is rather unrealistic for many real-world applications. Hence, this example also illustrates the practical relevance of the future research directions in Section \ref{sec:unconstrained_lip}.

 \begin{example}[Linear regression on streaming data] \label{ex:streaming_lin_reg}
Consider a linear regression problem with a continuous stream of i.i.d.\ data as
introduced in \citet{Foster:1991aa}. That is, let $ \{(a_1,b_1), \dots,
  (a_{n_t},b_{n_t})\}\subset \RR^{ d} \times \RR$ denote the data at time $t$, and $A_t :=
  [a_1,\dots a_{n_t}]^\top \in \RR^{n_t \times d}$ and $b_t := [b_1,\dots,b_{n_t}]^\top\in
  \RR^{n_t}$ the corresponding data matrix and vector, respectively. Suppose the data
points $(a_i,b_i)$ are coming from a stream of measurements of the same experiment. In an
experimental setting, it can be beneficial to have a real-time regression on the data
stream (see e.g.\,the software tool~\citep{MatLab:2020aa}). Hence, define $f_t:\RR^d
  \rightarrow \RR$ with $f_t(x) = \norm{A_t x - b_t}^2 + \lambda \norm{x}^2$. Under the
assumption that the experimental data follows a fixed sub-gaussian distribution, the
solution to the empirical regression problem approaches the solution with respect to the
true distribution with high probability, see \citet[Section~13]{Shalev-Shwartz:2014aa}.
This defines an optimization problem with a slowly shifting solution and thus a
\emph{tame time-varying} optimization problem.
\end{example} 

\smallskip
 We note that the objective function of this tame time-varying optimization problem is $\lambda$-strongly convex; hence, combining Example \ref{example:Hilbert} and Theorem \ref{thm:contraction2track} provides tracking guarantees for this example. Note that Example \ref{example:Hilbert} does not require $\cZ$ to be bounded. Hence, it fits this example where $x \in \RR^d$ is from an unbounded set.

\subsection{Example: Generalized Linear Models}\label{section:GLM}
To illustrate the generality of VIs and their time-varying variant, we conclude this section with an example from statistical parameter estimation. 
\begin{example}[Generalized linear models]\label{Example:GLM} 
  We present an example from \citet{Juditsky:2019}. Consider a generalized linear model corrupted by Gaussian noise; i.e., we
  obtain observations $(a_t,b_t)\in \RR^d\times \RR$, $t=1,\ldots,T$, that satisfy $b_t=
  \phi( \langle Z^{\star},a_t\rangle)+ \xi_t$, where $\phi:\RR\mapsto\RR$ is a continuous
  and increasing \emph{link function}, $\xi_t\sim {\cal N}(0,\sigma^2)$, and
  $Z^{\star}\in \RR^d$ is an unknown parameter vector.
\citet{Juditsky:2019} observe that this parameter estimation problem can be reformulated as a monotone VI problem $\mathrm{VIP}(F,\RR^d)$ with operator  
\begin{equation} \label{eqn:monotone_oper_MLE}
F(Z) = \frac1T\sum_{t=1}^T a_t\, \left[\phi\left(\dprod{ Z,a_t}\right)- b_t\right] 
\,.
\end{equation}
Moreover, under minimal distributional assumptions, it can be proven that the
population-type operator (i.e., the expected value of $F$ under the target
distribution) becomes strongly monotone.
{
  For simplicity of presentation, we focus here on the univariate case where the link
  function $\phi$ maps from $\R \to \R$. In this specific setting, the operator $F$ in
  \eqref{eqn:monotone_oper_MLE} is always the gradient of a convex potential function,
  making the VI equivalent to a convex optimization problem. However,
  \citet{Juditsky:2019} consider a broader setting with multivariate link functions
  ($\phi : \R^m \to \R^m$), where the operator's Jacobian is not necessarily symmetric.
  Consequently, the general problem is not equivalent to an optimization problem and
  serves as a genuine example of the broader VI framework.
}
The above VI can also be considered in a streaming setting, such as proposed in Example \ref{ex:streaming_lin_reg}. If needed, regularization can be introduced by adding $\lambda Z$ to the operator in equation \eqref{eqn:monotone_oper_MLE}.
\end{example}
  Note that similarly to Example \ref{ex:streaming_lin_reg}, Theorem
  \ref{thm:contraction2track} combined with Example \ref{example:Hilbert} imply tracking
  guarantees.
\begin{remark} The above VI reformulation is one common option for solving GLM. Conversely, we may also consider the maximum likelihood estimator of the unknown parameter $Z\in \RR^d$ via solving the (nonlinear) least squares problem
\[ \min_{Z\in \RR^d}\Big\{ \frac1T\sum_{t=1}^T \left[b_t-\phi\left( \dprod{ Z,a_t}\right)\right]^2  \Big\}. \]
This optimization problem is typically nonconvex (with the exception of the case where the link function is affine). Thus, computing a solution might be computationally expensive. 
\end{remark} 
% {
% \begin{remark}
% \citet{Juditsky:2019} consider a more general setting of GLMs where they allow $\phi: \RR^m \rightarrow \RR^m$. For simplicity, we only consider $m=1$ here.  
 
% \end{remark}}

 \section{Main negative result: chaos.}\label{sec:chaos}
 
In this section, we provide a detailed case study of the convergence properties of fixed
step-size gradient descent on some periodic time-varying problems. 
We focus on optimization instead of the general VI framework since we aim to provide
simple examples of striking phenomena. For the same reason, we focus on periodic functions. Recall that this implies that the functions have a periodic solution (cf. Definition \ref{def:periodicVIP}). Accordingly, we always aim for the smallest
dimension in which these observations occur. 
\paragraph{Motivation and practical relevance}
Besides the surprising appearance of chaotic behavior, which we found interesting in its
own, we draw several practical conclusions from the results in this section.
First, proper parameter tuning is crucial: while fixed step sizes ensure exponential
convergence in strongly monotone and Lipschitz continuous variational inequalities,
ignoring periodicity in step size tuning can lead to instability. Second, in many
applications, optimization methods are used with step sizes exceeding the bounds for
which theoretical convergence results exist. Empirical studies (see
\citet{Cohen2021GradientDO}) show that larger step sizes can improve machine learning
model performance in both convergence and generalization.
 Our study adds a surprising piece to this puzzle: in periodic settings, increasing the
 step size can turn a divergent system into a convergent one. Given the popularity of
 single-shuffle stochastic gradient descent \citep{Mishchenko:2020aa}, which falls into
 the periodic setting, this finding might have practical implications.
{
However, we emphasize that the emergence of chaos, as discussed here, is a property of
the autonomous (time-invariant) dynamical system induced by the periodic composition of
operators. In general non-autonomous (arbitrary time-varying) or stochastic settings 
(using SGD), Li-Yorke chaos is not meaningful. 
Moreover, in practice, adaptive step-size methods (such as AdaGrad, AcceleGrad,
UnixGrad, UnderGrad, \citet{levy2018online, kavis2019unixgrad,
antonakopoulos2022undergrad} or Armijo-type line-search \citet{armijo1966minimization})
are often employed and can mitigate instability or chaotic behavior in deterministic
settings. Our results thus primarily highlight a theoretical phenomenon in
deterministic, periodic settings, and serve as a caution for using fixed step-sizes in
such cases. Investigating whether similar phenomena occur more generally in stochastic
systems, or with adaptive step sizes is left for future work.
}

% Our study adds a surprising piece to this puzzle: in periodic settings, increasing the
% step size can turn a divergent system into a convergent one. Given the popularity of
% single-shuffle stochastic gradient descent \citep{Mishchenko:2020aa}, which falls into
% the periodic setting, this finding may have theoretical implications for deterministic
% analogues, but we stress that chaos is not expected to arise in stochastic or
% adaptively-tuned algorithms.
 
\subsection{Notation}
Let $x_1 \in \cZ$ be an arbitrary point and consider the sequence of iterates generated
by gradient descent with step-size $\eta$ on a $k$-periodic sequence of $\mu$-strongly
convex and $L$-smooth problems $(f_t)_{t \in \NN}$, and define $F_t = \nabla f_t$. Define the
Gradient Descent iteration operator
\begin{align}
\label{def:GDoperator}
  \Phi_{\eta, t} : x \mapsto \Proj_{\cZ} \big(x - \eta F_t(x) \big)
  \quad \text{and} \quad
  \bar \Phi_{\eta} = \Phi_{\eta, 1} \circ \dots \circ \Phi_{\eta, k}
  \,.
\end{align}
With this notation, the sequence $(x_t)_{t\in \NN}$ satisfies the equations $x_{t+1}= \Phi_{\eta,
t}(x_t)$ for all $t\in\N$, and $x_{(n+1)k} = \bar \Phi_{\eta}(x_{nk})$ for all $k\in\N$.

Since all $\Phi_{\eta, t}$ are continuous (by continuity of Euclidean projections on
convex sets, together with the $L$-Lipschitzness of $F_t$), and since $\cZ$ is closed,
the sequence $(x_t)_{t\in\NN}$ is bounded if and only if one of its subsequences $(x_{i +
      k n})_n$ is bounded. Therefore, the boundedness of the iterates is solely dictated by $
  \bar \Phi_{\eta}$, and the starting $F_1$ does not matter for this discussion. We already provided an example of quadratic time-varying problems
with a constant minimizer---arguably the simplest setting. We observed
that the value of the step-size $\eta$ with the best speed of convergence may be larger
than values that lead to divergence. In the following section, we make a series of
striking observations, notably:
\begin{itemize}
  \item The sequence of GD iterates can converge to arbitrary points, go through cycles of
        arbitrary length, or even be chaotic, see Section~\ref{sec:chaosGD}.
  \item Period-halving bifurcations occur, interleaved with divergent phases, see
        Figure~\ref{fig:bifurcation}
  \item For specific choices of the step size, we observe star-shaped limit sets, see
        Section~\ref{sec:starAttractors}.
         
\end{itemize}

\subsection{Examples of chaotic behavior with large step-sizes.}
\label{sec:detailed_chaos}
We describe simple instances of time-varying problems which exhibit
exotic behavior of gradient descent with large step sizes.
We show that \emph{even in simple examples} the trajectory of Gradient Descent ---and its
dependence on the learning rate--- can be surprisingly complex. In particular, we show that chaos
can emerge in one-dimensional periodically time-varying optimization problems with a fixed minimizer. Moreover,
increasing the learning rate from a chaotic trajectory may lead to divergence, to cyclic 
behavior, or to converging trajectories. 

The examples we consider are built using the following instance of a time-varying
optimization problem. Given a periodic sequence of symmetric positive definite matrices
$A_t$, let $f_t$ be functions from $\R^d$ to $\R$ defined by
\begin{equation} \label{eq:exp}
  f_t(x) = \log\big(1 + e^{x^\top A_t x / 2} \big), 
  \quad
  \text{so that}
  \quad
  F_t(x) := f_t'(x) =   \frac{e^{x^\top A_t x / 2}}{1 + e^{x^\top A_t x  / 2}} A_t x \,.
\end{equation}
The functions $f_t$ are close to quadratic functions (and $F_T$ are close to
linear), but the non-linearity in the gradients is enough to produce chaotic
behavior. Note also that the minimizer is always $0$. Furthermore, note that $(F_t)$ is a sequence of $\mu$-strongly monotone and $L$-Lipschitz functions, where the parameters $\mu$ and $L$ depend on $(A_t)$.

\paragraph{Definitions: stability and chaos.}

Our description of the trajectory of gradient descent uses notions from real discrete
dynamical systems theory which we now recall. We refer the reader to the two monographs
\citet{Block2006Dynamics,Ruette2017Chaos-on} from which we extracted these definitions
and theorems.  Note that we do not state the results in full generality, for the sake of brevity. 
 
Following the notations introduced above, we denote a trajectory of gradient descent with
step-size $\eta$ by $(x_t)$ (the step-size will be clear from context). That is,
$x_{t+1}= x_t - \eta F_t(x_t)$, and $x_{k(n+1)} = \bar \Phi_\eta(x_{kn})$ and the points
at time steps that are multiples of the time-period $k$ are orbits of the stationary
dynamical system $(\bar \Phi_\eta, \R)$. (Note that $\bar \Phi_\eta$ is always
continuous.)

A point $x$ is said to be periodic under the map $\Phi$ if its (forward) orbit $(\Phi^n(x))_n$ is
periodic; the period of $x$ is the smallest period of the sequence. 
A periodic orbit $P = \{z_1, \dots, z_p\}$ of $\Phi$ is \emph{asymptotically stable} if
there exists an open set $U$ such that for any $x \in U$, the sequence of repeated
iterations $(\Phi^{(pn)}(x))_n$ converges to a point in~$P$. Then any trajectory
initialized in that open set will eventually get close to the orbit $P$. 

The results we are most interested in apply to cases in which the function $\Phi$ maps a
compact interval $I \subset \R$ into itself. Then, we shall say that the pair $(\Phi, I)$
forms an interval map. In that case, the orbits from a starting point inside $I$ stay
bounded in $I$.

Given an interval map $\Phi$ on $I$, a set $S \subset I$ is said to be \emph{scrambled}
if for any $x, y \in S$, we have $\limsup_{n\rightarrow \infty} | \Phi^n(x) - \Phi^n(y)| > 0$ and $\liminf_{n\rightarrow \infty} |
\Phi^n(x) - \Phi^n(y)| = 0$. The interval map $(\Phi, I)$ is said to be \emph{chaotic} in
the sense of Li-Yorke if there exists an uncountable scrambled set.

Li-Yorke chaos is one of the standard formalizations of chaos in real-valued dynamical
systems, and has been invoked recently in the game theory and optimization literature to
describe the trajectories of standard algorithms in games, e.g.\,in
\citet{pmlr-v139-bielawski21a}.

\subsection{Different behaviours of gradient descent.}\label{sec:chaosGD}
In our first example, we consider $d=1$ and a  periodic sequence of length $2$.  More precisely, in \eqref{eq:exp} let for all $n
\in \N$, 
\[
  A_{2n + 1} = 0.25
  \quad \text{and} \quad
  A_{2n} = 4 \,.
\]
Despite the very basic features of the problem, we show that depending on
the value of the learning rate, Gradient Descent can:
\begin{itemize}
  \item[-] converge to zero; the global minimizer of the sequence,
  \item[-] oscillate around cycles of points which are not minimizers, 
  \item[-] be chaotic in the sense of Li-Yorke,
  \item[-] diverge to infinity.
\end{itemize}
Surprisingly, increasing the learning rate can get the trajectory out of unstable
dynamics and recover convergence. This is summarized in the bifurcation diagram
of Figure~\ref{fig:bifurcation}, and in Observation~\ref{obs:gd_behaviours} below.

\begin{figure}[h]
  \center
	\includegraphics[width=.99\linewidth]{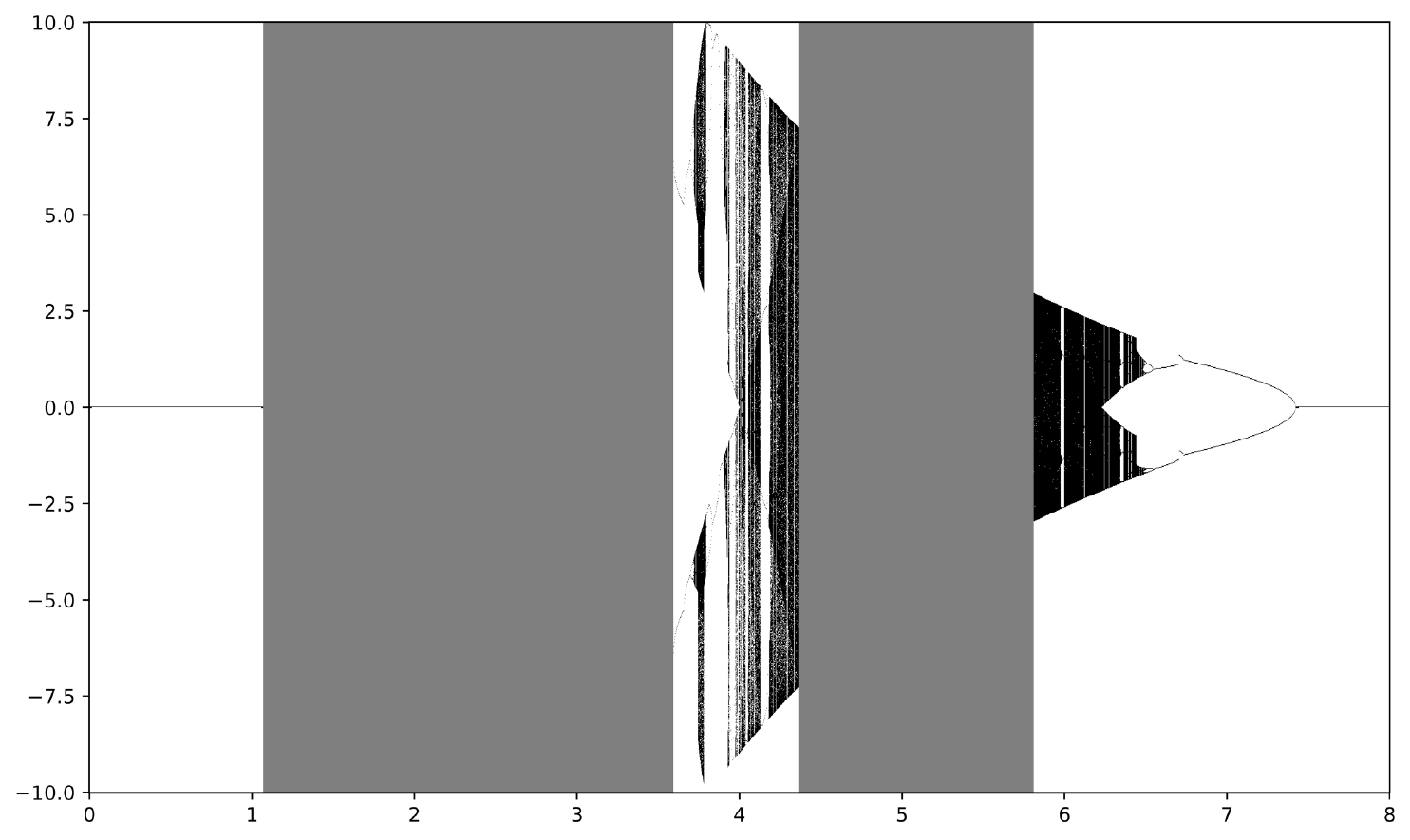}
	\caption{
		Bifurcation diagram for the dynamical system generated by gradient descent on
		the periodic problem~\eqref{eq:exp}, with the learning rate as the varying parameter,
		and with initial point $x=-0.1$. \\
		The diagram represents the accumulation points of the sequence of iterates of
		gradient descent. Grey areas correspond to values of the learning rate for
		which the sequence diverged, axis were broken to hide large areas of divergence.
    Details on the simulation used to create this figure can be found in
    Appendix~\ref{app:bifurcation_diagram}.
	}
    \label{fig:bifurcation}
\end{figure}

While an exhaustive theoretical analysis justifying the bifurcation diagram is out of
reach (even in the well-studied case of the logistic map), we discuss some
specific values of $\eta$ in the following result. There, we put together a collection 
of numerical observations and proofs, providing detailed evidence that gradient descent 
can behave in arbitrarily complex ways on simple time-varying optimization problems. 
\begin{observation} 
  \label{obs:gd_behaviours}
  Consider the trajectories of gradient descent with fixed step size applied to the
  $2$-periodic sequence of functions \eqref{eq:exp}. Depending on the value of $\eta$, the trajectories may have any of the following behaviors:
  \begin{itemize}
    \item[--] For $\eta \leq 1 / 2$ or $\eta \in [7.6, \, 8.4]$, there exists a neighbourhood $\cU$
          of $0$ such that GD converges to $0$ for any initial point $x_0 \in U$; moreover $-0.1 \in \cU$.
    \item[--] For $\eta = 2$ the sequence diverges to $\infty$ for any non-zero initial point.
    \item[--] For $\eta = 3.9$, there exists an $8$-periodic sequence $(c_n)$ and an open set
          $\cU$ containing $0.1$, such that for any $x_0 \in \cU$, we have $(x_n - c_n) \to 0$\,.
    \item[--]
          For $\eta = 6.1$, the system is Li-Yorke chaotic. Precisely, there exists two uncountable
          sets $S_e$ and $S_o$ such that $S_e$ is scrambled by $\bar \Phi_\eta$ and $S_o$ is
          scrambled by $\Phi_{\eta, 2} \circ\Phi_{\eta, 1}$\,.
  \end{itemize}
\end{observation}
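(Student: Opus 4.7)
The plan is to reduce everything to the single interval map $\bar\Phi_\eta = \Phi_{\eta,1}\circ\Phi_{\eta,2}$, whose orbits are the even subsequence $(x_{2n})$ of the gradient descent iterates. Writing $\sigma(u)=e^u/(1+e^u)$ for the logistic function, one has $F_t(x)=\sigma(A_t x^2/2)\,A_t x$, and since $\sigma(0)=1/2$ the chain rule gives
\[
  \bar\Phi_\eta'(0) \;=\; (1-\eta A_1/2)(1-\eta A_2/2) \;=\; (1-\eta/8)(1-2\eta).
\]
This single derivative drives the first three items via linearization or direct magnitude bounds; the fourth rests on Sharkovsky's theorem for continuous interval maps.

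\textbf{Convergence and divergence.} For $\eta\in(0,1/2]$ and for $\eta\in[7.6,8.4]$, an elementary computation yields $|\bar\Phi_\eta'(0)|<1$, so $0$ is an attracting hyperbolic fixed point of $\bar\Phi_\eta$. The local linearization theorem then provides a neighbourhood of $0$ on which $\bar\Phi_\eta$ is a strict contraction; its size can be made explicit by bounding $|\sigma-1/2|$ and $|\sigma'|$ on a symmetric interval around $0$, and this explicit bound is used to certify that $-0.1$ lies in the basin. For $\eta=2$, writing $\Phi_{2,2}(x)=x(1-8\sigma(2x^2))$ and using $\sigma(u)>1/2$ for $u>0$ yields $|\Phi_{2,2}(x)|>3|x|$ for $x\neq 0$; similarly $|\Phi_{2,1}(y)|=|y|\,|1-\tfrac{1}{2}\sigma(y^2/8)|>\tfrac{1}{2}|y|$. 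Composing these bounds gives $|\bar\Phi_2(x)|>\tfrac{3}{2}|x|$ for every nonzero $x$, so every orbit escapes to infinity.

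\textbf{Periodic cycle at $\eta=3.9$ and chaos at $\eta=6.1$.} For the $8$-cycle I would first locate candidate points $c_0,\dots,c_7$ numerically as approximate fixed points of $\bar\Phi_{3.9}^{\,8}$, then certify them rigorously with a Newton-Kantorovich-type argument (equivalently, interval arithmetic) on small enclosing boxes, and finally compute the multiplier $\prod_{j=0}^{7}\bar\Phi_{3.9}'(c_j)$ with validated numerics to check that its modulus is strictly less than $1$. This gives both existence and asymptotic stability of the orbit, and a quantitative lower bound on its basin covering $0.1$. For Li-Yorke chaos at $\eta=6.1$ I would invoke the following corollary of Sharkovsky's theorem: a continuous interval map admitting a periodic orbit of period not a power of $2$ is Li-Yorke chaotic. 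Locating and certifying a period-$3$ orbit of $\bar\Phi_{6.1}$ inside a forward-invariant interval $I$ therefore produces the uncountable scrambled set $S_e\subset I$; the parallel argument applied to the conjugate composition $\Phi_{\eta,2}\circ\Phi_{\eta,1}$ (or equivalently transporting $S_e$ through $\Phi_{\eta,2}$) yields $S_o$.

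\textbf{Main obstacle.} The first two items reduce to calculus on closed-form expressions. The real difficulty lies in the last two: the compound map $\bar\Phi_\eta$ is not explicit, so the existence of the specific attracting $8$-cycle at $\eta=3.9$ and of a non-power-of-two periodic orbit at $\eta=6.1$ must be certified by validated numerics rather than by closed-form identities. Constructing a forward-invariant interval of positive length that contains the required periodic orbit for $\eta=6.1$, and quantifying the basin of the attracting $8$-cycle tightly enough to include $0.1$, are the single most delicate steps and determine the overall tightness of the argument.
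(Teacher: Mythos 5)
Your overall strategy mirrors the paper's: reduce to the composed interval map $\bar\Phi_\eta$, linearize at $0$ for the attracting regimes, give a magnitude bound for divergence, and locate periodic orbits numerically to certify asymptotic stability (via the multiplier test) and Li-Yorke chaos (via a period-$3$ orbit). Two remarks are worth making. First, your divergence argument at $\eta=2$ is actually cleaner than the paper's: by writing $\Phi_{2,2}(x)=x\big(1-8\sigma(2x^2)\big)$ and using $\sigma>1/2$ away from $0$, you get an analytic lower bound $|\bar\Phi_2(x)|>\tfrac{3}{2}|x|$, whereas the paper relies on numerical evidence in a figure to claim $|\bar\Phi_2(x)|>2|x|$. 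Your closed form $\bar\Phi_\eta'(0)=(1-\eta/8)(1-2\eta)$ is also a useful unifier that the paper leaves implicit.

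However, there is a genuine error in the $\eta=3.9$ step. The sequence $(c_n)$ in the statement is an $8$-periodic trajectory of the gradient descent iterates $(x_n)$. Since $\bar\Phi_\eta$ advances the iterates by two steps, an $8$-periodic $(x_n)$ corresponds to a \emph{$4$}-cycle of $\bar\Phi_\eta$, i.e.\ to fixed points of $\bar\Phi_\eta^{(4)}$. You instead propose locating $c_0,\dots,c_7$ as fixed points of $\bar\Phi_{3.9}^{(8)}$: if these $8$ points form a single orbit of $\bar\Phi_\eta$, the resulting gradient descent trajectory has period $16$, not $8$, so you would be proving a different statement. The fix is straightforward — find the $4$-cycle of $\bar\Phi_\eta$ (the paper locates it near $\{-1.35,\,5.92,\,-1.57,\,7.04\}$), verify $\big|\prod_j \bar\Phi_\eta'(c_{2j})\big|<1$, and then extend to the full $8$-periodic sequence by applying the other map $\Phi_{\eta,\cdot}$ once — but as written the step does not establish the claimed periodicity. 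A secondary, more minor gap is that for $\eta\in[7.6,8.4]$ your local-linearization argument does establish that $0$ is attracting, but the claim that the basin contains $-0.1$ needs an explicit radius estimate which you describe but do not carry out; the paper also does not certify this analytically and falls back on simulation evidence.
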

We substantiate those observations by combining numerical evidence from simulations with
classical theorems from discrete dynamical systems theory. In particular, we exhibit a
period-$3$ orbit of $\bar \Phi_\eta$, which implies the chaotic behavior by the
Period-Three theorem of \citet{li1975period-t}, by examining the functions $I - \bar
\Phi_\eta$ and $I - (\bar \Phi_\eta)^{3}$.

\begin{proof}{Evidence and proofs substantiating Observation~\ref{obs:gd_behaviours}.}
 \phantom{h} 
  \\
\underline{$\eta \leq 1/2$ or $\eta \in [7.6, 8.4]$}. 
The second derivative of the map $f : x \mapsto \log ( 1 + e^{x^2 / 2} )$ is the function
 $( 1 + e^{-x^2 / 2} + x^2 e^{-x^2/2}) / (1 + e^{-x^2/2})^2$, which takes values in $[1/2, 1.31)$. Therefore
$f_2(x) = f(2 x)$ is $2$-strongly-convex and $(4 \times 1.31)$-smooth; similarly, $f_1(x) =
  f( x / 2)$ is $(1 / 8)$-strongly-convex and $(1.31 / 4)$-smooth. 
   Therefore, using the same derivation as in Example~\ref{example:GD}, for any $n\in \N$,
\begin{equation*}
  \|x_{2n+3}\|^2 \leq ( 1 - 2\eta \mu / 4 + \eta^2  L / 4) \|x_{2n +2 }\|^2
  \leq ( 1 - 2\eta (4 \mu) + \eta^2 (4 L)\big)
  ( 1 - 2\eta  \mu / 4 + \eta^2 L  / 4) \|x_{2n + 1}\|^2 \,,
\end{equation*}
where $\mu= 1/2$ and $L \leq 1.31$. The value of the factor is less than $0.3$ for the
step-size $\eta = 1/2$. Therefore the sequence converges to $0$, regardless of the initial
point.

Note moreover that $0$ is always a fixed point of $\bar \Phi_\eta$, and that $|(\bar
\Phi_\eta)'(0)| < 1$ for any $\eta \in [7.6, 8.4]$. In that case, $0$ is an attracting
fixed point, proving the claim. Simulations show that $-0.1$ belongs to the basin of
attraction of that point.

\underline{$\eta = 2$}.
For $\eta = 2$, we have $|\bar \Phi_\eta(x)| > 2 |x|$ for any $x \in \R\setminus\{0\}$
(see Figure~\ref{fig:eta_2});
therefore the trajectory diverges exponentially fast to $\infty$ from any initial point.
\begin{figure}[h]
  \center
  \includegraphics[width=.5\textwidth]{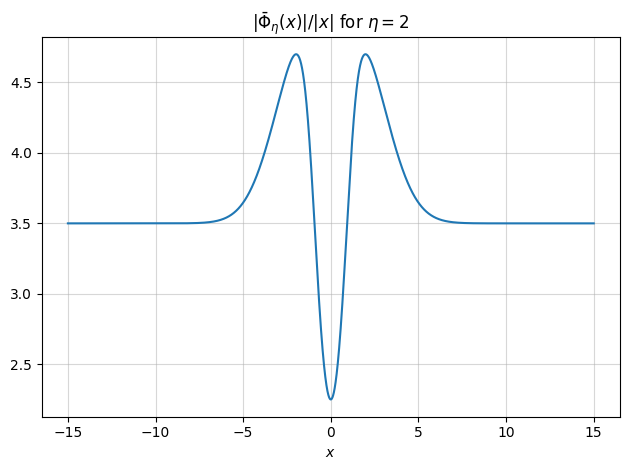}
  \caption{Evidence that $\bar \Phi_\eta(x) > 2 |x|$ for all $x \in \R$, when $\eta=2$.}
  \label{fig:eta_2}
\end{figure}

\underline{$\eta = 3.9$}. 
The map $\psi = (\bar \Phi_\eta)^{(4)}$ admits multiple fixed points; see
Figure~\ref{fig:eta_39}. For example, Newton's method for computing fixed points 
(i.e., the sequence defined by $x_{n+1} = x_n - \psi(x_n) / \psi'(x_n)$), 
initialized at $x=-0.1$, converges numerically
to a point $x_\infty \approx -1.35$; its orbit under $\bar \Phi_\eta$ is the cycle with
approximate values $\{
-1.35, 
5.92, 
-1.57, 
7.04
\}$.
% -1.3453777729546104
% 5.923712693362441
% -1.571830092167055
% 7.047197154567892
% -1.3453777729546115
The product of $(\bar \Phi_\eta)'(u)$ over $u$ in this cycle is approximately $-0.26$, 
% -0.2632253267046899
with absolute value less than $1$. Therefore, by the next result, this cycle is
asymptotically stable for $\bar \Phi_\eta$. 
\begin{theorem}[Prop. 20 Chap. 5 in \citet{Block2006Dynamics}] 
  Let $P = \{x_1, \dots, x_p\}$ be a periodic orbit of a map~$\Phi$, and assume that $\Phi$
  is differentiable at all $x_i \in P$. If $| \Phi'(x_1) \dots \Phi'(x_p)| < 1$, then $P$
  is asymptotically stable.
\end{theorem}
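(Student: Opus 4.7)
The plan is to reduce the asymptotic stability of the entire periodic orbit $P$ to the local attractivity of each $x_i$ as a fixed point of the $p$-fold composition $g := \Phi^{(p)} = \Phi\circ\cdots\circ\Phi$ ($p$ factors). Each $x_i \in P$ satisfies $g(x_i)=x_i$, and since $\Phi$ is assumed differentiable at every orbit point, the pointwise chain rule applies iteratively to yield
\[
g'(x_i) \;=\; \Phi'(x_i)\,\Phi'(\Phi(x_i))\cdots\Phi'(\Phi^{(p-1)}(x_i)) \;=\; \prod_{j=1}^{p}\Phi'(x_j),
\]
where the right-hand side is independent of $i$ by cyclic invariance of the product. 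Thus the hypothesis is exactly that each $x_i$ is a fixed point of $g$ with multiplier $\lambda := |g'(x_i)| < 1$.

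Next I would invoke the classical one-dimensional local attractor principle: a fixed point $x^\star$ of a map that is differentiable at $x^\star$ with $|g'(x^\star)|<1$ admits an open neighborhood on which all forward iterates of $g$ converge to $x^\star$. I would derive this directly from the Landau characterization of the derivative rather than from the mean value theorem. Given any $r\in(\lambda,1)$, set $\varepsilon = r-\lambda>0$; by definition of $g'(x_i)$ there exists $\delta_i>0$ such that for all $y$ with $|y-x_i|<\delta_i$,
\[
|g(y)-x_i| \;=\; |g(y)-g(x_i)| \;\leq\; (\lambda+\varepsilon)\,|y-x_i| \;=\; r\,|y-x_i|.
\]
Hence the open interval $U_i=(x_i-\delta_i,x_i+\delta_i)$ is mapped into itself by $g$ as a strict contraction with factor $r$, and iterating gives $g^n(y)\to x_i$ geometrically for every $y\in U_i$.

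Finally, I would assemble these local neighborhoods into the open set $U:=U_1\cup\cdots\cup U_p$ required by the definition of asymptotic stability stated in the paper. Any $x\in U$ lies in some $U_i$, so $\Phi^{(pn)}(x)=g^n(x)\to x_i\in P$, exactly matching the definition of asymptotic stability used in the excerpt.

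The main subtlety to handle carefully is that $\Phi$ is only assumed differentiable at the points of $P$, not on a neighborhood, so standard tools such as the mean value theorem or continuity of $\Phi'$ are unavailable. The argument above sidesteps this by using only the pointwise $\varepsilon$--$\delta$ definition of the derivative, which is what both the chain-rule computation of $g'(x_i)$ (each factor only requires differentiability of $\Phi$ at the corresponding orbit point) and the contraction estimate $|g(y)-x_i|\leq r\,|y-x_i|$ actually need. That single observation is what allows the entire argument to go through under the minimal differentiability hypothesis stated.
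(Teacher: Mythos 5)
The paper does not prove this statement at all—it is imported verbatim from \citet{Block2006Dynamics} (Prop.~20, Chap.~5)—so there is no internal proof to compare against. Your argument is correct and is essentially the standard one: reduce to the fixed points of the $p$-fold iterate $g=\Phi^{(p)}$, compute $g'(x_i)=\prod_{j=1}^p\Phi'(x_j)$ by the pointwise chain rule, and apply the attracting-fixed-point criterion; your use of the $\varepsilon$--$\delta$ definition of the derivative, rather than the mean value theorem, correctly handles the fact that $\Phi$ is only assumed differentiable at the orbit points, and the union $U=U_1\cup\dots\cup U_p$ meets exactly the definition of asymptotic stability stated in the paper.
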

 
This shows that there exists an asymptotically stable $4$-cycle for $\bar \Phi_\eta$,
i.e., that the sequence $(x_{2n+1})$ will converge to that cycle when initialized in the
appropriate (open) set. The result for the whole sequence follows by noting that
$x_{2n+2} = \Phi_{\eta, 1}(x_{2n+1})$ for any $n\in \N$, and that $\Phi_{\eta, 1}$ is
continuous. 
 
\begin{figure}[H]
  \center
  \subfigure[\phantom{x}Plot of $I - (\bar \Phi_\eta)^{(4)}$ for $\eta = 3.9$]{
    \includegraphics[width=.45\textwidth]{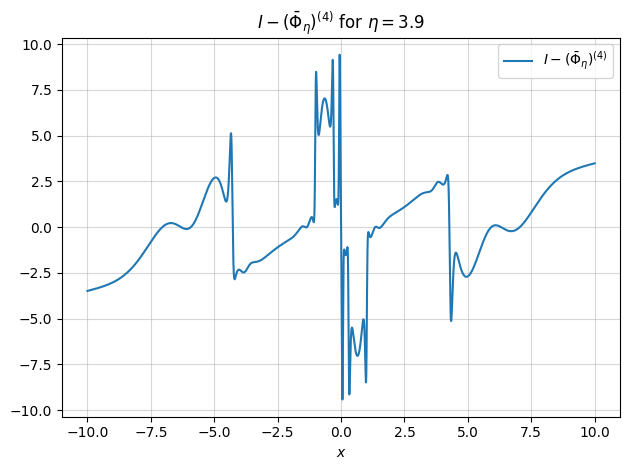}
    \label{fig:eta_39}
  }   
  \quad
  \subfigure[\phantom{x}Plot of $I - (\bar \Phi_\eta)^{(3)}$ for $\eta = 6.1$]{
    \includegraphics[width=.45\textwidth]{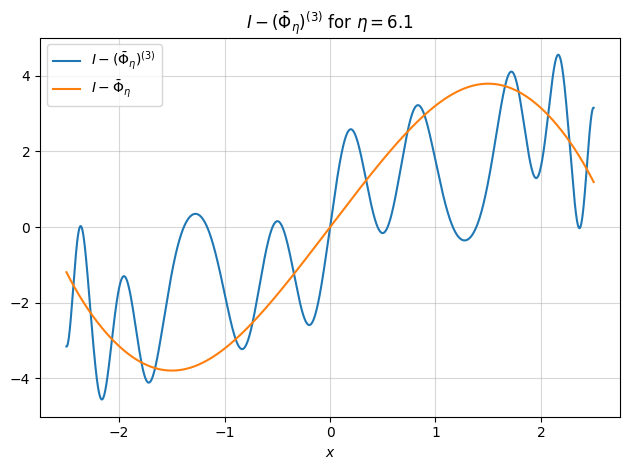}
    \label{fig:eta_61}
  }
\caption{Evidence of the existence of periodic points for $\bar \Phi_\eta$.
Each intersection of the plotted curves with the $x$-axis corresponds to a fixed point of
the iterated function, and thus of a periodic orbit for $\bar \Phi_\eta$.
  A fixed point of $\Phi_\eta^{(3)}$ that is not a fixed point of $\Phi_\eta$ has period
  $3$; the existence of such points implies chaos.}
  \label{fig:fixed_points}
\end{figure}

\underline{$\eta = 6.1$}.
 The map $\bar \Phi_\eta$ is an interval map on $[-2.5, 2.5]$. There exists a $3$-periodic
point $x \approx 0.20$; its orbit is approximately $ \{-0.10, \, 0.20,
\, 0.04\} $. (See Figure~\ref{fig:eta_61}.)
Therefore, by the Period-Three theorem \citep{li1975period-t}, there exists
an uncountable scrambled set $S$ in $[-2.5, 2.5]$ for $\bar \Phi_\eta$.
This corresponds to the odd times of the gradient descent trajectory. The result for
even times follows again by considering the fact $x_{2n+2} = \Phi_{\eta, 1}(x_{2n+1})$, and
that $\Phi_{\eta, 1}$ is continuous.
\qed
\end{proof}

\subsection{Iterated function systems and star-shaped attractors.}\label{sec:starAttractors}
Another interesting phenomenon we observe for time-varying periodic functions is the
convergence to a star-shaped set.
\begin{example}[Periodic time-varying problems with star-shaped limit sets]
  \label{example:starAttractor}
We consider again a $2$-periodic sequence as defined in
\eqref{eq:exp}, here with dimension $d=2$ and
  \begin{align*}
    A_1 =  \begin{bmatrix}
             3/4 & 0 \\
             0           & 5
           \end{bmatrix}\,,
    \qquad A_2 =   \begin{bmatrix}
                     5 & 1           \\
                     1 & 3/4
                   \end{bmatrix}.
  \end{align*}
We illustrate the convergence behavior of this example in Figure~\ref{fig:starAttractor}.
Similarly to the previous example, we observe that gradient descent converges to $0$
for small step-sizes (in Figure~\ref{convergence}, for $\eta \leq 0.45$), then diverges to $\infty$
for larger step-sizes (at $\eta = 0.5$). 
An interesting phenomenon occurs when we further increase the step size: the gradient 
descent iterates start to fill a star-shaped set centered at the origin, 
(see Figure~\ref{star}). That is, a set $\cS$ such that for any point $x \in \cS$, the line segment $[x,0] $ is contained within the set $\cS$. 
These star-shaped sets appear within a relatively small range of step-size choices;
outside this range, we observe divergence.
 
  \begin{figure}[h]
    \centering
    \subfigure[\phantom{x}Star-shaped limit set.]{\includegraphics[scale=0.38]{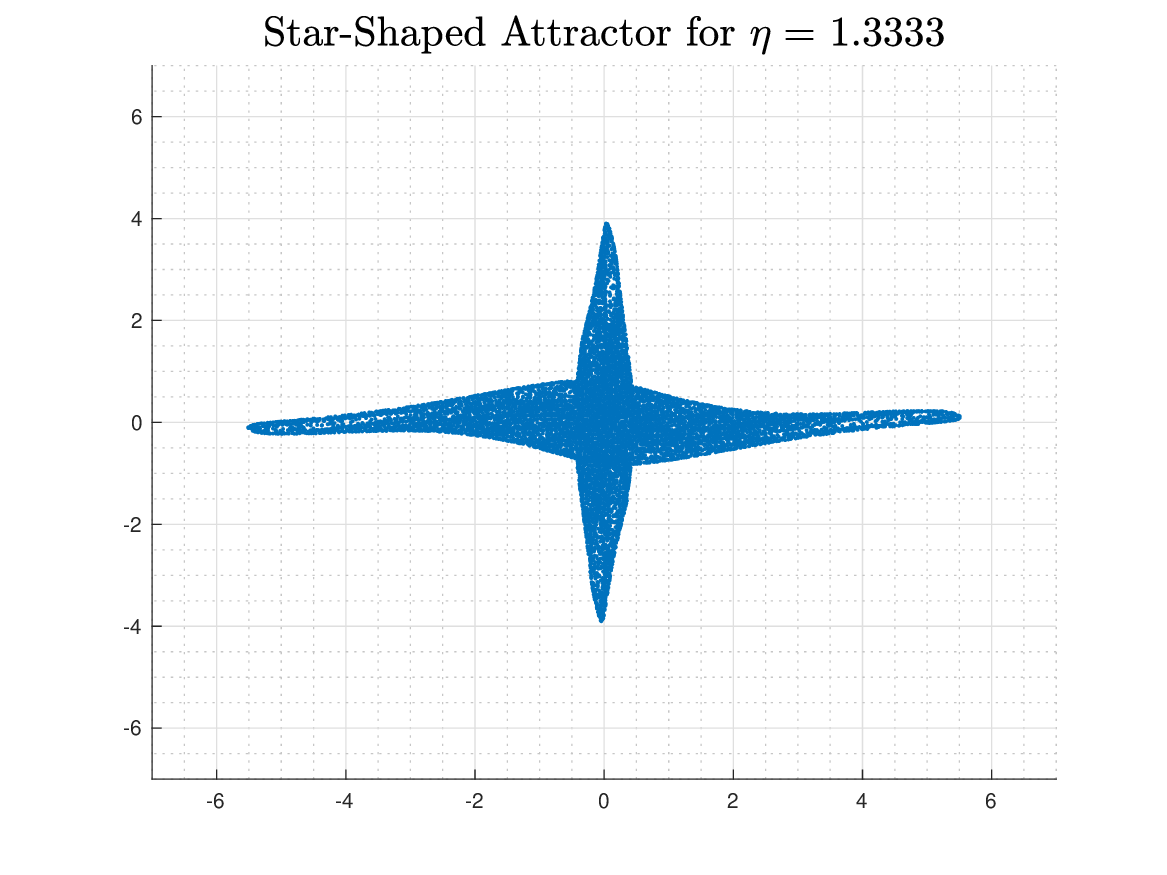}\label{star}}\quad
    \subfigure[\phantom{x}Convergence in norm distance.]{
      \includegraphics[scale=0.38]{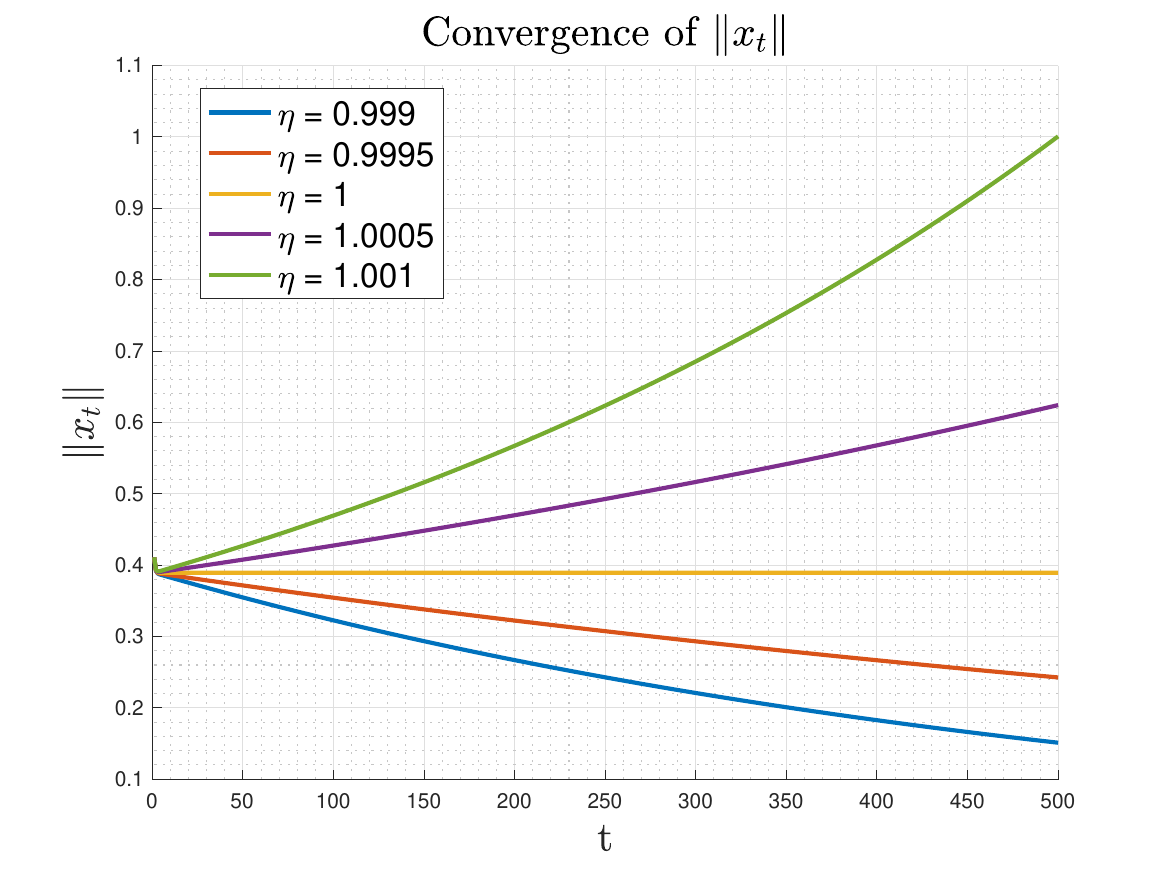}
      \label{convergence}
}  
 \caption{  Illustration of Example \ref{example:starAttractor}. The initial $x_0$ are sampled    uniformly at random from the set $[-500,500]^2$. For Figure \ref{star}, we sample $n =     1000$ initial $x^{(i)}_0, i \in [n]$. The plot shows $x^{(i)}_{n/2}, \dots,     x^{(i)}_{n}$ for each sample $i \in [n]$. Figure \ref{convergence} is based on the $n =     100$ samples for the initial $x^{(i)}_0, i \in [n]$. The plot shows the average convergence over these $n$ samples    $\frac{1}{n}\sum_{i=1}^n \tnorm{x_t^{(i)}}$.    }\label{fig:starAttractor}
  \end{figure}
\end{example}
\smallskip
A complete theoretical analysis of these observations is beyond the scope of this paper. However, we provide a link to existing results in the context of  \emph{ linear} iterated function systems (IFS)
 by \citet{Le_niak_2022} and \citet{BARNSLEY20113124}.    To relate our observations and
these theoretical results, we need to introduce some notation and terminology. As before,
let $\Phi_{\eta,i}:\RR^{d} \rightarrow \RR^d$ denote the gradient descent operator.
 Define the IFS as $\cL = (\RR^{d}; \Phi_{\eta,i}, i \in [k])$
and its one-parametric version $\cL_\tau = (\RR^{d}; \frac{1}{\tau}\Phi_{\eta,i}, i \in
[k])$. That is, the IFS $\cL$ is a collection of $k$ functions  $\Phi_{\eta,i}$ from a (metric) set $\RR^d$ onto itself.  
 We let $\cF^{\Phi} = \{ \Phi_{\eta,i}, i \in [k]\}$ denote the set of all
functions $\Phi_{\eta, i}$. Let $\cR$ denote the set of all non-empty compact subsets of
$\RR^{d}$. Recall the Hutchinson operator: $\cF:\cR \rightarrow \cR$, with \[\cF(\cB) = \bigcup_{i \in [k]} \Phi_{\eta,i}(\cB).\]
A non-empty set $\cA \subset\RR^{d}$ is an attractor of the IFS if
\begin{align*}
  \cF(\cA) = \cA\qquad \text{and}\qquad\forall \cB \in \cR \quad  \lim_{t \rightarrow \infty} \cF^{(t)}(\cB) = \cA,
\end{align*}
where $\cF^{(t)}$ denotes the $t$-fold composition of the Hutchinson operator $\cF$. The limit is taken with respect to the Hausdorff distance. Note
that the limit set of the GD operator $\bar \Phi_{\eta}$ (recall~\eqref{def:GDoperator}) for periodic time-varying functions is a subset of the attractor of the IFS $\cL = (\RR^{d}; \Phi_{\eta,i}, i \in [k])$. 
 Further, denote by $\rho(\cF^{\Phi})$, the joint spectral radius of the IFS (see e.g.
\citet{Rota1960ANO}). In \citet[Theorem 2]{BARNSLEY20113124} and \citet{Le_niak_2022}, it
is shown that an irreducible one-parametric linear IFS $\cL_{\tau}$  has a star-shaped, central-symmetric attractor if
$\tau = \rho(\cF^{\Phi})$.  We want to emphasize that these results only apply to \emph{linear} IFS, while our observations are due to non-linear IFS. Hence, the theory provided in  \citet{BARNSLEY20113124} and \citet{Le_niak_2022} does not explain our observation, however, it indicates an interesting future research direction and reveals an interesting link between time-varying optimization problems and IFS. 

\section{Discussion, Limitations and Future Work}\label{sec:limitations}
 
We studied the problem of tracking equilibria in time-varying variational inequality problems. Positive results include sublinear tracking error bounds for tame instances and contractive algorithms. These bounds are proven to be tight for this class of algorithms. Other positive results concern the case of periodic instances with strongly monotone operators, where we propose meta-algorithms for aggregating multiple experts. These algorithms result on logarithmic or even constant tracking bounds. Finally, we explore gradient-type methods with constant stepsize for periodic instances, as an agnostic approach to tackle periodic instances with unknown periods. Our results show an intricate landscape where convergence, chaos, and nonconvergence may arise, making the use of these algorithms difficult even in simple cases.

 We discuss next some of the limitations in this work.
The algorithm of Section~\ref{sec:metaPeriodic} is computationally more efficient than
the algorithm in Section~\ref{sec:unconstrained_lip}. However, in contrast to the latter,
it requires a bounded domain and provides (only) logarithmic guarantees. It would be an
interesting question for future work to combine the merits of both approaches.
Another limitation becomes clear when considering the special case of games: In this
case, the aggregation schemes discussed in this section may be
unnatural, as they require coordination between the players. Decoupling the aggregation
in general cases would be an interesting direction for further investigation.

We also remark that both results in Section~\ref{sec:metaPeriodic} and
\ref{sec:unconstrained_lip} require deterministic feedback.
An interesting direction is to consider more robust models, in which we would
observe the inexact values of the operators. Examples of such models could be a
stochastic feedback model or time-varying VI problems that admit periodic behavior only in the
limit.
These feedback models would capture application examples, such as fluctuating markets,
in a more realistic setting. Inexact feedback models are inherently more challenging; the convergence guarantees in this setting are weaker, and the aggregation step would be more difficult to
implement.

 Another interesting future research direction emerges from the relation of time-varying games to stochastic games. Stochastic games are a broad class of time-varying games, where the state of the game evolves depending on both the players' actions as well as random events. This setting contains both tame time-varying games and periodic ones, but it is a much broader class. Understanding the class of stochastic games for which tracking is attainable is then a very interesting and important direction for future research.

 % %%%%%%%%%%%%%%%%%%%%%%%%%%%%%%%%%%%%%%%%%%%%%%%%%%%%%%%%%%%%%%%%%%%%%

 \section*{Acknowledgments.}
We thank Tim van Erven for his helpful and insightful comments on an earlier version of
this manuscript. We also thank Panayotis Mertikopoulos for his feedback on the related
literature for Section \ref{sec:contraction}. Further, we thank Mathias Staudigl for
some discussions on time-varying games. Work was done while Sachs was supported by the
Netherlands
Organization for Scientific Research (NWO) under grant number
VI.Vidi.192.095, now Sachs is supported by the European Union - Next Generation EU, PRIN 2022, CUP:J53D23007170001. Guzm\'an's research was partially supported by INRIA Associate Teams project, ANID FONDECYT 1210362 grant and ANID Anillo ACT210005 grant.
 
% CASE 1: BiBTeX used to constantly update the references (while the paper is being
% written).
 
% \bibliographystyle{Natbib} % outcomment this and next line in Case 1
\bibliography{cycling} % if more than one, comma separated

% CASE 2: BiBTeX used to generate mypaper.bbl (to be further fine tuned)
% \input{mypaper.bbl} % outcomment this line in Case 2

\ \\

\appendix 

\section{Proofs for Section \ref{sec:introduction}.}\label{sec:appendixIntro}
\begin{lemma}[Lower Bound]
\label{LB:tracking}
For any algorithm $\cA$, there exists a sequence $(F_t)_{t \in [T]}$ such that
$
\track(\cA) \geq \frac{1}{4}\sum_{t=1}^T \norm{Z_t^\star - Z_{t+1}^\star}^2
$
and
$
\sum_{t=1}^T \norm{Z_t^\star - Z_{t+1}^\star}^2 = \frac{1}{4}D^2T
$.
\end{lemma}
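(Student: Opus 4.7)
The plan is to use an adaptive one-dimensional adversarial construction that balances a prescribed path length with an irreducible tracking error at every other time step. I take $\cZ = [-D/2, D/2]$, so that the diameter of $\cZ$ is $D$, and pick affine operators $F_t(x) = x - Z_t^\star$; each such $F_t$ is $1$-strongly monotone, continuous on $\cZ$, and admits $Z_t^\star$ as its unique VI solution. The target solutions are chosen to alternate between the center of $\cZ$ and its endpoints, with the adversary adapting to the learner only at the endpoint steps.

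More precisely, the adversary sets $Z_t^\star = 0$ whenever $t$ is even, and at each odd $t$ waits until the online algorithm commits to $Z_t$ before picking $Z_t^\star = -(D/2)\sign(Z_t) \in \{-D/2, +D/2\}$, breaking ties arbitrarily. This is a valid online construction since $F_t$ is revealed only after $Z_t$ is played. By construction, every consecutive pair $(Z_t^\star, Z_{t+1}^\star)$ has one coordinate equal to $0$ and the other equal to $\pm D/2$, so $\norm{Z_t^\star - Z_{t+1}^\star} = D/2$. Extending the pattern to the dummy $Z_{T+1}^\star$ yields $\sum_{t=1}^T \norm{Z_t^\star - Z_{t+1}^\star}^2 = T\cdot D^2/4$, which is the second claim.

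For the tracking error, the sign-flipping rule at odd times ensures $\norm{Z_t - Z_t^\star} \geq D/2$ for every odd $t \in [T]$; the $\lceil T/2 \rceil \geq T/2$ odd indices therefore contribute $\track_T(\cA) \geq (T/2)(D/2)^2 = TD^2/8$. Combining with $\sum_{t=1}^T \norm{Z_t^\star - Z_{t+1}^\star}^2 = TD^2/4$ gives $\track_T(\cA) \geq \tfrac{1}{2} \sum_{t=1}^T \norm{Z_t^\star - Z_{t+1}^\star}^2$, which is strictly stronger than the $1/4$ factor asked for in the first claim.

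The only design subtlety worth flagging (rather than a genuine obstacle) is that the prescribed path length $TD^2/4$ essentially forces consecutive solutions to stay exactly $D/2$ apart, so the adversary is not free to execute maximal-diameter jumps. By interleaving a fixed center with two symmetric endpoints, the adversary nonetheless retains a genuinely binary adaptive choice at half the time steps, and this binary freedom, combined with the fact that the algorithm must commit to $Z_t$ before observing $F_t$, is what forces a linear-in-$T$ tracking error even though each operator $F_t$ is strongly monotone.
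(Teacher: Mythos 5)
Your proof is correct, and it is a clean instance of the same high-level strategy as the paper: an adaptive one-dimensional adversary that moves the minimizer of a strongly convex quadratic among a three-point set $\{-D/2, 0, +D/2\}$ after seeing the learner's play. The concrete construction differs, though, in a way worth noting. The paper adapts the choice of $Z_t^\star$ at \emph{every} step, which forces it to simultaneously guarantee a per-step tracking gap and control the per-step path increment; the resulting case analysis on $Z_{t-1}^\star$ and the position of $Z_t$ is somewhat delicate (in particular the subcase $Z_{t-1}^\star=-1$, $Z_t\in(0,1/2)$, $Z_t^\star=1$ produces a path increment of size $2$ but only guarantees a tracking gap slightly above $1/2$, which does not cleanly satisfy the claimed per-step ratio). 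Your alternating scheme --- snap to $0$ at even steps, adapt to $\mp(D/2)\mathrm{sign}(Z_t)$ at odd steps --- sidesteps this entirely: every consecutive jump is exactly $D/2$ by construction, so the path-length identity is immediate, and the adaptive freedom is concentrated at odd steps where it yields an unconditional gap of at least $D/2$. This decoupling makes the argument shorter, avoids the path-length bookkeeping, and actually gives the stronger constant $1/2$ in place of the stated $1/4$. The only thing you give up is a guaranteed tracking error at even steps, but that is harmless since odd steps alone already deliver a linear-in-$T$ lower bound.
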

\begin{proof}{Proof.}
We show this lower bound by defining a sequence $(Z_t^\star)_{t \in [T]}$ with $Z^\star_t
\in \cZ$. The claim follows by choosing $F_t$ as the gradient of a convex function $f_t$
with a unique minimizer at  $Z_t^\star$. This implies that $Z_t^\star$ solves the
\eqref{VIP}. We assume that $\cZ = [-1,1] \subseteq \RR$ and choose $Z_t^\star \in
\{-1,0,1\}$. 
 For any $t\in \NN$, we distinguish two cases: $Z_t \geq 0$ and $Z_t< 0$. First assume
$Z_t \geq 0$. We distinguish again two cases: (1) $Z_{t-1}^\star \in \{0,1\}$ and (2)
$Z_{t-1}^\star = -1$. For case (1), we set $Z_t^\star = -1$ and directly obtain $\norm{Z_t
- Z_t^\star}^2 \geq \frac{1}{4} \norm{Z_{t-1}^\star - Z_t^\star}^2$. For case (2), if $Z_t
\geq \frac{1}{2}$, we set $Z_t^\star = 0$ otherwise $Z_t^\star = 1$. In both cases, we
directly obtain $\norm{Z_t - Z_t^\star}^2 \geq \frac{1}{4} \norm{Z_{t-1}^\star -
Z_t^\star}^2$. 
We build $Z_{t-1}^\star$ using a symmetric construction when $Z_t <0$, which
completes the proof. 
\qed
\end{proof}

\section{Proofs for Section \ref{sec:contraction}.} 
\subsection{Proofs: Example \ref{sec:examplesContraction}}
 \label{appendix:generality}
We adopt an example by \citet{Karimi2016LinearCO}.
 Recall the time-varying game from \eqref{nonmonoGame} with $a \in [0,1]$
\begin{align*}%\label{eq:nonmonoGame-proof}
  \ell(x,y) = x^2 + 3 \sin^2(x) + a \sin^2(x) \sin^2(y) - y^2 - 3 \sin^2(y) .
\end{align*}
This game is not convex-concave. Its gradients with respect to $x$ are for any $x, y$, 
\begin{align*}
  \nabla_x \ell(x,y) &= x + 3 \sin(x)\cos(x) + a \sin(x) \cos(x) \sin^2(y) = 2x +  \sin(2x) (3 + a \sin^2(y) ) / 2 \, , \\
  \nabla_y \ell(x,y) &= - y - 3 \sin(y)\cos(y) + a \sin(y) \cos(y) \sin^2(x) = - 2y - \sin(2x) (3 - a \sin^2(x)) / 2  \, . 
\end{align*}
 
We now show that the game defined in \eqref{nonmonoGame} is
\begin{enumerate}
  \item not a monotone game,
  \item satisfies the Restricted Secant Inequality (RSI) in $x$ for any fixed $y$ (and in $y$ for any fixed $x$, respectively), i.e., there exists a $\mu_{\rm RSI}>0$ such that \[ \forall x \in \RR: \; \dprod{\nabla_x\ell(x,y),x - x^\star} \geq \mu_{\rm RSI}\norm{x - x^\star}^2 ,\]
  where $x^\star$ denotes the projection of $x$ onto the set of minimizers. 
  \item GDA satisfies the contraction assumption.
\end{enumerate}
\begin{proposition}\label{prop:RSI}
For any $x, y \in \RR$, the functions $\ell(\, \cdot\, ,y)$ and $\ell(x, \, \cdot\,)$
satisfy the RSI with $\mu = 1/4$.
\end{proposition}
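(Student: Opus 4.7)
My plan is to reduce the RSI to an elementary trigonometric inequality, which I verify by a case split on the magnitude of the variable. First, I identify minimizers. Writing $\ell(x,y) - \ell(0,y) = x^2 + (3 + a\sin^2 y)\sin^2 x$, the right-hand side is strictly positive for $x\neq 0$, so $x^\star = 0$ is the unique minimizer of $\ell(\cdot,y)$. Symmetrically, $\ell(x,y) - \ell(x,0) = -[y^2 + (3 - a\sin^2 x)\sin^2 y]$ is strictly negative for $y\neq 0$, so $y=0$ is the unique maximizer of $\ell(x,\cdot)$; equivalently it is the minimizer of $-\ell(x,\cdot)$, the natural object for the $y$-player in the saddle-point game.

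Second, I unfold the RSI for the $x$-coordinate. Substituting the expression $\nabla_x\ell(x,y) = 2x + A(y)\sin(2x)$, with $A(y) := (3 + a\sin^2 y)/2 \in [3/2, 2]$, the target inequality $\langle \nabla_x\ell(x,y), x\rangle \geq (1/4)\,x^2$ becomes
\[
  \tfrac{7}{4}\,x^2 + A(y)\, x\sin(2x) \;\geq\; 0.
\]
I then verify this by splitting on $|x|$. If $|x|\leq \pi/2$ then $|2x|\leq \pi$, so $\sin(2x)$ shares the sign of $x$ and $x\sin(2x)\geq 0$, making the inequality trivial. If $|x| > \pi/2$, I bound $|A(y)\sin(2x)| \leq 2$ using $A \leq 2$ and $|\sin|\leq 1$; hence $|A(y)\, x\sin(2x)| \leq 2|x|$, while $\tfrac{7}{4}\,x^2 \geq \tfrac{7\pi}{8}\,|x| > 2|x|$, where the last inequality uses $7\pi/8 \approx 2.749 > 2$. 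Combining, the target inequality holds in both cases.

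The same two-case argument applied to $-\nabla_y\ell(x,y) = 2y + B(x)\sin(2y)$, with $B(x) := (3 - a\sin^2 x)/2 \in [1, 3/2]$, completes the proof for the $y$-coordinate; since $B \leq 3/2 < 2$, the second case goes through with even more slack. The only real obstacle is a numerical alignment: the threshold $\pi/2$ is critical because $x\sin(2x) \geq 0$ fails just beyond it, while the crude bound $|A\sin(2x)|\leq 2$ suffices precisely because $7\pi/8 > 2$. Were $A$ any larger than $7\pi/8$, one would need a finer estimate such as $|\sin(2x)|\leq |2x - \pi|$ on $[\pi/2, \pi]$; for the given range $a\in[0,1]$, no such refinement is required.
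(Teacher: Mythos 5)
Your proof is correct, and it takes a genuinely different route from the paper's. The paper reduces the RSI to the inequality
\[
 2 + \frac{\sin(2x)}{2x}\bigl(3 + a\sin^2 y\bigr) \;\geq\; \mu
\]
and then invokes the global lower bound $\sin(u)/u \geq -0.22$ (whose minimizer lies near $\pm 3\pi/2$, a fact that requires either a citation or solving a transcendental equation). This gives the much sharper constant $\mu \geq 2 - 4\cdot 0.22 = 1.12$, which of course dominates the claimed $1/4$. You instead split on $|x| \lessgtr \pi/2$: for small $|x|$ the cross term $x\sin(2x)$ is nonnegative so the bound is free, and for large $|x|$ the quadratic term dominates the crude estimate $|A(y)\sin(2x)|\leq 2$ because $7\pi/8 > 2$. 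Your argument is entirely elementary and self-contained, trading a weaker (but sufficient) constant for not needing the analytic minimum of the sinc function; the paper's argument is shorter once that minimum is granted and yields a tighter RSI constant. Both correctly identify $x^\star = 0$ and reduce the $y$-coordinate case to the same calculation with the smaller coefficient range $[1, 3/2]$. Your closing observation about $7\pi/8$ being the true threshold for the crude bound is accurate and a nice touch.
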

\begin{proof}{Proof.}
We want to show that
\begin{align}
\label{ineq:RSI}
  \big(2x +  \sin(2x) (3 + a \sin^2(y)) / 2\big)(x - x^\star) \geq \mu (x-x^\star)^2.
\end{align}
First note that for any $y$, $\ell(x,y)$ has a unique minimizer $x^\star = 0$. Further, for $x=0$, \eqref{ineq:RSI} holds for any $\mu $. Hence, assume $x \neq 0$.  Then \eqref{ineq:RSI} is satisfied if
\[ 
  2 +  \frac{\sin(2x)}{2x} (3 + a \sin^2(y))  \geq \mu ,
\]
 
Using the fact that $\sin(x) \in [-1,1]$ and that the minimum of $\sin(x)/x$ is attained
at $\{-\frac{3}{2}\pi, \frac{3}{2}\pi\}$  (see, e.g., \citet{Klen:2010aa}), 
$
  \sin(2x) / (2x) \geq - 0.22 .
$
Therefore, since $3 \leq (3 + a \sin^2(y)) \leq 4$ for any $a$ and $y$, we have
\[
  \frac{\sin(2x)}{2x} (3 + a \sin^2(y)) \geq - 0.88 \, , 
  \quad 
  \text{so}
  \quad 
  2 + \frac{\sin(2x)}{2x} (3 + a \sin^2(y)) \geq 1.12, 
\]
proving the claim.
\qed
\end{proof}
\begin{proposition}%[Non convexity]
\label{prop:nonConvex}
  The game defined in \eqref{nonmonoGame} is not convex and a fortiori not strongly convex. 
\end{proposition}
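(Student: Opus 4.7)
The plan is to refute convexity by producing a single point at which the Hessian of $\ell$ fails to be positive semidefinite; this will automatically rule out strong convexity as well, giving the ``\emph{a fortiori}'' clause for free.

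To keep the computation as short as possible, I will restrict $\ell$ to the one-dimensional slice $y = 0$. This choice is convenient because it simultaneously kills the cross term $a \sin^2(x)\sin^2(y)$ and the purely $y$-dependent pieces, so the slice collapses to the univariate function $g(x) \eqdef \ell(x,0) = x^2 + 3\sin^2(x)$. A direct calculation gives $g'(x) = 2x + 3\sin(2x)$ and hence $g''(x) = 2 + 6\cos(2x)$. Choosing the point $x = \pi/2$, for which $\cos(2x) = -1$, we get $g''(\pi/2) = 2 - 6 = -4 < 0$, so $g$ fails the second-order condition at $\pi/2$ and is therefore not convex on $\R$.

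Since the restriction of $\ell$ to the affine line $\{y = 0\}$ is not convex, the joint function $\ell$ cannot be convex as a function of $(x,y)$ on $\R^2$ either: restrictions of convex functions to affine subspaces are convex, so non-convexity of the restriction forces non-convexity of the whole function. Strong convexity implies convexity, so strong convexity is also excluded.

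There is no real obstacle here — the proof is essentially a one-line second derivative check plus the elementary observation that non-convexity passes from a restriction to the ambient function. The only care I would take in the write-up is to state that elementary restriction principle explicitly, so that the transition from the one-dimensional computation to the two-dimensional claim in the proposition is unambiguous.
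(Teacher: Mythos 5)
Your proof is correct and follows essentially the same route as the paper: the paper computes the partial second derivative $\partial^2_{xx}\ell(x,y) = 2 + 2\cos(2x)\bigl(3 + a\sin^2(y)\bigr)$ and observes it is negative at $(x,y)=(\pi/2,0)$, which is exactly the computation you perform after first slicing at $y=0$. The only stylistic difference is that you invoke the restriction-to-an-affine-line principle explicitly, which is a clean way to make the transition from the one-dimensional check to the two-dimensional conclusion.
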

\begin{proof}{Proof.}
  The second derivative with respect to $x$ evaluated at $x, y$ is 
  $
    2 + \cos(2x)(3 + a \sin^2(y)) \, .
  $
  which can be negative, e.g.\,at $x = \pi / 2$ and $y=0$.
\qed
\end{proof}

\begin{proposition}%[Contraction]
\label{prop:contraction}
Contraction is satisfied for problem \eqref{nonmonoGame} with GDA with $C = (1-\mu^2_{\rm RSI} / L^2)^{1 / 2}$ where $\mu_{\rm RSI}$ denotes the RSI parameter. 
\end{proposition}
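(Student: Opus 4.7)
The plan is to verify Definition~\ref{def:contraction} directly for the GDA update about the unique saddle point $(0,0)$, by combining the RSI bound of Proposition~\ref{prop:RSI} in each coordinate with a Lipschitz bound on the combined saddle-point operator. Writing $Z = (x,y)$, the saddle operator is $F(Z) = (\nabla_x \ell(x,y),\, -\nabla_y \ell(x,y))$, the GDA update is $T_\eta(Z) = Z - \eta F(Z)$, and the unique solution is $Z^\star = (0,0)$ with $F(Z^\star) = 0$.

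Expanding about $Z^\star$ gives
\[
\|T_\eta(Z) - Z^\star\|^2 = \|Z\|^2 - 2\eta\,\langle Z, F(Z)\rangle + \eta^2 \|F(Z)\|^2.
\]
For the cross term, Proposition~\ref{prop:RSI} applied to $\ell(\cdot,y)$ yields $x\,\nabla_x\ell(x,y) \geq \mu_{\rm RSI}\, x^2$. For the $y$-coordinate, the analogous RSI bound applied to $-\ell(x,\cdot)$---whose minimizer is also $0$---gives $-y\,\nabla_y \ell(x,y) \geq \mu_{\rm RSI}\, y^2$; the computation used to prove Proposition~\ref{prop:RSI} transfers symmetrically because $\nabla_y \ell$ has the same structural form (affine plus bounded-ratio sine term) as $\nabla_x \ell$. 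Summing the two coordinate bounds yields $\langle Z, F(Z)\rangle \geq \mu_{\rm RSI}\, \|Z\|^2$. For the quadratic term, $F$ is globally $L$-Lipschitz for some finite $L$ (the second partials of $\ell$ are bounded trigonometric polynomials), so $F(0,0)=0$ implies $\|F(Z)\|^2 \leq L^2 \|Z\|^2$.

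Putting these together, $\|T_\eta(Z) - Z^\star\|^2 \leq (1 - 2\eta\,\mu_{\rm RSI} + \eta^2 L^2)\|Z\|^2$, and choosing $\eta = \mu_{\rm RSI}/L^2$ minimizes the right-hand side and gives $C^2 = 1 - \mu_{\rm RSI}^2/L^2$, matching the claim. The main obstacle is not computational but conceptual: one must carefully reinterpret the RSI in the $y$-coordinate to account for the max-player's role, so that the resulting bound reads as positivity of $-y\,\nabla_y \ell(x,y)$ and combines correctly with the $x$-coordinate bound into an inner product with the saddle operator $F(Z)$. A secondary, purely technical point is pinning down $L$ explicitly from the bounded trigonometric expressions in $\nabla_x\ell$ and $\nabla_y \ell$, but this does not affect the abstract form of the contraction constant.
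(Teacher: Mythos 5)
Your proposal is correct and follows essentially the same route as the paper's proof: expand the squared GDA update about the saddle point, bound the inner-product term with the RSI in both coordinates (as in Proposition~\ref{prop:RSI}), bound the gradient-norm term by Lipschitzness, and optimize the step size. You are in fact slightly more careful than the paper's terse final sentence (``the result follows as before via $L$-smoothness''), since you explicitly pick $\eta = \mu_{\rm RSI}/L^2$, which is what actually yields the claimed constant $C^2 = 1 - \mu_{\rm RSI}^2/L^2$.
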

\begin{proof}{Proof.}
As for the strongly-convex case:
\begin{align*}
   \|x_2 - x^\star\|^2
    = \| x - x^\star - \eta \nabla_x \ell(x,y)\|^2
    = \|x - x^\star\|^2
    -2\eta \langle x - x^\star, \nabla_x \ell(x,y)  \rangle
    + \eta^2\|\nabla_x \ell(x,y)\|^2.
\end{align*}
By RSI
\begin{align*}
 \langle x-x^\star, \nabla_x \ell(x, y)  \rangle
        + \langle y-y^\star, -\nabla_y \ell(x, y)  \rangle 
        \geq \mu_{\rm RSI}\|x-x^\star\|^2 + \mu_{\rm RSI}\|y-y^\star\|^2,
\end{align*}
thus,
\begin{align*}
 \|x_2 - x^\star\|^2 + \|y_2 - y^\star\|^2
        \leq ( 1- 2 \eta  \mu_{\rm RSI})(\|x_1 - x^\star\|^2 + \|y_1 - y^\star\|^2)
        + \eta^2 \big(\|\nabla_x \ell(x,y)\|^2 + \|\nabla_y \ell(x,y)\|^2\big).
\end{align*}
The result follows as before via $L$-smoothness.
\qed
\end{proof}

\subsection{Tightness.}
\label{app:tightness}

\paragraph{Proof of Theorem~\ref{thm:tightness}.}
\begin{proof}{Proof.}
We now provide an example of a sequence of problems, with a $C$-contractive algorithm,
such that the bound given in Theorem~\ref{thm:contraction2track} is tight. Our example is
an elementary analysis of gradient descent on a sequence of $1$-dimensional quadratic
optimization problems, where the optima drift linearly.
Consider a sequence of quadratic functions over $\R$ defined by 
\[
f_t(x) = \frac{(x - c_t)^2}{2} \,; 
\]
for our example, let us assume that $(c_t)$ forms an arithmetic progression with common
difference $-b$, i.e., $c_{t+1} = c_t - b$ for all $t\geq 1$. Consider also the gradient
descent algorithm started at $x_1 \in \R$, and assume furthermore that $x_1 - c_1 \geq b
/ (1-C)$.

Then for any $t \in \R$, gradient descent started at $x_1 \in \R$ with step-size $(1-C)
\in (0, 1]$ is a $C$-contraction since for any $t \geq 1$, the update rule is
\[
x_{t+1} - c_t = x_t - (1-C) (x_t - c_t) - c_t = C (x_t - c_t) \,.
\]
Let us introduce the short-hand notation $d_t = x_t - c_t$. Then for any $t \geq 1$, 
\[
d_{t+1} = x_{t+1} - c_{t+1} 
= x_{t+1} - c_t + c_t - c_{t+1}
= Cd_t + b
\]
The sequence $(d_t)$ therefore forms an arithmetico-geometric progression and
so for any $t\geq 1$, using our assumption that $x_1 - c_1 \geq b / (1 - C)$,
\[
d_t - \frac{b}{1-C} = C^{t-1} \bigg(d_1 - \frac{b}{1-C}\bigg)
\geq 0
\,.
\]
Therefore, for any $t \geq 1$, the squared-error $d_t^2$ can be lower bounded as
\begin{align*}
d_t^2 &= 
\bigg(d_t - \frac{b}{1 - C} + \frac{b}{1 - C}\bigg)^2 \\
&= \bigg(d_t - \frac{b}{1 - C}\bigg)^2
+ \frac{b^2}{(1-C)^2}
+ 2 \bigg( d_t -\frac{b}{1 - C}\bigg)\frac{b}{1 - C} 
 \geq \frac{b^2}{(1-C)^2}
\,.
\end{align*}
Conclude by summing the inequality above over $t \in [T]$,
\[
\sum_{t=1}^{T} (z_t - c_t)^2 
\geq 
\frac{Tb^2}{(1-C)^2} \geq \frac{1}{(1-C)^2} \sum_{t=2}^T |c_{t}- c_{t-1}|^2 \,, 
\]
proving the claim. 
\qed
\end{proof}

\section{Proofs of main positive results.}

\subsection{Proof of Theorem \ref{thm:aggregation}}
\label{sec:appendix:proofThm3}
 
\begin{proof}{Proof.}
The setting for aggregation we consider here is an instance of Prediction with Expert
Advice, see \citet[Chapter 2]{cesa-bianchi2006predicti}, where the $K$ experts provide
the advice $Z_t^{(i)}$, the outcomes are the pairs $(F_t(z_t), z_t)$, and where the loss
is the function over {$\cZ$ and the dual space $(\cZ)^\star$, i.e., $\cL:\cZ \times ((\cZ)^\star \times \cZ)\rightarrow \RR$}:
\[
  \cL(a ; g, z) = \langle g, a \rangle + \frac{\mu}{2} \| a - z\|^2
  = \frac{\mu}{2} \Big\|a - z - \frac{1}{\mu} g \Big\|^2 - \frac{1}{\mu} \|g\|^2
  +  \langle z, g \rangle \,;
\]
we apply it to $(g, z) = (F_t(z_t), z_t)$, thus the predictions will be bounded by $D$
and the outcome vectors by $D + G / \mu$. This loss function, which is an instance of
the square loss, is $\lambda$--mixable for the mean (or, in other words,
$\lambda$--exp-concave)
for $\lambda \leq (4\mu(D + G/\mu)^2)^{-1}$, as long as $\|x + (1 / \mu) g\| \leq D + G / \mu$
and $\|a\| \leq D$. Playing the Exponential Weights updates from \eqref{eq:exp_weights}
in this context is equivalent to using Vovk's aggregating algorithm
\cite{vovk1995a-game-o}, which guarantees for $\lambda$--mixable losses that
 \[
  \sum_{t=1}^T \cL(Z_t ; F_t(Z_t), Z_t)
  \leq
  \min_{i \in [K]}
  \sum_{t=1}^T \cL(Z^{(i)}_t ; F_t(Z_t), Z_t)
  + \frac{\log K}{\lambda} \,. 
\]
This can be written as
\begin{align}\label{ineq:VovkRewritten}
  \sum_{t=1}^T
  \langle F_t(Z_t), Z_t \rangle
  \leq
  \min_{i \in [K]}
  \sum_{t=1}^T
  \langle F_t(Z_t), Z_t^{(i)}\rangle
  + \frac{\mu}{2} \| Z_t^{(i)} - Z_t\|^2
  + \frac{\log K}{\lambda} \,.
\end{align}
Now note that for any $u \in \cZ$,
\begin{align*}
  \langle \tilde F_t(Z_t^{(i)}), Z_t^{(i)} - u \rangle
  - \frac{\mu}{2} \| Z_t^{(i)} - u\|^2
   & =
  \langle F_t(Z_t) + \mu(Z_t^{(i)} - Z_t), Z_t^{(i)} - u \rangle
  - \frac{\mu}{2} \| Z_t^{(i)} - u\|^2 \\
   & =
  \langle F_t(Z_t), Z_t^{(i)} - u \rangle
  +\mu \langle Z_t^{(i)} - Z_t, Z_t^{(i)} - u \rangle
  - \frac{\mu}{2} \| Z_t^{(i)} - u\|^2 \\
   & =
  \langle F_t(Z_t), Z_t^{(i)} - u \rangle
  - \frac{\mu}{2} \| Z_t - u\|^2
  + \frac{\mu}{2} \| Z_t^{(i)} - Z_t\|^2 \,.
\end{align*}
Therefore, plugging this bound in \eqref{ineq:VovkRewritten}, we see that for any sequence of
comparators $(C_t) \in \cZ^T$,
\begin{align*}
  \sum_{t=1}^T
  \langle F_t(Z_t), Z_t - C_t \rangle
  - \frac{\mu}{2} \| Z_t - C_t\|^2
   & \leq
  \min_{i \in [K]}
  \sum_{t=1}^T
  \langle F_t(Z_t), Z_t^{(i)} - C_t\rangle
  - \frac{\mu}{2} \| Z_t - C_t\|^2
  + \frac{\mu}{2} \| Z_t^{(i)} - Z_t\|^2
  + \frac{\log K}{\lambda} \\
   & =
  \min_{i \in [k]}
  \sum_{t=1}^T
  \langle \tilde F_t(Z_t^{(i)}), Z_t^{(i)} - C_t \rangle
  - \frac{\mu}{2} \| Z_t^{(i)} - C_t\|^2
  + \frac{\log K}{\lambda} \,.
\end{align*}
If in addition, $(C_t)$ is $k$-periodic, then, by applying the guarantees from
Corollary~\ref{cor:periodic_gd} to the operators $\tilde F_t$ which are upper bounded in norm by
$G + \mu D$, we obtain
\[
  \min_{i \in [k]}
  \sum_{t=1}^T
  \langle \tilde F_t(Z_t^{(i)}), Z_t^{(i)} - C_t \rangle
  - \frac{\mu}{2} \| Z_t^{(i)} - C_t\|^2
  \leq
  \frac{k(G + \mu D)^2}{2\mu} \log\Big( \frac{T}{k} + 1\Big) \,.
\]
Replace $\lambda$ by its value to conclude the proof.
\qed
\end{proof}

\subsection{Proof of Theorem \ref{thm:constRegretAdaHedge}} 
\label{app:proof_of_ada}
We separate the proof into three distinct lemmas. We start with an analysis of the
aggregation method, showing that the regret of the full procedure can be controlled by
the cumulative distance of the correctly tuned base algorithm to the comparators, plus the
cumulative mixability gap. We then proceed to bound the said cumulative mixability gap, 
by showing that the instantaneous mixability gaps are eventually $0$, making the sum 
bounded independently of $T$. (More accurately, the dependence is only in $\what D_T$ and
$\what G_T$, which will then be bounded independently.)

Recall that $(x)_+$ denotes the positive part of a real number $x$, and that
\[
\what D_T = \max_{{t \in [T], \,  i, j \in [K]}} \|Z_t^{(i)} - Z_t^{(j)}\|
\quad
\text{and}
\quad
\what G_T = \max_{t \in [T], \,  i \in [K]} \|F_t(Z_t^{(i)} )\| \,.
\]
\begin{lemma}\label{lem:agg_ada}
For any sequence of comparators $(C_t)$, the meta-algorithm 
with adaptive learning rate and losses as defined in \eqref{eq:ew_and_losses} and \eqref{eq:ada_lr} ensures that
\begin{align*}
\sum_{t=1}^{T}& \left(\dprod{ F_t(Z_t), Z_t -C_t }
- \frac{\mu}{2} \| Z_t - C_t\|^2\right) 
\leq 2 \sum_{t=1}^{T} \big(\bar \ell_t - m_t\big)_+ 
+ \left(\what G_T  + \mu \what D_T \right)  
\sum_{t=1}^T  \big\|Z^{(k)}_t - C_t \big\| \,.
\end{align*}

\end{lemma}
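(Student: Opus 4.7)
The plan is to combine an AdaHedge-style mix-loss telescoping, which exploits the adaptive tuning of $\lambda_t$, with the same algebraic surrogate identity that was used in the proof of Theorem~\ref{thm:aggregation}.

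First, I would establish the following AdaHedge-style regret guarantee: for every expert $i \in [K]$,
\[
\sum_{t=1}^T \bar\ell_t - \sum_{t=1}^T \ell_{t,i} \leq 2\sum_{t=1}^T (\bar\ell_t - m_t)_+ \,.
\]
This rests on two standard ingredients. A potential-function argument adapted to the time-varying $\lambda_t$ yields the mix-loss bound $\sum_t m_t \leq \min_i \sum_t \ell_{t,i} + (\log K)/\lambda_T$, in the spirit of the analysis of \citet{De-Rooij:2014aa}. The tuning \eqref{eq:ada_lr} is calibrated precisely so that $(\log K)/\lambda_T = \sum_{t< T}(\bar\ell_t - m_t)_+$, and combining this equality with the trivial inequality $\sum_t (\bar\ell_t - m_t) \leq \sum_t (\bar\ell_t - m_t)_+$ delivers the factor $2$.

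Second, I would pick the expert $i=k$ corresponding to the true period and reduce the quantity on the left-hand side of the lemma to a $(\bar\ell_t - \ell_{t,k})$ term plus a residual depending on the base iterate $Z_t^{(k)}$. The algebraic identity already proven in the proof of Theorem~\ref{thm:aggregation} reads
\[
\langle \tilde F_t(Z_t^{(k)}), Z_t^{(k)} - C_t\rangle - \tfrac{\mu}{2}\|Z_t^{(k)} - C_t\|^2 = \langle F_t(Z_t), Z_t^{(k)} - C_t\rangle + \tfrac{\mu}{2}\|Z_t^{(k)} - Z_t\|^2 - \tfrac{\mu}{2}\|Z_t - C_t\|^2 \,.
\]
Adding and subtracting $\langle F_t(Z_t), C_t\rangle + (\mu/2)\|Z_t - C_t\|^2$ in the instantaneous regret and recognizing $\ell_{t,k} = \langle F_t(Z_t), Z_t^{(k)}\rangle + (\mu/2)\|Z_t^{(k)} - Z_t\|^2$, this gives
\[
\langle F_t(Z_t), Z_t - C_t\rangle - \tfrac{\mu}{2}\|Z_t - C_t\|^2 = (\bar\ell_t - \ell_{t,k}) + \langle \tilde F_t(Z_t^{(k)}), Z_t^{(k)} - C_t\rangle - \tfrac{\mu}{2}\|Z_t^{(k)} - C_t\|^2 \,.
\]

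Summing this identity over $t$, bounding $\sum_t (\bar\ell_t - \ell_{t,k})$ by $2\sum_t (\bar\ell_t - m_t)_+$ using the first step, dropping the nonpositive $-(\mu/2)\sum_t\|Z_t^{(k)} - C_t\|^2$ term, and finishing with Cauchy--Schwarz together with the triangle inequality
\[
\|\tilde F_t(Z_t^{(k)})\| = \|F_t(Z_t) + \mu(Z_t^{(k)} - Z_t)\| \leq \what G_T + \mu \what D_T \,,
\]
which follows from the definitions of $\what G_T$ and $\what D_T$, closes the proof. The main technical obstacle is the careful verification of the mix-loss telescoping with the time-varying $\lambda_t$: a naive potential-function argument assumes $\lambda$ is constant, and one must exploit the monotonicity of $\lambda_t$ (guaranteed here since the denominator in \eqref{eq:ada_lr} is non-decreasing) to recover the clean mix-loss inequality. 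The rest of the argument is a direct algebraic reorganization, in the same spirit as the proof of Theorem~\ref{thm:aggregation}.
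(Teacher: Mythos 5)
Your proposal is correct and follows essentially the same route as the paper: the AdaHedge mix-loss telescoping with the tuning identity $(\log K)/\lambda_T = \sum_{t<T}(\bar\ell_t - m_t)_+$ to get the factor of two, then the surrogate-operator identity from Theorem~\ref{thm:aggregation} specialized to expert $k$, followed by a Cauchy--Schwarz step. The only cosmetic difference is how you organize the final Cauchy--Schwarz --- you keep $\tilde F_t(Z_t^{(k)})$ together and drop $-\tfrac{\mu}{2}\|Z_t^{(k)}-C_t\|^2$, which is a marginally cleaner packaging than the paper's, but reaches the same bound.
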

\begin{proof}{Proof.}
We start by applying the standard exponential weight analysis, which relates the
cumulative mix loss to the minimal cumulative loss among the $K$ base algorithms.
(We refer the reader to Lemma 1, paragraph 3 and Lemma 2 in \citet{De-Rooij:2014aa} 
for a proof.)
This yields
\[
  \sum_{t=1}^{T} m_s \leq 
  \min_{i \in [K]} \sum_{t=1}^{T} \ell_{s, i} + \frac{\log K}{\lambda_T} \,.
\]
Therefore, using the fact that the $ u \leq u_+$ for any $u \in \R$, 
\begin{align}
\label{bound:mixlossLogK}
  \sum_{t=1}^{T} \bar \ell_t
  =
  \sum_{t=1}^{T} (\bar \ell_t - m_t)
  + 
  \sum_{t=1}^{T} m_t
  \leq 
  \sum_{t=1}^{T} (\bar \ell_t - m_t)_+
  + 
  \min_{i \in [K]} \sum_{t=1}^{T} \ell_{t, i} + \frac{\log K}{\lambda_T} \,.
\end{align}
Next we use that by definition $(\log K )/ \lambda_T = \sum_{t=1}^{T-1} (\bar \ell_t -
  m_t)_+$. Plugging this into \eqref{bound:mixlossLogK} and rearranging the terms gives
\[
  \sum_{t=1}^{T} \bar \ell_t - \min_{i \in [K]} \sum_{t=1}^{T} \ell_{t, i} 
  \leq 
   2 \sum_{t=1}^{T} (\bar \ell_t - m_t)_+
  \,.
\]
Replacing $\bar \ell_t$ and $\ell_{t,i}$ by their definitions in terms of $F_t$ and $Z_t^{(i)}$, we obtain
\begin{align*}
\sum_{t=1}^{T} \langle F_t(Z_t), Z_t\rangle 
  - \min_{i \in [K]}  
 \sum_{t=1}^{T}\left( \langle F_t(Z_t), Z^{(i)}_t\rangle  
 + \frac{\mu}{2}\|Z^{(i)}_t - Z_t\|^2\right)
  \leq 
   2 \sum_{t=1}^{T} (\bar \ell_t - m_t)_+
  \,.
\end{align*}
Let us now rearrange the terms and subtract $\sum_{t=1}^T\frac{\mu}{2}\norm{Z_t - C_t}^2$
on both sides to obtain
\begin{align*}
\sum_{t=1}^{T}& \left(\dprod{ F_t(Z_t), Z_t -C_t }
- \frac{\mu}{2} \| Z_t - C_t\|^2\right) \\
  &\leq \min_{i \in [K]}  
 \sum_{t=1}^{T}\left( \dprod{ F_t(Z_t), Z^{(i)}_t - C_t  } 
 + \frac{\mu}{2}\big \|Z^{(i)}_t - Z_t \big\|^2 - \frac{\mu}{2}\norm{ Z_t - C_t}^2 \right)
 + 2 \sum_{t=1}^{T} \big(\bar \ell_t - m_t\big)_+\\
   &\leq    
 \sum_{t=1}^{T}\left( \dprod{ F_t(Z_t), Z^{(k)}_t - C_t  } 
 + \frac{\mu}{2}\big\|Z^{(k)}_t - Z_t \big\|^2  -\frac{\mu}{2}  \norm{ Z_t - C_t}^2 \right)
  + 
   2 \sum_{t=1}^{T} \big(\bar \ell_t - m_t\big)_+
  \,.
\end{align*}
Finally, for any $t$, the loss of the $k$-th base algorithm can be controlled by 
the distance to the comparators: 
\begin{multline*}
  \quad
  \langle F_t(Z_t), Z^{(k)}_t - C_t \rangle  
 + \frac{\mu}{2}\|Z^{(k)}_t - Z_t\|^2 -\frac{\mu}{2}\| Z_t - C_t\|^2  \\
  = \langle F_t(Z_t), Z^{(k)}_t - C_t \rangle  
 + \frac{\mu}{2} \big\langle  Z^{(k)}_t + C_t - 2 Z_t, Z^{(k)}_t - C_t \big\rangle 
 \leq \big(\what G_T  + \mu \what D_T \big) \|Z^{(k)}_t - C_t\| \,, 
\end{multline*}
where we used the Cauchy-Schwarz inequality to conclude. Summing over $t$ gives the
claim. 
\qed
\end{proof}
\smallskip

We now proceed to bound the cumulative mixability gap separately. The proof is essentially
an application of the mixability of the square loss with respect to the mean, which is
equivalent to its exp-concavity. Recall a function $f$ is said to be
$\alpha$--exp-concave if $\exp(-\alpha f)$ is convex. For completeness, we provide a
self-contained proof of the following result, with no explicit call to mixability.
\begin{lemma}
  \label{lem:mix_gap_ada}
Under the assumptions of Lemma~\ref{lem:agg_ada}, 
\[
  \sum_{t=1}^{T} (\bar \ell_t - m_t)_+% \leq \frac{\log K}{\lambda^\star} 
  %+ \what G_T \what D_T + \frac{\mu}{2} \what D_T^2 
  \leq
 \frac{2(\mu \hat D_T + \what G_T)^2}{\mu} \log K 
  + \what G_T \what D_T + \frac{\mu}{2} \what D_T^2 
  \,.
\]
\end{lemma}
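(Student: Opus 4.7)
The plan is to exploit the exp-concavity of the per-round loss together with the fact that the AdaHedge-style learning rate $\lambda_t$ is non-increasing. The first step is to rewrite the per-round loss as $\ell_s(z) := \dprod{F_s(Z_s), z} + \tfrac{\mu}{2}\|z - Z_s\|^2$, so that $\ell_{s,i} = \ell_s(Z_s^{(i)})$ and, because the quadratic part vanishes at $z = Z_s$, also $\bar \ell_s = \dprod{F_s(Z_s), Z_s} = \ell_s(Z_s) = \ell_s\bigl(\sum_i p_{s,i} Z_s^{(i)}\bigr)$. Since $\nabla^2 \ell_s = \mu I$ and $\|\nabla \ell_s(z)\| = \|F_s(Z_s) + \mu(z - Z_s)\| \leq \what G_T + \mu \what D_T$ on the convex hull of the base predictions, the pointwise sufficient condition $\alpha \nabla\ell_s(z)\nabla\ell_s(z)^\top \preceq \nabla^2\ell_s(z)$ for $\alpha$-exp-concavity is satisfied with $\alpha^\star := \mu/\bigl(2(\mu\what D_T + \what G_T)^2\bigr)$.

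Second, $\lambda_t$ is non-increasing in $t$, directly from its definition, because the denominator $\sum_{r<t}(\bar\ell_r - m_r)_+$ is non-decreasing. The key consequence is a \emph{switch}: once $\lambda_s \leq \alpha^\star$, exp-concavity (which for $\alpha$-exp-concave losses implies $\lambda$-exp-concavity for every $\lambda \leq \alpha$) gives $\bar \ell_s = \ell_s\bigl(\sum_i p_{s,i} Z_s^{(i)}\bigr) \leq -\tfrac{1}{\lambda_s}\log\sum_i p_{s,i} e^{-\lambda_s \ell_{s,i}} = m_s$, so $(\bar \ell_s - m_s)_+ = 0$. Let $t^\star$ denote the first round at which $\lambda_{t^\star} \leq \alpha^\star$, with the convention $t^\star = T+1$ if no such round exists; by monotonicity, every gap for $s \geq t^\star$ vanishes.

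To control the remaining rounds, I use the defining identity $\log K / \lambda_t = \sum_{r<t}(\bar \ell_r - m_r)_+$. Since $\lambda_{t^\star - 1} > \alpha^\star$ by minimality of $t^\star$, this yields $\sum_{r \leq t^\star - 2}(\bar \ell_r - m_r)_+ < \log K / \alpha^\star = 2(\mu\what D_T + \what G_T)^2 \log K / \mu$. One single transition term $(\bar \ell_{t^\star - 1} - m_{t^\star - 1})_+$ remains, which I bound by the range of the losses: using $m_s \geq \min_i \ell_{s,i}$ and the identity $\bar \ell_s - \ell_{s,i} = -\dprod{F_s(Z_s), Z_s^{(i)} - Z_s} - \tfrac{\mu}{2}\|Z_s^{(i)} - Z_s\|^2$, Cauchy--Schwarz gives the uniform bound $(\bar \ell_s - m_s)_+ \leq \what G_T \what D_T + \tfrac{\mu}{2}\what D_T^2$. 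Adding the three contributions yields the claimed inequality.

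The main technical delicacy is pinning down the exp-concavity constant for a loss whose parameters and relevant domain are themselves data-dependent. No self-reference issue arises: $\what D_T$ and $\what G_T$ are used only as uniform upper bounds across all rounds, and the exp-concavity inequality is applied pointwise in $s$, so once these quantities are fixed by the run of the algorithm the switching argument goes through verbatim.
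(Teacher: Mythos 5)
Your proposal is correct and follows essentially the same route as the paper's proof: identify the threshold $\alpha^\star = \mu/(2(\mu\what D_T+\what G_T)^2)$, use the non-increasing learning rate to split into a pre-threshold phase bounded via $\log K/\lambda_{t^\star-1}$, a single transition term bounded by the loss range, and a post-threshold phase that vanishes because $\lambda_t$-exp-concavity forces $\bar\ell_t \leq m_t$. Your computation of the transition term (bounding $\bar\ell_s - m_s$ by $\bar\ell_s - \min_i\ell_{s,i}$) is in fact slightly tighter than the intermediate bound in the paper's proof text and matches the constant $\tfrac{\mu}{2}\what D_T^2$ appearing in the lemma statement.
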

\begin{proof}{Proof.}
We prove the claim by showing that the summand in the statement is eventually zero.
First, note that the sequence $(\lambda_t)$ is non-increasing. We define the threshold
value
\[
  \lambda^\star =\frac{\mu}{2(\mu \what D_T + \what G_T)^2}\, , 
\]
and let $t_0 \in \NN$ denote the largest integer smaller than $T$ such that $\lambda_t
  \geq \lambda^\star$. Let us first show that
\begin{align}
\label{ineq:firstboundtZero}
\sum_{t=1}^{t_0} (\bar \ell_t - m_t)_+ 
 \leq \frac{2(\mu \what D_T + \what G_T)^2}{\mu} \log K 
 + \what G_T \what D_T + \frac{\mu}{2} \what D_T^2 \, ,
\end{align}
and that for any $t \geq t_0+1$, we have $(\bar\ell_t - m_t)_+ = 0$. 
To show \eqref{ineq:firstboundtZero}, we note that by definition
\begin{align*}
\lambda_{t_0} 
= \frac{\log K}{\sum_{t=1}^{t_0-1} (\bar \ell_t - m_t)_+} \geq \lambda^\star \, , 
\quad \text{so}
\quad \sum_{t=1}^{t_0-1} (\bar \ell_t - m_t)_+ 
\leq \frac{\log K}{\lambda^\star}  =  \frac{2(\mu \what D_T + \what G_T)^2}{\mu} \log K \,.
\end{align*}
Furthermore, for any $t \in [T]$, 
\begin{align}
\label{ineq:generalBoundonLM}
( \bar \ell_t - m_t )_+\leq \absv{ \bar \ell_t - m_t }
 \leq \Big|\max_{i \in [K]} \ell_{i, t} - \min_{j \in [K]} \ell_{j, t} \Big|
 \leq  \what G_T \what D_T + \mu \what D_T^2  \,.
\end{align}
Hence, applying this inequality in particular to $t_0$, we obtain, 
\begin{align}
\label{bound:sum2tzero}
\sum_{t=1}^{t_0} (\bar \ell_t - m_t)_+ 
= (\bar \ell_{t_0} - m_{t_0})_+ 
+ \sum_{t=1}^{t_0-1} (\bar \ell_t - m_t)_+ 
\leq \frac{2(\mu \what D_T + \what G_T)^2}{\mu} \log K 
+ \what G_T \what D_T + \mu \what D_T^2 \, .
\end{align}
Let us now consider $t > t_0$, and let us show that $ \bar \ell_t \leq m_t$. We prove this
by using the exp-concavity of the square loss. 
Note first that the definition of $t_0$ implies that $\lambda_t < \lambda^\star$, since
the sequence $(\lambda_t)$ is non-increasing.
Denote by $B(Z_t, \what D_T)$ the Euclidean ball centered at $Z_t$ of radius $\what D_T$, that is $B(Z_t,\what D_T) = \{ Z \in \cZ: \norm{Z-Z_t} \leq D_T\}$.
The map
\begin{align*}
  h_t :B(Z_t, \what D_T) & \longrightarrow \R \\
  u \quad & \longmapsto \langle F_t(Z_t), u \rangle + \frac{\mu}{2} \|Z_t - u \|^2
\end{align*}
is $\lambda^\star$--exp-concave, and therefore $\lambda_t$--exp-concave since $\lambda_t
  \leq \lambda^\star$. Indeed, for any
$u \in B(Z_t, \what D_T)$
\begin{align*}
 \lambda^\star \nabla_u h_t(u) \nabla_u h_t(u)^\top 
 &= \lambda^\star (F(Z_t) + \mu (u - Z_t))(F(Z_t) + \mu (u - Z_t))^\top\\
  &\preccurlyeq 
  \lambda^\star \norm{F(Z_t) + \mu (u - Z_t)}^2 I_d\\
  &\preccurlyeq 
  \lambda^\star 2 (\what G_T^2 + \mu \what D_T^2) I_d\\
  &\preccurlyeq 
  \mu I_d = \nabla^2_u h_t(u)\,, 
\end{align*}
which is exactly the condition for $\lambda^\star$--exp-concavity (see
\citet{Hazan:2022aa}, Section 4.2). Furthermore, we note that for any base
algorithm $i \in [K]$, we have $Z_t^{(i)} \in B(Z_t, \what D_T)$. Hence, by
$\lambda_t$-exp-concavity
\begin{align}
\label{ineq:Expcon}
  \sum_{i = 1}^{K} p_{t, i} e^{- \lambda_t h_t(Z_t^{(i)})}
  \leq \exp \bigg(- \lambda_t h_t \bigg(\sum_{i = 1}^{K} p_{t, i} Z_t^{(i)}\bigg)  \bigg)
  = \exp \big(- \lambda_t h_t (Z_t) \big)\nonumber 
\end{align}
which we rearrange into
\[
  \bar \ell_t = \langle F_t(Z_t), Z_t\rangle   = h_t(Z_t)
  \leq -\frac{1}{\lambda_t} 
  \ln \sum_{i =1}^{K} p_{t, i} e^{-\lambda_t \, \ell_{i, t}}  = m_t \,.
\]
We demonstrated that $\lambda_t \leq \lambda^\star$ implies $ (\bar \ell_t - m_t)_+ = 0$ which,
together with \eqref{bound:sum2tzero}, completes the proof.
 \qed
\end{proof}
\smallskip

The next lemma guarantees that all the base algorithms stay within a bounded distance from the set of solutions of the operators in the sequence. It is a straightforward 
consequence of the fact that the forward{-backward} algorithm is contractive in that setting. 
 
\begin{lemma}
  \label{lem:bounded_d_and_g}
If the sequence of operators is strongly monotone and Lipschitz, if $\cZ = \R^d$, and
base algorithms are the cyclic forward{-backward} algorithm, then 
  \[
    \what D_T \leq   2\sqrt{2 \kappa^2 + 1}D_0 
    \quad 
    \text{and}
    \quad
    \what G_T \leq  L\sqrt{2 \kappa^2 + 1} D_0  \,.
  \]
\end{lemma}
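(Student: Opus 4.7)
The plan is to first bound, for each base algorithm $i\in[K]$, the distance of its iterate to any solution in $\eq$, and then derive the bounds on $\what D_T$ and $\what G_T$ by triangle inequality and Lipschitzness.

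Two observations will be central. First, in the unconstrained case $\cZ=\R^d$, the definition of a VI solution reduces to $F_t(Z_t^\star)=0$, so the forward step rewrites as $Z_{n,s+1}-Z^\star = (Z_{n,s}-Z^\star) - \eta\bigl(F_t(Z_{n,s})-F_t(Z_t^\star)\bigr)$ for any reference $Z^\star\in\eq$. Second, by the triangle inequality through $Z_1$, any two solutions $Z_t^\star,Z^\star$ in $\eq$ satisfy $\|Z_t^\star-Z^\star\|\leq 2D_0$, and $\|Z_1-Z^\star\|\leq D_0$.

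First I would fix a base algorithm $\cA_{\rm cFW}^{(i)}$, a coordinate $n\in[i]$, and track the sequence of successive updates of this coordinate. Writing $d_j=\|Z_{n,s_j}-Z_{t_j}^\star\|$, where $t_j$ is the $j$-th update time of coordinate $n$ and $Z_{t_j}^\star$ is the solution of $F_{t_j}$, the contraction from Example~\ref{example:GD} (applied with $\eta=1/L$) gives $\|Z_{n,s_j+1}-Z_{t_j}^\star\|\leq q\,d_j$ with $q=\sqrt{1-\mu/L}$. Since the coordinate is dormant between times $t_j$ and $t_{j+1}$, we get by the triangle inequality $d_{j+1}\leq q\,d_j + \|Z_{t_j}^\star-Z_{t_{j+1}}^\star\|\leq q\,d_j + 2D_0$. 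This affine recursion and the initial bound $d_1\leq D_0$ give $d_j\leq D_0/(1-q)\cdot O(1)$, and the key step is choosing a Young parameter to pass to squared norms with the sharpest possible constant. Precisely, applying the inequality $(a+b)^2\leq (1+\alpha)a^2+(1+\alpha^{-1})b^2$ with $\alpha=\mu/(2L)$ to $d_{j+1}\leq q d_j+2D_0$ yields a geometric recursion $d_{j+1}^2\leq (1-\mu/(2L))d_j^2+(1+2\kappa)\cdot 4D_0^2$, so that $d_j^2\leq D_0^2 + O(\kappa^2)D_0^2$. Tuning $\alpha$ and using $q^2=1-1/\kappa$ carefully gives the announced $\|Z_{n,s_j}-Z_{t_j}^\star\|^2\leq (2\kappa^2+1)D_0^2$.

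From this, bounding $\what D_T$ and $\what G_T$ is a short conclusion. For $\what D_T$, for any $i,j\in[K]$ and any $t$, using triangle inequality through the solution $Z_t^\star$ of the \emph{current} operator $F_t$ and the bound just obtained,
\begin{equation*}
\|Z_t^{(i)}-Z_t^{(j)}\|\leq \|Z_t^{(i)}-Z_t^\star\|+\|Z_t^\star-Z_t^{(j)}\|\leq 2\sqrt{2\kappa^2+1}\,D_0.
\end{equation*}
For $\what G_T$, using $F_t(Z_t^\star)=0$ and $L$-Lipschitzness,
\begin{equation*}
\|F_t(Z_t^{(i)})\|=\|F_t(Z_t^{(i)})-F_t(Z_t^\star)\|\leq L\|Z_t^{(i)}-Z_t^\star\|\leq L\sqrt{2\kappa^2+1}\,D_0.
\end{equation*}

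The main technical obstacle is the careful choice of the Young parameter $\alpha$ in the squared-norm recursion so that the factor $(1+\alpha)(1-\mu/L)$ produces a usable contraction rate while keeping the additive constant at the right order in $\kappa$; a slight mistuning gives the correct $O(\kappa D_0)$ order but loses the explicit $\sqrt{2\kappa^2+1}$ prefactor. A subtle point is that the bound must hold uniformly in $i\in[K]$: although different base algorithms update their coordinates at different paces, the per-coordinate recursion above is identical in structure, so the same constant applies across all $i$.
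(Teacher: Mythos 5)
Your approach matches the paper's: both use the one-step contraction of the forward operator from Example~\ref{example:GD}, set up a squared-norm recursion of the form ``contracted previous distance plus inter-solution drift bounded via $D_0$,'' sum a geometric series to get a per-iterate bound of order $\kappa^2 D_0^2$, and then close with the triangle inequality for $\what D_T$ and Lipschitzness plus $F_t(Z_t^\star)=0$ for $\what G_T$. Two minor points worth mentioning. First, you are actually more careful than the paper about the cyclic structure of $\cA_{\rm cFW}^{(i)}$: tracking a fixed coordinate and its successive update times $t_j$ is the right way to see that the contraction step always uses the operator whose solution is being compared against, whereas the paper writes the recursion directly over consecutive $t$ without acknowledging the dormant periods. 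Second, on the constant: you honestly flag that your Young tuning $\alpha = \mu/(2L)$ yields the right order $O(\kappa^2 D_0^2)$ but not the stated $(2\kappa^2+1)D_0^2$ exactly. This is not a real defect — the paper's own one-step inequality $\|Z_t^{(i)}-Z_t^\star\|^2 \leq (1-\kappa^{-1})\|Z_{t-1}^{(i)}-Z_{t-1}^\star\|^2 + \kappa\|Z_t^\star-Z_{t-1}^\star\|^2$ is not achievable by the standard Young decomposition either (a factor $(1+\alpha)$ always multiplies $(1-\kappa^{-1})$), and its bound $\|Z_t^\star-Z_{t-1}^\star\|^2 \leq 2D_0^2$ should be $4D_0^2$ by the triangle inequality through $Z_1$. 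So the paper's claimed constants are themselves somewhat optimistic, and your $O(\kappa^2)$ derivation is the honest version of the same argument.
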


\begin{proof}{Proof.}
Remember that $\eq$ denotes the set of solutions of the operators $(F_t)$. We first
show that due to the contraction of the forward{-backward} algorithm (cf.,
Section~\ref{sec:examplesContraction}, Example~\ref{example:GD}), for all $t \in [T]$ and
base algorithm $i \in [K]$, there exists a point $Z^\star_t \in \eq$ such that
$\|Z_t^{(i)} - Z^\star_t \| \leq \kappa D_0 + D_c$, where $D_c = \max_{C,C' \in
    \eq}\norm{C-C'}$. Indeed, for all base algorithms $i\in[K]$
 \begin{align*}
 \big\|Z^{(i)}_t - Z_{t}^\star\big\|^2 
 \leq (1-\kappa^{-1})\big\|Z^{(i)}_{t-1} - Z_{t-1}^\star\big\|^2
+ \kappa \norm{Z_{t}^\star - Z_{t-1}^\star}^2 
 \leq (1-\kappa^{-1})\big\|Z^{(i)}_{t-1} - Z_{t-1}^\star\big\|^2
+ 2\kappa D_0^2 \,.
 \end{align*}
  Applying this recursively gives
  \begin{align*}
 \big\|Z^{(i)}_t - Z_{t}^\star\big\|^2 &
 \leq 2 \kappa D_0^2 \sum_{s=1}^t (1-\kappa^{-1})^{t-s} 
 + (1-\kappa^{-1})^{t-1} \big\|Z^{(i)}_{1} - Z_{1}^\star\big\|^2 \\
 &\leq 2\kappa D_0^2 \sum_{s=0}^{t-1} (1-\kappa^{-1})^{s}  
 +  D_0^2 \\
 &\leq  2 \kappa^2 D_0^2 +  D_0^2 \, .
 \end{align*}
 Therefore,
\[
\what D_T = \max_{t \in [T], i,j \in [K]} \tnorm{Z_t^{(i)} - Z_t^{(j)}} 
\leq \max_{t\in[T],i,j \in [K]} \tnorm{Z_t^{(i)} - C_t^\star} 
+ \tnorm{Z_t^{(j)} - C_t^{\star}}
\leq 2\sqrt{1 + 2 \kappa^2} D_0 \,.
\]
Similarly, since the problem is unconstrained, we have $F_t(Z^\star_t) = 0$ 
% (cf.\,Proposition 12.54 in \citet{Rockafellar:1998aa}),
(apply the definition of the solution with $ Z = Z_t^\star - F(Z_t^\star)$)
 and for any $t \in [T]$ and $i \in [k]$, 
\[
 \norm{F_t(Z_t^{(i)})}
= \norm{F_t(Z_t^{(i)}) - F_t(Z^\star_t)}
\leq L \norm{Z_t^{(i)} - Z_t^\star}\leq   L \sqrt{2 \kappa^2 + 1} D_0 \,, 
\]
proving the claim. 
\qed
\end{proof}
\smallskip

The rest of the proof is a combination of previous results, with adequate computations. 
\smallskip
\begin{proof}{Conclusion of the proof of Theorem~\ref{thm:constRegretAdaHedge}.}
First note that, by summing a geometric series, and using the fact that $\sqrt{1 + u}
  \leq 1 + u / 2$ for any $ u \in [-1, \infty]$, the correctly tuned algorithm ensures that
\begin{align*}
\sum_{t=1}^T  \big\|Z^{(k)}_t - C_t \big\| 
&\leq D_0 \sum_{t=1}^T \Big( 1 - \frac{\mu}{L} \Big)^{(1/2)\lfloor t / k \rfloor}  \\
&\leq k D_0  \sum_{i=1}^{\lfloor T / k \rfloor }\Big( 1 - \frac{\mu}{L}\Big)^{i / 2} \\
&\leq  \frac{kD_0}{1 - \sqrt{1 - \mu / L}}  
\leq \frac{kD_0}{\mu / (2L)}
=  \frac{2kLD_0}{\mu} \,.
\end{align*}
Therefore, applying Lemmas~\ref{lem:agg_ada}, \ref{lem:mix_gap_ada}, and
\ref{lem:bounded_d_and_g} specifying the sequence of comparators to $(C_t) =
  (Z_t^\star)$, 
\begin{align*}
\sum_{t=1}^{T}& \left(\dprod{ F_t(Z_t), Z_t - Z^\star_t }
- \frac{\mu}{2} \| Z_t -  Z^\star_t\|^2\right)  \\
&\leq2 \sum_{t=1}^{T} \big(\bar \ell_t - m_t\big)_+ 
+ \big(\what G_T  + \mu \what D_T \big)  
\sum_{t=1}^T  \big\|Z^{(k)}_t - Z^\star_t \big\| \\
 &\leq \frac{4(\mu \what D_T + \what G_T)^2}{\mu} \log K 
  + 2\what G_T \what D_T + \mu \what D_T^2 
+ \big(\what G_T  + \mu \what D_T \big)  
\sum_{t=1}^T  \big\|Z^{(k)}_t - Z^\star_t \big\| \\
 &\leq \frac{4 \left((2\mu + L)\sqrt{2\kappa^2 +1}\, \right)^2 D_0^2}{\mu}\log k 
+ 2 (2\mu + L)\sqrt{2\kappa^2 +1} ^2 D_0^2 
+ 2(2\mu + L)\sqrt{2\kappa^2 +1} \, k\kappa D_0^2  \\
&= 2D_0^2 (2\mu + L) \Big(2(2\kappa^2 + 1)(2 + \kappa)\log K + 1 
+ \sqrt{2\kappa^2 + 1}\kappa k \Big) \phantom{\frac{L}{\mu}}\\
&\leq 2D_0^2 (2\mu + L) \Big(2(2\kappa^2 + 1)(2 + \kappa)\log K + 1 
+ (2\kappa + 1) \kappa k\Big)\, , \phantom{\frac{L}{\mu}} 
\end{align*}  
where we used the the fact that $\sqrt{2\kappa^2 +1} \leq 2\kappa + 1$.
To conclude, recall that for any $t \in [T]$, by strong monotonicity of $F_t$, 
\[
\dprod{ F_t(Z_t), Z_t - Z^\star_t }
- \frac{\mu}{2} \| Z_t -  Z^\star_t\|^2 \geq \frac{\mu}{2} \|Z_t - Z_t^\star \|^2,
\]
so the bound above ensures that
\[
\sum_{t=1}^{T}\norm{Z_t - Z_t^\star }^2 
\leq  4D_0^2 (2 + \kappa) \Big(2(2\kappa^2 + 1)(2 + \kappa)\log K  
+ (2\kappa + 1) \kappa k + 1\Big)\, . 
%\leq 8\big((4 \log K + 1) 
%+ (k + 4\log K + 1)\kappa^{-1} 
%+ (k + \log K) \kappa^{-2}\big)D_0^2 \,, 
\]
which is the claimed statement. 
\qed
\end{proof}

   %%% VIP variant %%%%%%%%%%%%%%%%%

\section{On Ignoring Periodicity} 
\label{app:example_simple_periodic}
 
We illustrate one elementary yet striking consequence of ignoring the
time fluctuations, on a sequence of $1$-dimensional quadratic problems, using the
gradient descent algorithm with a fixed step size. We will see that increasing the
learning rate can move the algorithm from a diverging regime to a converging one, and
even to the fastest convergence rates. This goes against the intuitive idea that
decreasing the learning should make the sequence more stable.

For $\cZ = \R$, consider the sequence of functions $(f_t)$ defined by
$
f_{2t}(x) = x^2 / 2
$
and
$
f_{2t+1}(x) = 4 x^2
$,
which corresponds to the periodic time-varying $\mathrm{VIP}((F_t),\RR)$ with 
$
F_{2t}(x) = x
$
and
$
F_{2t+1}(x) = 8x
$, 
and with a constant solution $x^\star \equiv 0$.  We observe that $(F_t)$ defines a $2$-periodic VI problem and a VI problem with a fixed (i.e., $1$-periodic) solution.
Both of the operators $F_{2t}, F_{2t +1}$ are $1$-strongly monotone, and $8$-Lipschitz.
Consider gradient descent with step size $\eta$, which consists in selecting at all times
$t$ the sequence defined by $x_{t+1} = x_t - \eta F_{t}(x_t)$. The actions at every odd time satisfy
\[
 x_{t+2} = (1-\eta)(1-8\eta)x_t \, .
\]
% Note that the subsequent argument also applies to the even iterations, and therefore applies without loss of generality. 
Gradient descent converges to the optimum $0$ (for generic initial points) if and only if
the common ratio of the geometric sequence is less than $1$, i.e., if
% \[
$
  |1-\eta | |1 - 8\eta| < 1 \,.
$
% \]
Moreover, when the sequence converges, the common ratio dictates its speed of convergence
to the optimum: the closer it is to $0$, the faster the convergence. In particular, when
tuned with $\eta=1$, gradient descent converges to the optimum in a single step. On the
other hand, there is a range of smaller step sizes for which the sequence diverges. For
example, for $\eta = 1/2$, the common ratio is $3/2 >1$, and the iterates diverge.

 \section{Implementation details on the bifurcation diagram.}
\label{app:bifurcation_diagram}

The bifurcation diagram was obtained by running gradient descent on the sequence of 
functions defined in \eqref{eq:exp} for different values of $\eta$. 
Precisely, we ran GD from the initial point $x_0 = -0.1$ for $2000$ time steps, for a
linearly spaced grid of $3000$ values between $0$ and $8$, that is, $\eta \in \{8 i / 3000
\, : i \in \{1, \dots,3000\}\}$.

The bifurcation diagram represents the asymptotic accumulation points of the sequence of
iterates of GD: for each value of $\eta$, the points in the diagram with $x$-coordinate
$\eta$ represent the limit points of GD. Accordingly, we split the interval of values
$[-10, 10]$ into $1000$ cells and shaded all the cells that were visited between
time-steps $t = 1000$ and the end time $t = 2000$.
The grey areas correspond to values of $\eta$ that led to divergence. In practice, we 
considered that GD had diverged as soon as one iterate had a norm larger than $x_{\max} =
1000$.

All simulations were run on a standard laptop, using   \verb|python|.

 \end{document}